\titleformat*{\section}{\large\bfseries\sffamily}
\titleformat*{\subsection}{\bfseries\sffamily}
\titleformat*{\subsubsection}{\bfseries\sffamily}
\titleformat*{\paragraph}{\bfseries\sffamily}
\date{}
\renewenvironment{abstract}
{\begin{quote}
\vspace{-6ex}
\noindent \par{\bfseries\sffamily \abstractname\hspace{.5em}}}
{\medskip\noindent 
\end{quote}
}
\definecolor{cherry1}{rgb}{0.215686, 0.215686, 0.215686}
\definecolor{cherry2}{rgb}{0.563899, 0.155919, 0.156577}
\definecolor{cherry3}{rgb}{0.747389, 0.178584, 0.180272}
\definecolor{cherry4}{rgb}{0.836168, 0.264453, 0.26819}
\definecolor{cherry5}{rgb}{0.880144, 0.397868, 0.404399}
\definecolor{cherry6}{rgb}{0.911942, 0.567676, 0.576412}
\definecolor{linkcolor}{HTML}{6929C4}
\definecolor{citecolor}{HTML}{0043CE}
\crefname{assumption}{Assumption}{Assumptions}
\crefname{condition}{Condition}{Conditions}
\crefname{framedtheorem}{Theorem}{Theorems}
\crefname{framedproposition}{Proposition}{Propositions}
\crefname{framedlemma}{Lemma}{Lemmas}
\theoremstyle{plain}
\newtheorem{theorem}{Theorem}[section]
\newtheorem{corollary}{Corollary}[section]
\newtheorem{lemma}[theorem]{Lemma}
\theoremstyle{definition}
\newtheorem{assumption}[theorem]{Assumption}
\theoremstyle{remark}
\newcommand{\norm}[1]{{\left\lVert #1 \right\rVert}}
\newcommand{\inner}[2]{\left\langle #1, #2 \right\rangle}
\pgfplotsset{compat=1.17}
\title{Royal Statistics Society Template}
\begin{document}

\title{\vspace{-2cm}Analysis of Kinetic Langevin Monte Carlo Under \\
The Stochastic Exponential Euler Discretization\\
From Underdamped All The Way to Overdamped}
\author{
  {\sf{}Kyurae Kim} \\
  \textsf{E-mail}: \href{mailto:kyrkim@engineering.upenn.edu}{kyrkim@engineering.upenn.edu} \\
  {\itshape%
    University of Pennsylvania%
  } \\ \vspace{1.ex}
  {\sf{}Samuel Gruffaz} \\
  \textsf{E-mail}: \href{mailto:samuel.gruffaz@ens-paris-saclay.fr}{samuel.gruffaz@ens-paris-saclay.fr} \\
   {\itshape%
     Centre Borelli, ENS Paris-Saclay
  , France} \\ \vspace{1.ex} 
  {\sf{}Ji Won Park} \\
  \textsf{E-mail}: \href{mailto:park.ji_won@gene.com}{park.ji\_won@gene.com} \\
  {\itshape%
    Prescient Design, Genentech\\
  } \vspace{1.ex} 
  %
  {\sf{}Alain Oliviero Durmus} \\
  \textsf{E-mail}: \href{mailto:alain.durmus@polytechnique.edu}{alain.durmus@polytechnique.edu} \\
  {\itshape%
    CMAP, CNRS, \'Ecole Polytechnique %
  } 
}

\maketitle

\begin{abstract}
Simulating the kinetic Langevin dynamics is a popular approach for sampling from distributions, where only their unnormalized densities are available.
Various discretizations of the kinetic Langevin dynamics have been considered, where the resulting algorithm is collectively referred to as the kinetic Langevin Monte Carlo (KLMC) or underdamped Langevin Monte Carlo.
Specifically, the stochastic exponential Euler discretization, or exponential integrator for short, has previously been studied under strongly log-concave and log-Lipschitz smooth potentials via the synchronous Wasserstein coupling strategy.
Existing analyses, however, impose restrictions on the parameters that do not explain the behavior of KLMC under various choices of parameters.
In particular, all known results fail to hold in the overdamped regime, suggesting that the exponential integrator degenerates in the overdamped limit.
In this work, we revisit the synchronous Wasserstein coupling analysis of KLMC with the exponential integrator.
Our refined analysis results in Wasserstein contractions and bounds on the asymptotic bias that hold under weaker restrictions on the parameters, which assert that the exponential integrator is capable of stably simulating the kinetic Langevin dynamics in the overdamped regime, as long as proper time acceleration is applied.
\end{abstract}

{\hypersetup{linkbordercolor=black,linkcolor=black}
\tableofcontents
}

\section{Introduction}
Consider a differentiable potential function $U : \mathbb{R}^d \mapsto \mathbb{R}$.
This work focuses on the kinetic Langevin dynamics ${(Z_t = (X_t, V_t))}_{t \geq 0}$ described by the system of equations, for each $t \geq 0$,
\begin{equation}
\begin{split}
    \mathrm{d}X_{t} &= V_t \mathrm{d}t \\
    \mathrm{d}V_{t} &= -\eta \nabla U\left(X_t\right) \mathrm{d}t - \gamma V_t \mathrm{d}t  + \sqrt{2 \gamma \eta } \, \mathrm{d}B_t,
    \label{eq:kinetic_langevin_dynamics}
\end{split}
\end{equation}
where $(B_t)_{t\geq 0}$ is a standard $d$-dimensional brownian motion on the filtered probability space $(\Omega,\mathcal{F},\mathbb{P},(\mathcal{F}_t)_{t\geq 0})$ satisfying typical conditions, $\eta > 0$ is the inverse mass, $\gamma > 0$ is the friction coefficient, and $X_t$ and $V_t$ are respectively referred to as the position and the momentum.
Numerically simulating \cref{eq:kinetic_langevin_dynamics} has been an important application in molecular dynamics for modeling interacting particles~\citep[Eqs. (6.30) (6.31)]{leimkuhler_molecular_2015}.

In recent years, \cref{eq:kinetic_langevin_dynamics} has received massive interest from computational statistics and machine learning for the purpose of drawing samples from distributions with only unnormalized densities.
Specifically, under mild assumptions, the process ${(Z_t)}_{t \geq 0}$ converges to its unique stationary measure~\citep[Prop. 6.1]{pavliotis_stochastic_2014} on $\mathbb{R}^{2 d}$,
\begin{align}
    \pi\left(\mathrm{d}x, \mathrm{d}v\right) \triangleq \frac{1}{Z} \exp\left( - U\left(x\right) \right) \mathrm{d}x \; \mathrm{N}\left(\mathrm{d}v; 0_d, \eta\mathrm{I}_d\right), \; 
    \;\;\text{where}\;\;
    Z \triangleq {\textstyle\int}_{\mathbb{R}^d} \exp\left(-U\left(x\right)\right) \mathrm{d}x \; ,
    \label{eq:stationary_measure}
\end{align}
and $\mathrm{N}(\cdot; 0_d, \eta\mathrm{I}_d)$ is a $d$-dimensional multivariate Gaussian distribution with mean $0_d$ and covariance $\eta \mathrm{I}_d$.
Therefore, even if we have only access to $\nabla U$, \cref{eq:kinetic_langevin_dynamics} can be used to produce samples from $\pi$ in \cref{eq:stationary_measure} as long as \cref{eq:kinetic_langevin_dynamics} is accurately simulated.
This scheme, known as kinetic Langevin Monte Carlo (KLMC), or underdamped Langevin Monte Carlo, has been used in various contexts from generative modeling~\citep{saremi_multimeasurement_2021,dockhorn_scorebased_2021}, producing unbiased estimators of expectations~\citep{chada_unbiased_2024}, multi-armed bandits~\citep{zheng_accelerating_2024}, simulating path measures~\citep{kim_tuning_2025,doucet_scorebased_2022,geffner_langevin_2023,blessing_underdamped_2025} in sequential Monte Carlo~\citep{delmoral_sequential_2006} and annealed importance sampling~\citep{neal_annealed_2001}, marginal likelihood maximization~\citep{oliva_kinetic_2024}, and many more.

For numerical simulation, however, a discretization scheme has to be involved.
The Markov chain resulting from discretization ${(Z_k = (X_k, V_k))}_{k \geq 0}$ is often ``asymptotically'' biased in the sense that, for the discretization step size $h \geq 0$, the stationary distribution of ${(Z_k)}_{k \geq 0}$, $\pi_h$, will be different from $\pi$.
Furthermore, different discretization schemes can behave differently in terms of speed of convergence to stationarity, computational efficiency, dependence on the properties of $U$, and the amount of asymptotic bias.
As such, for the purpose of sampling from \cref{eq:stationary_measure}, various discretizations of \cref{eq:kinetic_langevin_dynamics} have been proposed~\citep{cheng_underdamped_2018,leimkuhler_rational_2013,sanz-serna_wasserstein_2021,foster_shifted_2021,shen_randomized_2019,liu_double_2023} and analyzed~\citep{camrud_second_2024,monmarche_nonasymptotic_2024,ma_there_2021,altschuler_shifted_2025,dalalyan_sampling_2020,durmus_uniform_2025,schuh_convergence_2024,zhang_improved_2023,johnston_kinetic_2024,fu_meanfield_2024,leimkuhler_contraction_2024,monmarche_highdimensional_2021,gouraud_hmc_2025,dalalyan_bounding_2022}.

Among various discretization schemes, we focus on the stochastic exponential Euler scheme, which we will hereafter refer to as the exponential integrator.
It was one of the first discretization schemes to yield a non-asymptotic mixing time guarantee for the kinetic Langevin dynamics~\citep{cheng_underdamped_2018} under the assumptions that $U$ is $\alpha$-strongly convex and $\beta$-Lipschitz smooth through a synchronous Wasserstein coupling analysis~\citep[\S 4.1]{chewi_logconcave_2026}.
The analysis by~\citeauthor{cheng_underdamped_2018} has since been refined \citep{dalalyan_sampling_2020,sanz-serna_wasserstein_2021,leimkuhler_contraction_2024}, which we will formally introduce in the following paragraph.
The exponential integrator has also been studied under the assumption that $U$ is non-strongly convex~\citep{dalalyan_bounding_2022} and the framework of log-Sobolev and Poincar\'e functional inequalities~\citep{ma_there_2021,zhang_improved_2023}.
This work, however, focuses on the strongly convex setting.

To state formal results, consider the KLMC algorithm obtained via the exponential integrator with the initial state $(X_0, V_0) \sim \mu_0$ from some $\mu_0$.
Let us denote the Wasserstein-2 distance as $\mathrm{W}_2(\cdot,\cdot)$.
Also, denote the marginal distribution of KLMC at step  $k \geq 0$ as $\mu_k$ such that $(X_k , V_k) \sim \mu_k$.
Our goal is to guarantee that $\mathrm{W}_2 (\mu_k , \pi) \leq \alpha^{-1/2} \epsilon $ for $\epsilon > 0$.
For this, it has been shown~\citep{sanz-serna_wasserstein_2021,dalalyan_sampling_2020} that at most $\mathrm{O}(\kappa^{3/2} d^{1/2} \epsilon^{-1} \log \epsilon^{-1} )$ steps are sufficient, where $\kappa = \beta/\alpha$ is the condition number of $U$.
This number of steps has also been recently shown to be sufficient for bounding the Kullback-Leibler divergence~\citep{kullback_information_1951} between $\mu_k$ and $\pi$~\citep{altschuler_shifted_2025} under the same conditions.

Unfortunately, the analysis by~\citet{sanz-serna_wasserstein_2021} does not fully shed light on the effect of the parameters $\gamma, \eta$, and the integration step size $h$.
That is, they only obtain a Wasserstein contraction for $\eta \in (0, 4/(\alpha + \beta) ), \gamma = 2$ and when $h$ is smaller than some unknown threshold~\citep[Ex. 4.13]{sanz-serna_wasserstein_2021}.
While the remaining analyses \citep{dalalyan_sampling_2020,altschuler_shifted_2025} provide more general and concrete conditions on $\gamma$ and $h$, they also have limitations.
A well-known aspect of \cref{eq:kinetic_langevin_dynamics} is that, after rescaling the time as $t^{\prime} = \gamma t$, setting $\eta = 1$, and taking the overdamped limit $\gamma \to \infty$, we obtain the overdamped Langevin dynamics~\citep[\S 6.5]{pavliotis_stochastic_2014}
\begin{equation}
    \mathrm{d}X_{t^{\prime}} = -\nabla U\left(X_{t^{\prime}}\right) \mathrm{d}t^{\prime} + \sqrt{2} \, \mathrm{d}B_{t^{\prime}} \; .
    \label{eq:overdamped_langevin}
\end{equation}
The results by \citet[Thm. 2]{dalalyan_sampling_2020} and \citet[Thm. 4.1]{altschuler_shifted_2025} suggest that the exponential integrator is unable to simulate \cref{eq:kinetic_langevin_dynamics} in this regime.
Specifically, they require that the discretization step size $h$ satisfies $h = \mathrm{O}(1/\gamma)$, which means that, as $\gamma \to \infty$, the step size has to degenerate such that $h \to 0$.
A similar conclusion is drawn by \citet{leimkuhler_contraction_2024}, who conclude that the discretized dynamics require $h = \mathrm{O}(1/\gamma)$ to form a Wasserstein contraction.
Meanwhile, they demonstrate that other discretizations, such as the ``OBABO'' discretization~\citep{leimkuhler_rational_2013}, do not necessarily degenerate in the large friction limit.
Now, it cannot be the case that only the exponential integrator cannot attain this limit since, for any $h_{\mathrm{LMC}} > 0$, by setting $h = h_{\mathrm{LMC}} \gamma$, $\eta = 1$, and taking the limit $\gamma \to \infty$, the update rule for the KLMC with the exponential integrator exactly coincides with the Euler--Maruyama discretization of \cref{eq:overdamped_langevin}, widely known as Langevin Monte Carlo (LMC;~\citealp{rossky_brownian_1978,parisi_correlation_1981,grenander_representations_1994}).
This suggests that there is still room for improvement for the synchronous Wasserstein coupling approach.

Furthermore, existing analyses of the asymptotic bias of KLMC with the exponential integrator become vacuous in the overdamped limit.
More concretely, the asymptotic bias of KLMC with the exponential integrator in Wasserstein-2 distance $\mathrm{W}_2\left(\pi_h, \pi\right)$ is well-known to scale as $\mathrm{O}(h)$~\citep[Thm. 9]{cheng_underdamped_2018}.
In the overdamped limit, however, as $\gamma \to \infty$, the time-accelerated step size diverges $h = h_{\mathrm{LMC}} \gamma \to \infty$, making these bounds vacuous.
Instead, one would expect a non-vacuous phase transition into a $\mathrm{O}(h_{\mathrm{LMC}}^{1/2})$ scaling, which is the asymptotic bias of LMC~\citep[Cor. 7]{durmus_highdimensional_2019}.
Previously, no analysis has been able to identify this phase transition.

Given the amount of attention devoted to understanding the exponential integrator, the fact that a theoretical gap remains warrants attention.
Furthermore, compared to other discretizations, the exponential integrator is special in two practical aspects: First, its per-iteration computational cost is lower, requiring only a single evaluation of the gradient $\nabla U$; alternatives such as the OBABO discretization often require two evaluations per step or more.
Second, the transition kernel associated with the exponential integrator is precisely a conditional Gaussian over the original state space $\mathbb{R}^d \times \mathbb{R}^d$.
This makes it particularly relevant for applications that require transition kernels that are conjugate to some other distribution~\citep{heng_controlled_2020,bernton_schrodinger_2019}.

In this work, we refine the synchronous Wasserstein coupling analysis of KLMC with the exponential integrator when $U$ is $\alpha$-strongly convex and $\beta$-Lipschitz smooth.
Additional details on the setup and the assumptions are stated in \cref{section:preliminaries}.
Our contributions are as follows:
\begin{itemize}
    \item \cref{section:convergence_stationarity}: We relax the assumptions on the parameters $h, \gamma, \eta$ required to ensure that the discretized process converges to its (biased) stationary distribution.
    Under a general condition on $h$, $\eta$, and $\gamma$, \cref{thm:general_contraction} establishes a Wasserstein contraction with a rate that depends on these parameters.
    \cref{thm:special_case_convergence_bound} states that this result implies a rate of convergence to stationarity of $\mathrm{O}(h \eta \alpha / \gamma)$.
    In the underdamped regime, this matches previously known rates~\citep[Thm. 6.1]{leimkuhler_contraction_2024}.
    However, our result imposes weaker restrictions on the step size $h$, and can be satisfied even in the overdamped regime.
    Indeed, in the overdamped limit, the contraction rate of KLMC coincides with the corresponding contraction rate of the overdamped Langevin discretized with the Euler--Maruyama scheme (\cref{thm:convergence_overdamped}).
    
    \item \cref{section:asymptotic_bias}: We provide a more general result on the asymptotic bias of the stationary distribution of the discretized process in Wasserstein-2 distance.
    Under conditions sufficient to ensure convergence to stationarity, \cref{thm:asymptotic_bias} provides a bound on the asymptotic bias in Wasserstein-2 distance.
    Specifically, it shows that the asymptotic bias scales as $\mathrm{O}(h^2 \gamma + h)$ and $\mathrm{O}(h^{1/2} \gamma^{-1/2} + h^{-1/2} \gamma^{-3/2}) $ in the underdamped and overdamped regimes, respectively.
    For $h = h_{\mathrm{LMC}}\gamma$, the bound on the overdamped regime remains non-vacuous even in the overdamped limit $\gamma \to \infty$.
    In fact, the asymptotic bias in the overdamped limit precisely matches known results for LMC obtained via synchronous Wasserstein coupling.
    Furthermore, numerically, the phase transition from underdamped to overdamped appears to happen around the point of $h \gamma = 1.69$.
\end{itemize}

In \cref{section:complexity}, we combine the convergence and the asymptotic bias analyses into a mixing time complexity guarantee.
As a result, we obtain a sampling complexity guarantee that $\mathrm{O}(\kappa^{3/2} d^{1/2} \epsilon^{-1} \log \epsilon^{-1} )$ iterations are sufficient to achieve $\mathrm{W}_2\left(\mu_k, \pi\right) \leq \alpha^{-1/2} \epsilon$, which matches previous results~\citep{sanz-serna_wasserstein_2021,dalalyan_sampling_2020,altschuler_shifted_2025}.
We make some concluding remarks in \cref{section:discussions}, while the proofs are deferred to \cref{section:proofs}.

\section{Preliminaries}\label{section:preliminaries}

\paragraph{Notation}
For some Euclidean space $\mathcal{X} \subseteq \mathbb{R}^d$, we denote its Borel-measurable subsets as $\mathcal{B}(\mathcal{X})$.
$\mathcal{P}_2\left(\mathcal{X}\right) = \{ \mu \mid \int_{\mathcal{X}} \norm{x}^2 \mu\left(\mathrm{d}x\right) < +\infty \}$ denotes the set of all distributions on $\mathcal{X}$ with a finite second moment. 
For vectors $x, y \in \mathbb{R}^d$, $\inner{x}{y} = x^{\top} y$ and $\norm{x} = \sqrt{\inner{x}{x}}$ denote the Euclidean inner product and norms, respectively.
Furthermore, for any random variable $X$, we denote the square root of its expectation as $\mathbb{E}^{1/2}X = \sqrt{\mathbb{E}X}$.
For a Markov kernel $Q : \mathcal{X} \times \mathcal{B}\left(\mathcal{X}\right) \to \mathbb{R}_{\geq 0}$ and a probability measure $\mu : \mathcal{B}\left(\mathcal{X}\right) \to \mathbb{R}_{\geq 0}$, their composition is denoted as $\mu Q\left(\mathrm{d}y\right) = \int_{\mathcal{X}} \mu\left(\mathrm{d}x\right) Q\left(x, \mathrm{d}y\right)$.
For a diagonalizable matrix $A \in \mathbb{R}^{d \times d}$ and any $p\in \{1, \ldots, d\}$, $\sigma_p\left(A\right)$ denotes its $p$th eigenvalue.


\subsection{Stochastic Exponential Euler Discretization}
We begin with a high-level derivation of the stochastic exponential Euler discretization.
(Refer to \citet[Lem. 29]{durmus_uniform_2025} for a more rigorous derivation.)
Consider the fact that the solution to \cref{eq:kinetic_langevin_dynamics} at time \(t = T\) is given as the intractable expression
\begin{align}
    \begin{split}
    V_T
    &=
    \mathrm{e}^{- \gamma T} v_0 - \eta \int^T_0 \mathrm{e}^{- \gamma \left(T - t\right)} \nabla U\left(X_t\right) \mathrm{d}t + \sqrt{2 \gamma \eta} \int^T_0 \mathrm{e}^{-\gamma \left(T - t\right)} \, \mathrm{d}B_t
    \\
    X_T &= X_0 + \int^T_0 V_s \, \mathrm{d}t \; .
    \end{split}
    \label{eq:kinetic_langevin_solution}
\end{align}
If we hold the drift $\nabla U$ constant, \cref{eq:kinetic_langevin_solution} reduces to an Ornstein-Uhlenbeck process, which has a known solution.
The exponential integrator exploits this, for each $k \geq 0$, by integrating \cref{eq:kinetic_langevin_solution} over the interval $[hk, h (k + 1)]$ and replacing the state-dependent drift $\nabla U\left(X_t\right)$ with the state-independent constant drift $\nabla U(X_{hk})$.
Defining \(\delta \triangleq \mathrm{e}^{- \gamma h}\) for convenience, this yields the update rule for the discrete-time Markov chain \({(Z_k = (X_k, V_k))}_{k \geq  0}\),
\begin{align}
    \begin{split}
    X_{k + 1}
    &=
    X_{k} + \frac{1 - \delta}{\gamma} V_k - \eta \frac{\gamma h + \delta - 1}{\gamma^2} \nabla U\left(X_k\right) 
    + 
    \xi^X_{k+1}
    \\
    V_{k + 1}
    &=
    \delta V_{k} 
    - \eta \frac{1 - \delta}{\gamma} \nabla U\left(X_k\right) 
    + \xi^V_{k+1} 
    \; ,
    \end{split}
    \label{eq:kinetic_langevin_monte_carlo_algorithm}
\end{align}
where the sequence of noise variables ${(\xi^X_k, \xi^V_k)}_{k \geq 1}$ is given as
\begin{align*}
    \xi^X_{k+1}
    \triangleq
    \sqrt{2 \gamma \eta} \int^{h (k+1)}_{hk} \frac{1 - \mathrm{e}^{- \gamma (h (k+1) - s)}}{\gamma} \, \mathrm{d}B_{s} \, 
    \;\text{and}\;
    \xi^V_{k+1}
    \triangleq
    \sqrt{2 \gamma \eta} \int^{h (k+1)}_{h k} \mathrm{e}^{-\gamma \left(h (k+1) - s\right)} \, \mathrm{d}B_{s}
    \; .
\end{align*}
The noise sequence ${(\xi^X_k, \xi^V_k)}_{k \geq 1}$ can be simulated by drawing independent zero-mean $2d$-dimensional Gaussian random vectors with covariance 
$
\begin{bmatrix}
    \sigma_{XX}^2 & \sigma_{XV}^2  \\   
    \sigma_{XV}^2 & \sigma_{VV}^2
\end{bmatrix}
\otimes \mathrm{I}_d ,
$
where $\otimes$ denotes the Kronecker product, and
\begin{align*}
	\sigma_{XX}^2 = \frac{2 \eta}{\gamma} \;
	\left(h - 2 \frac{1 - \delta}{\gamma} + \frac{1 - \delta^2}{2 \gamma} \right) \; ,
    \quad
	\sigma_{XV}^2 = \frac{\eta}{\gamma} {\left( 1 - \delta \right)}^2 \; ,
    \quad
	\sigma_{VV}^2 = \eta \;
	\left( 1 - \delta^2 \right) \; .
\end{align*}
Throughout the paper, we denote the corresponding Markov kernel as $K : \mathbb{R}^{2 d} \times \mathcal{B}\left(\mathbb{R}^{2 d}\right) \to \mathbb{R}_{\geq 0}$ such that the Markov chain ${(Z_k)}_{k \geq 0}$ is simulated as $Z_{k+1} \sim K\left(Z_k, \cdot\right)$ for each $k \geq 0$.

\subsection{Assumptions on the Potential}
Our goal is to analyze the speed of approximately generating a sample from $\pi$, the stationary distribution of the continuous process (\cref{eq:kinetic_langevin_dynamics}), by simulating the Markov chain ${(Z_k)}_{k \geq 0}$ (\cref{eq:kinetic_langevin_monte_carlo_algorithm}).
This amounts to analyzing the speed in which ${(Z_k)}_{k \geq 0}$ converges to stationarity and the difference between $\pi_h$ and $\pi$.
These properties generally depend on the properties of the drift $\nabla U$, and in turn the potential $U$.
In this work, we consider the following assumption:

\begin{assumption}\label{asusmption:hessian_bounded}
    The potential \(U : \mathbb{R}^d \to \mathbb{R}\) is twice differentiable and there exist some $\alpha \in (0, +\infty)$ and $\beta \in [\alpha, +\infty)$ such that, for any \(x \in \mathbb{R}^{d}\), 
    \[
        \alpha \mathrm{I}_d \quad\preceq\quad \nabla^2 U\left(x\right) \quad\preceq\quad \beta \mathrm{I}_d \; .
    \]
\end{assumption}
The existence of $\alpha$, $\beta$ is equivalent to the density $x \mapsto \exp\left(-U(x)\right)/Z$ being $\alpha$-strongly log-concave and $\beta$-log-Lipschitz smooth, and the ratio $\kappa \triangleq \beta/\alpha$ is referred to as the condition number.
\cref{asusmption:hessian_bounded} has been widely used for analyzing the non-asymptotic complexity of approximate sampling schemes based on discretized kinetic Langevin diffusions~\citep{dalalyan_sampling_2020,cheng_underdamped_2018,sanz-serna_wasserstein_2021,leimkuhler_contraction_2024,monmarche_highdimensional_2021,foster_shifted_2021,shen_randomized_2019,liu_double_2023,altschuler_shifted_2025,chak_reflection_2025}.
In particular, it is known that \cref{asusmption:hessian_bounded} implies a contraction in Wasserstein distance for various discretizations of the kinetic/underdamped~\citep{leimkuhler_contraction_2024,monmarche_highdimensional_2021,sanz-serna_wasserstein_2021} and overdamped~\citep{durmus_highdimensional_2019} Langevin diffusions.
We are, however, interested in how precisely and quantitatively the discretized diffusion depends on the properties of the drift represented by $\alpha$ and $\beta$.

\vspace{-2ex}
\subsection{Parametrization}\label{section:parametrization}
\vspace{-1ex}
Notice that, in \cref{eq:kinetic_langevin_monte_carlo_algorithm}, the Markov kernel $K$ only interacts with the step size $h$ through the product $\zeta \triangleq \gamma h$.
Without loss of generality, it is possible to parametrize \cref{eq:kinetic_langevin_monte_carlo_algorithm} with $(\zeta, \gamma, \eta)$ instead of the usual $(h, \gamma, \eta)$.
Then $\delta$ becomes a monotonic transformation of $\zeta$ such that $\delta = \exp\left(-\zeta\right)$, while the update rule simplifies into
\begin{align}
    \begin{split}
	X_{k+1} &= X_k + \left( 1 - \delta \right) \left(\frac{1}{\gamma} V_k \right) - \left(\zeta + \delta - 1\right) \left(\frac{\eta }{\gamma^2} \nabla U\left(X_k\right) \right) + \xi^X_{k+1}
	\\
	V_{k+1} &= \delta V_k - \left(1 - \delta\right) \left( \frac{\eta}{\gamma} \nabla U\left(X_k\right) \right) + \xi^V_{k+1} \; .
    \end{split}
    \label{eq:klmc_zeta_update}
\end{align}
As a result, the analysis of the discretized algorithms is clearer and more natural in the $(\zeta, \gamma, \eta)$ parametrization.
Therefore, our Wasserstein contraction analysis will operate under this parametrization.
In the main text, however, we will present most results in the $(h, \gamma, \eta)$ parametrization to be consistent with the literature.

Also, in the $X_t$ update of \cref{eq:klmc_zeta_update}, notice that the gradient $\nabla U$ is scaled as $\eta/\gamma^2$.
This scaling naturally appears in the analysis through
\[
    R\left(\lambda\right) \triangleq \frac{\eta \lambda}{\gamma^2} \; ,
\]
where $\lambda \in [\alpha, \beta]$ represents any eigenvalue lying on the spectrum of $\nabla^2 U$.
Then, under \cref{asusmption:hessian_bounded}, $R$ is bounded as
\[
    \frac{\eta \alpha}{\gamma^2} \quad\leq\quad R\left(\lambda\right) \quad\leq\quad \frac{\eta \beta}{\gamma^2} \; .
\]
The contraction of the discretized dynamics is directly dependent on the behavior of the scaled eigenvalues $R\left(\lambda\right)$.
In fact, the scaling $\eta/\gamma^2$ partially hints at the fact that, for the discretized dynamics, keeping the ratio $\eta/\gamma^2$ constant results in similar behavior.
For instance, the choices of $\eta \asymp 1/\beta$ and $\gamma \asymp 1$~\citep{sanz-serna_wasserstein_2021} and $\eta \asymp 1$ and $\gamma \asymp \sqrt{\beta}$~\citep{dalalyan_sampling_2020,altschuler_shifted_2025} have been used to obtain mixing time complexities that are comparable.

\subsection{Weighted Norm and Wasserstein Distance}\label{section:norm}
For obtaining a tight contraction for the kinetic Langevin dynamics, it is necessary to consider the Wasserstein-2 distance induced by an unconventional norm~\citep{dalalyan_sampling_2020,ma_there_2021,monmarche_highdimensional_2021}.
Following previous works~ \citep{monmarche_highdimensional_2021,leimkuhler_contraction_2024}, we consider the following norm defined on the augmented state space $\mathbb{R}^{2 d}$, where, for any $z = (x,v) \in \mathbb{R}^{2 d}$, 
\[
    {\lVert z \rVert}_{a,b}^2 \;\;\triangleq\;\; {\lVert x \rVert}^2 + 2 b {\langle x, v \rangle} + a {\lVert v \rVert}^2,
\]
and $a, b \in \mathbb{R}_{>0}$ satisfy $b^2 < a$. 
The last condition ensures that $ {\lVert z \rVert}_{a,b}^2$ is a valid norm, which can be verified by using Young's inequality (for any $\epsilon > 0$, we have $2 b\langle x,v\rangle \leq \epsilon {\lVert x \rVert}^2 +  b^{2}\epsilon^{-1} {\lVert v \rVert}^2$).
Furthermore, if we enforce the stronger condition $4 b^2 \leq a$, we retrieve an explicit equivalence with the conventional Euclidean norm as
\begin{align}
    \frac{1}{2} {\lVert z \rVert}_{a,0}^2 \quad\leq\quad {\lVert z \rVert}_{a,b}^2 \quad\leq\quad \frac{3}{2} {\lVert z \rVert}_{a,0}^2 \; .
    \label{eq:norm_equivalence}  
\end{align}
For the values of the norm coefficients $a$ and $b$, we will use the specific values of $a = 4/\gamma^2$ and $b = 1/\gamma$, which satisfy the condition for \cref{eq:norm_equivalence}.

We denote Wasserstein-2 distance induced by the norm $\norm{\cdot}_{a,b}$ as
\[
    \mathrm{W}_{a,b}\left(\mu, \nu\right)
    \triangleq
    \inf_{\rho \in \Gamma(\mu, \nu)} \sqrt{ \int_{\mathbb{R}^{2d} \times \mathbb{R}^{2d}} \norm{z - z'}_{a,b}^2 \mathrm{d}\rho\left(z, z'\right) } \; ,
\]
where \(\Gamma(\mu, \nu)\) is the set of couplings between \(\mu \in \mathcal{P}(\mathbb{R}^{2d})\) and \(\nu \in \mathcal{P}(\mathbb{R}^{2d})\).
The conventional Wasserstein-2 distance can be correspondingly defined as $W_2 \triangleq W_{1,0}$.


\section{Main Results}

\begin{figure*}
    \centering
    \floatsetup{heightadjust=object, valign=t}
    \ffigbox{
        \begin{subfloatrow}
            \ffigbox[\FBwidth]{
                \begin{tikzpicture}
    \begin{axis}[
        xmin=0,
        xmax=0.5,
        ymin=0,
        ymax=0.5,
        ylabel={},
        major tick length=2pt,
        minor tick length=0pt,
        xtick align=outside,
        ytick align=outside,
        ytick pos=left, 
        xtick pos=bottom, 
        axis line style = thick,
        every tick/.style={black,thick},
        xtick style={draw=none},
        ytick style={draw=none},
        xtick = {0.1, 0.2726},
        ytick = {0.17797},
        xticklabels = {}, 
        yticklabels = {$c$},
        width = 36ex,
        height= 33ex,
    ]
    
    \addplot[restrict x to domain=0.095:0.2726, samples=100, fill=gray!30, draw=none, mark=none] table[x index=0, y index=2, col sep=comma] {figures/cstar.txt} \closedcycle;

    \addplot[cherry3, mark=none, very thick, name path=cstar] table[x index=0, y index=2, col sep=comma] {figures/cstar.txt} node[pos=0.2, above] {\small$p_1 - \sqrt{p_2 \, p_3}$};
    
    \path [name path=c] (axis cs: 0, 0.17797) -- (axis cs: 1, 0.17797);
    \path [name intersections={of=cstar and c}];
    \draw[cherry3, thick, densely dotted] (0, 0.17797) -- (intersection-1); 

    \end{axis}
\end{tikzpicture}
            }{\vspace{-2ex}\caption{\cref{thm:general_contraction}}}
            \ffigbox[\FBwidth]{
                \begin{tikzpicture}
    \begin{axis}[
        xmin=0,
        ymin=0,
        xlabel={Eigenvalue of $\nabla^2 U$},
        ylabel={\footnotesize$p_1 - \sqrt{p_2 \, p_3}$},
        major tick length=2pt,
        minor tick length=0pt,
        xtick align=outside,
        ytick align=outside,
        ytick pos=left, 
        xtick pos=bottom, 
        axis line style = thick,
        every tick/.style={black,thick},
        xtick style={draw=none},
        ytick style={draw=none},
        xtick = {0.1},
        ytick = {0.1},
        xticklabels = {},
        yticklabels = {},
        width = 36ex,
        height= 33ex,
        legend cell align={left},
        legend style={
            at={(0.97,0.97)},
            anchor=north east,
            draw=none,
            fill opacity=0.8,
            text opacity=1.0,
        },
    ]
    
    \addplot[cherry6, mark=none, very thick, name path=cstar] table[x index=0, y index=1, col sep=comma] {figures/cstar.txt};
    \addlegendentry{$h \gamma = 0.5$}
    
    \addplot[cherry4, mark=none, very thick, name path=cstar] table[x index=0, y index=2, col sep=comma] {figures/cstar.txt};
    \addlegendentry{$h \gamma = 1$}
    
    \addplot[cherry2, mark=none, very thick, name path=cstar] table[x index=0, y index=3, col sep=comma] {figures/cstar.txt};
    \addlegendentry{$h \gamma = 2$}

    \end{axis}
\end{tikzpicture}
            }{\vspace{-2ex}\caption{Effect of $h \gamma$ on \cref{thm:general_contraction}\label{fig:cstar_zeta}}}
            \ffigbox[\FBwidth]{
                \begin{tikzpicture}
    \begin{axis}[
        xmin=0,
        xmax=0.2,
        ymin=0,
        ymax=0.5,
        ylabel={},
        major tick length=2pt,
        minor tick length=0pt,
        xtick align=outside,
        ytick align=outside,
        ytick pos=left, 
        xtick pos=bottom, 
        axis line style = thick,
        every tick/.style={black,thick},
        xtick style={draw=none},
        ytick style={draw=none},
        xtick = {0.05, 0.1437},
        ytick = {0.05},
        xticklabels = {}, 
        yticklabels = {$c$},
        width = 36ex,
        height= 33ex,
    ]
    
    \addplot+[mark=none, domain=0.05:0.1437, samples=100, draw=none, fill=gray!30] {x} \closedcycle;

    \addplot[cherry1, mark=none, very thick, name path=cstar] table[x index=0, y index=2, col sep=comma] {figures/cstar.txt} node[pos=0.1, above, yshift=2ex] {\small$p_1 - \sqrt{p_2 p_3 }$};
    
    
    \addplot[cherry3, mark=none, very thick, domain=0:0.1437, name path=clin] {x} node[pos=1.0, anchor=west] {$h \gamma R$};
    
    
    \path [name path=c] (axis cs: 0, 0.05) -- (axis cs: 1, 0.05);
    \path [name intersections={of=clin and c}];
    \draw[cherry3, thick, densely dotted] (0, 0.05) -- (intersection-1); 

    \end{axis}
\end{tikzpicture}
            }{\vspace{-2ex}\caption{\cref{thm:special_case_convergence_bound}}\label{fig:clin}}
        \end{subfloatrow}
    }{\caption{
        \textbf{Illustration of \cref{thm:general_contraction,thm:special_case_convergence_bound}}.
        The grey region represents the spectrum of $\nabla^2 U$ under \cref{asusmption:hessian_bounded} for a certain choice of parameters.
        Intuitively, the grey region becomes wider as the problem becomes less well-conditioned (larger $\kappa = \beta/\alpha$).
        (a) Relationship between the function $p_1 - \sqrt{p_2 \, p_3}$, spectrum of $\nabla^2 U$, and the contraction coefficient $c$.
        (b) Increasing $\zeta = h \gamma$ raises the peak value of $p_1 - \sqrt{p_2 \, p_3}$ but reduces the range of $R$ where $p_1 - \sqrt{p_2 \, p_3}$ is positive.
        This results in a trade-off between the condition number $\kappa = \beta / \alpha$ and the resulting contraction coefficient.
        (c) Visualization of the linear under-approximation of $p_1 - \sqrt{p_2 \, p_3}$.
        Notice that the resulting contraction coefficient becomes worse.
    }}\label{fig:contraction}
\end{figure*}
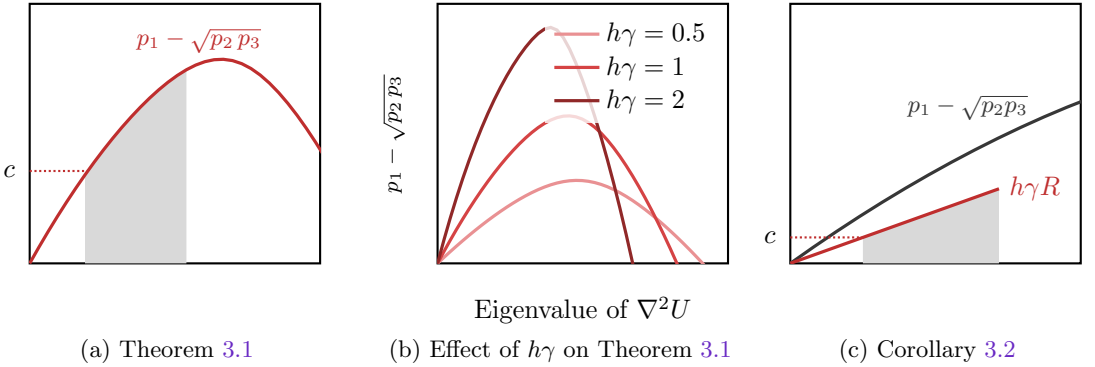

\subsection{Convergence to Stationarity}\label{section:convergence_stationarity}

Firstly, we present a general Wasserstein contraction result for the Markov kernel $K$ associated with the kinetic Langevin dynamics discretized via the exponential integrator.
This will immediately imply that $K$ admits a stationary distribution $\pi_h$ and that it converges to $\pi_h$ at a dimension-independent geometric rate.
We follow the strategy~\citep{leimkuhler_contraction_2024,sanz-serna_wasserstein_2021} of reducing the problem to solving a special case of the discrete-time Lyapunov equation~\citep[\S 6.E]{antsaklis_linear_2006}.
Similarly to \citet{leimkuhler_contraction_2024}, we directly solve the Lyapunov equation by analyzing the eigenvalues of a collection of matrices.
However, our analysis differs in that we first obtain the exact expression for the contraction rate.
Under \cref{asusmption:hessian_bounded} and appropriate conditions on the parameters $h, \gamma, \eta$, $K$ admits a contraction in Wasserstein distance induced by the weighted norm $\norm{\cdot}_{a,b}$ (defined in \cref{section:norm}) using $a = 4/\gamma^2$ and $b = 1/\gamma$.

\begin{theoremEnd}[%
    text link={\noindent\textit{Proof.} The proof is deferred to \cref{section:proof_general_contraction}. \qed},
    category=generalcontraction, 
    text proof = {Proof of \string\pratendRef{thm:prAtEnd\pratendcountercurrent}}
]{theorem}\label{thm:general_contraction}
    Suppose \cref{asusmption:hessian_bounded} holds and the parameters $h, \gamma, \eta$, where $\delta = \exp(-h\gamma)$, satisfy
    \begin{equation}
        \eta
        \left(
            \frac{2}{3}
            \frac{h}{ \gamma {\left(1 - \delta^2\right)}}
            +
            \frac{3}{2}
            \frac{1}{\gamma^2} 
        \right)
        \quad\leq\quad
        \frac{1}{\beta}
        \; .
        \label{eq:condition_general_contraction}
    \end{equation}
    Then, for any \(\mu, \nu \in \mathcal{P}_2\left(\mathbb{R}^{2 d}\right)\) and all $n \geq 1$, we have
    \[
        {\mathrm{W}_{a,b}\left(\mu K^{n}, \nu K^{n}\right)}^2 \quad\leq\quad  {\left(1 - c\left(h, \gamma, \eta\right) \right)}^n \; 
        {\mathrm{W}_{a,b}\left(\mu, \nu\right)}^2
    \]
    with the contraction coefficient 
    \[
        c\left(h, \gamma, \eta\right) = \inf_{\lambda \in \left[\alpha, \beta\right]} \quad p_1\left( R\left(\lambda\right) \right) - \sqrt{ p_2\left( R\left(\lambda\right) \right) p_3\left( R\left(\lambda\right) \right)} \; ,
    \]
    which is strictly positive, where
    \begin{align*}
        p_1\left(r\right) \triangleq -a_1 r^2 + b_1 r + e_1 \; ,  \qquad
        p_2\left(r\right) \triangleq a_1 r^2 + b_2 r + e_2 \; ,  \qquad
        p_3\left(r\right) \triangleq a_1 r^2 + b_3 r + e_3 \; ,  \qquad
    \end{align*}
with the coefficients 
\begin{alignat*}{5}
	a_1 &\triangleq \frac{2}{3} {(h \gamma)}^2 + 2 {\left(1 - \delta\right)}^2 \; ,
	\qquad
	&&b_1 &&\triangleq h \gamma - \left(\delta - \delta^2\right) \; ,
	\qquad
	&&e_1 &&\triangleq \frac{1}{2} \left(1 - \delta^2\right) \; ,
	\qquad
	\\
	&
	\qquad
	&&b_2 &&\triangleq - h \gamma \left(1 + \delta\right) + \left(1 - \delta^2\right) \; ,
	\qquad
	&&e_2 &&\triangleq \frac{1}{2} {\left(1 + \delta\right)}^2 \; ,
	\\
	&
	\qquad
	&&b_3 &&\triangleq - h \gamma \left(1 - \delta\right) - {\left(1 - \delta\right)}^2 \; ,
	\qquad
	&&e_3 &&\triangleq \frac{1}{2} {\left(1 - \delta\right)}^2 \; .
\end{alignat*}
\end{theoremEnd}
\begin{proofEnd}
Recall $r_{\mathrm{max}}$ in \cref{eq:rmax}. 
\cref{thm:ACmB2_positive_definite} implies
\[
    r \in (0, r_{\mathrm{max}}) \qquad\Rightarrow\qquad c^{-}\left(r, \zeta\right) > 0 \; .
\]
Furthemore, as long as $c^-\left(r, \zeta\right) > 0$ can be ensured, 
\begin{align*}
    c \in (0, c^-\left(r, \zeta\right)]
    \qquad&\Rightarrow\qquad 
    \chi_{AC-B^2}\left(r, \zeta, \gamma, c\right) &&\geq 0
    \quad\text{and}
    &&\text{(\cref{thm:ACmB2_nonnegative})}
    \\
    c \in (0, c^-\left(r, \zeta\right)]
    \qquad&\Rightarrow\qquad  
    \chi_{A}\left(r, \zeta, \gamma, c\right) &&> 0 \; .
    &&\text{(\cref{thm:A_positive_definite})}
\end{align*}
Recall the equivalence in \cref{eq:chi_spectrum_equivalence} that $\chi_A$ and $\chi_{AC-B^2}$ can be related to the spectrum of $A_k$, $A_k C_k - B_k^2$ through the choice of $r = R\left(\sigma\left(H_k\right)\right)$.
Furthermore, under \cref{asusmption:hessian_bounded}, 
\[
    0 \quad<\quad \frac{\alpha\eta}{\gamma^2} \quad\leq\quad R\left(\sigma_p\left(H_k\right)\right) \quad\leq\quad \frac{\beta \eta}{\gamma^2} \; .
\]
Therefore, the choice of contraction coefficient $c$ in the proof statement satisfies
\[
    c = 
    \inf_{\lambda \in [\alpha, \beta]} c^-\left( R\left(\lambda\right), \zeta \right) 
    \qquad\Rightarrow\qquad
    \forall k \geq 0, \quad  c \in \left( 0,  c^-\left(R\left( \sigma_p\left(H_k\right) \right), \zeta\right) \right)
\]
On the other hand, the boundedness of $r = R\left(\sigma\left(H_k\right)\right)$ can be ensured through
\[
    \frac{\beta \eta}{\gamma^2} < r_{\mathrm{max}}
    \quad\Rightarrow\quad
    \forall k \geq 0, \;\;
    0 < R\left(\sigma_p\left(H_k\right) \right) < r_{\mathrm{max}}
    \quad\Rightarrow\quad
    \forall k \geq 0, \;\;
    r \in (0, r_{\mathrm{max}})
\]
Since the expression for $r_{\mathrm{max}}$ is rather complex, the condition in \cref{eq:condition_general_contraction} serves as a simpler sufficient condition.
This follows from the implications
\begin{align}
    & &
    \frac{\beta \eta}{\gamma^2} 
    \quad&<\quad
    r_{\mathrm{max}}
    \nonumber
    \\
    &\Leftrightarrow&
    \frac{\beta \eta}{\gamma^2} 
    \quad&<\quad
    \frac{
    2 \zeta \left(1 - \delta^2\right)
    }{
        \left(\frac{4}{3} - \delta^2\right) \zeta^2  + 2 \delta \left(1 - \delta\right) \zeta + 3 \left(1 - \delta\right)^2
    } 
    \nonumber
    \\
    &\Leftrightarrow&
    \frac{1}{\beta}
    \quad&>\quad
    \frac{\eta}{\gamma^2} 
    \frac{
        \left(\frac{4}{3} - \delta^2\right) \zeta^2  + 2 \delta \left(1 - \delta\right) \zeta + 3 \left(1 - \delta\right)^2
    }{
        2 \zeta \left(1 - \delta^2\right)
    }
    \nonumber
    \\
    &\Leftrightarrow&
    \frac{1}{\beta}
    \quad&>\quad
    \frac{\eta}{\gamma^2} 
    \frac{
        \left(\frac{4}{3} - \delta^2\right) \zeta  + 2 \delta \left(1 - \delta\right)  + 3 \left(1 - \delta\right)^2 / \zeta
    }{
        2 \left(1 - \delta^2\right)
    }
    \nonumber
    \\
    &\Leftarrow&
    \frac{1}{\beta}
    \quad&\geq\quad
    \frac{\eta}{\gamma^2} 
    \frac{
        \left(\frac{4}{3} - \delta^2\right) \zeta  + 3 \delta \left(1 - \delta\right)  + 3 \left(1 - \delta\right) 
    }{
        2 \left(1 - \delta^2\right)
    }
    &&\text{($1 - \delta \leq \zeta$, $2 \delta < 3 \delta $)}
    \nonumber
    \\
    &\Leftrightarrow&
    \frac{1}{\beta}
    \quad&\geq\quad
    \frac{\eta}{\gamma^2} 
    \frac{
        \left(\frac{4}{3} - \delta^2\right) \zeta  + 3 \left(1 - \delta^2\right)
    }{
        2 \left(1 - \delta^2\right)
    }
    \nonumber
    \\
    &\Leftarrow&
    \frac{1}{\beta}
    \quad&\geq\quad
    \frac{\eta}{\gamma^2} 
    \frac{
        \frac{4}{3} \zeta + 3 {\left(1 - \delta^2\right)}
    }{
        2 {\left(1 - \delta^2\right)}
    }
    && \text{($-\delta^2 < 0$)}
    \nonumber
    \\
    &\Leftrightarrow&
    &\text{\cref{eq:condition_general_contraction}}
    \nonumber
    \; .
\end{align}
Therefore, under the choices of $r = R\left(\sigma_p\left(H_k\right)\right)$, $c = \inf_{\lambda \in [\alpha, \beta]} c^-\left(R\left(\lambda\right), \zeta\right)$, 
\[
    \text{\cref{eq:condition_general_contraction}}
    \qquad\Rightarrow\qquad
    \forall k \geq 0, \quad
    A_k \succ 0 
    \quad\text{and}\quad
    A_k C_k - B_k^2 \succeq 0  \; .
\]
Finally, by invoking \cref{thm:lyapunov_equation_block_analysis}, we can conclude that, as long as \cref{eq:condition_general_contraction} is ensured, \cref{eq:one_step_contraction} holds with $c = \inf_{\lambda \in [\alpha, \beta]} c^-\left(r, \zeta\right) > 0$ for all $k \geq 0$.

We now know that, under \cref{eq:condition_general_contraction}, the one-step contraction in \cref{eq:one_step_contraction} holds with the coefficient $c = \inf_{\lambda \in [\alpha, \beta]} c^-\left(r, \zeta\right)$.
This implies that, for any $n \geq 1$, unrolling the recursion over $k = 1, \ldots, n$ yields that
\[
    \norm{Z_{n} - Z_{n}'}_{a,b}^2 \leq {\left(1 - c\right)}^n \norm{Z_0 - Z_0'}_{a,b}^2
\]
holds almost surely.
Now, for any $\mu, \nu \in \mathcal{P}_2\left(\mathbb{R}^{2d}\right)$, we know that there exists an optimal coupling $\rho^*$ between the two~\citep[Cor. 5.22]{villani_optimal_2009}.
By initializing $(Z_0, Z_0') \sim \rho^*$ and taking expectation over the noise sequence ${(\xi^X_{k}, \xi^V_k)}_{k \geq 1}$, for any $n \geq 1$, we obtain
\[
    {\mathrm{W}_{a,b}\left(\mu K^n, \nu K^n\right)}^2
    \;\leq\;
    \mathbb{E}\norm{Z_{n} - Z_{n}'}_{a,b}^2 
    \;\leq\;
    {\left(1 - c\right)}^n \mathbb{E} \norm{Z_0 - Z_0'}_{a,b}^2 
    \;=\;
    {\left(1 - c\right)}^n \mathrm{W}_{a,b}\left(\mu, \nu\right)
    \; .
\]
\end{proofEnd}

This immediately implies the existence of a unique stationary distribution.

\begin{corollary}
    Suppose the conditions of \cref{thm:general_contraction} hold.
    Then $K$ also admits a unique stationary distribution $\pi_h \in \mathcal{P}_2\left(\mathbb{R}^{2 d}\right)$.
\end{corollary}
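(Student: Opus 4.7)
The plan is to obtain $\pi_h$ as the unique fixed point of $K$ on the complete metric space $(\mathcal{P}_2(\mathbb{R}^{2d}), \mathrm{W}_{a,b})$ via the Banach fixed-point theorem, using the strict contraction already supplied by \cref{thm:general_contraction}.

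First I would verify that $\mathrm{W}_{a,b}$ makes $\mathcal{P}_2(\mathbb{R}^{2d})$ complete. Since the choice $a = 4/\gamma^2$, $b = 1/\gamma$ satisfies $4b^2 \leq a$, the norm equivalence \cref{eq:norm_equivalence} gives $\tfrac{1}{2}\|z\|_{a,0}^2 \leq \|z\|_{a,b}^2 \leq \tfrac{3}{2}\|z\|_{a,0}^2$, so $\mathrm{W}_{a,b}$ is equivalent to the usual $\mathrm{W}_2$ on $\mathcal{P}_2(\mathbb{R}^{2d})$; completeness under $\mathrm{W}_2$ is classical. Next I would check that $K$ stabilizes $\mathcal{P}_2(\mathbb{R}^{2d})$: since $\nabla U$ is $\beta$-Lipschitz by \cref{asusmption:hessian_bounded} and the noise increments $(\xi_k^X, \xi_k^V)$ are Gaussian with the explicit finite covariance listed after \cref{eq:kinetic_langevin_monte_carlo_algorithm}, the update rule \cref{eq:kinetic_langevin_monte_carlo_algorithm} maps finite-second-moment distributions to finite-second-moment distributions.

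With these two facts in hand, \cref{thm:general_contraction} (applied with $n=1$) asserts that $K$ is a strict contraction on $(\mathcal{P}_2(\mathbb{R}^{2d}), \mathrm{W}_{a,b})$ with Lipschitz constant $\sqrt{1-c(h,\gamma,\eta)} < 1$. Banach's fixed-point theorem then supplies a unique $\pi_h \in \mathcal{P}_2(\mathbb{R}^{2d})$ with $\pi_h K = \pi_h$. Alternatively, for any $\mu_0 \in \mathcal{P}_2(\mathbb{R}^{2d})$ one can directly show that $(\mu_0 K^n)_{n \geq 0}$ is $\mathrm{W}_{a,b}$-Cauchy via $\mathrm{W}_{a,b}(\mu_0 K^{n+m}, \mu_0 K^n) \leq \sum_{j=0}^{m-1}(1-c)^{(n+j)/2}\mathrm{W}_{a,b}(\mu_0 K, \mu_0)$, and use completeness to extract a limit, which must be $K$-invariant by continuity of $\mu \mapsto \mu K$ in $\mathrm{W}_{a,b}$.

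There is no real obstacle here; the only mild bookkeeping point is the stability step, which is immediate from the linear-plus-Gaussian structure of \cref{eq:kinetic_langevin_monte_carlo_algorithm} together with the Lipschitz bound on $\nabla U$. Since this is essentially a standard consequence of a geometric contraction, I would keep the argument to a short paragraph invoking the three ingredients above: contraction (\cref{thm:general_contraction}), completeness (via \cref{eq:norm_equivalence}), and stability of $\mathcal{P}_2$ under $K$.
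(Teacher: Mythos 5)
Your proposal is correct and follows essentially the same route as the paper: norm equivalence (via \cref{eq:norm_equivalence}) to reduce to completeness of $(\mathcal{P}_2(\mathbb{R}^{2d}), \mathrm{W}_2)$, then the contraction from \cref{thm:general_contraction} and the Banach fixed-point theorem. The only addition you make beyond the paper's one-line proof is to explicitly note that $K$ maps $\mathcal{P}_2$ into $\mathcal{P}_2$, which the paper leaves implicit but is indeed a needed (and, as you say, immediate) hypothesis for the fixed-point argument.
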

\begin{proof}
    $(\mathcal{P}_2\left(\mathbb{R}^{2 d}\right), \mathrm{W}_{2})$ forms a complete metric space metrized by $\mathrm{W}_{2}$~\citep[Thm. 6.18]{villani_optimal_2009}.
    The result then follows from \cref{thm:general_contraction}, the equivalence between $\norm{\cdot}_{a,b}$ and the Euclidean norm under $a = 4/\gamma^2$, $b = 1/\gamma$, and the Banach fixed point theorem.
\end{proof}

For the special case of $\nu = \pi_h$, \cref{thm:general_contraction} immediately implies a geometric rate of convergence of ${(Z_k)}_{k \geq 0}$ to its stationary distribution $\pi_h$.
The main contribution of \cref{thm:general_contraction}, however, is the weaker restriction on the parameters stated in \cref{eq:condition_general_contraction}, where we notice an interplay between the parameters $h, \gamma, \eta$.
In the overdamped regime, 
\[
    \eta \left( \frac{2}{3} \frac{h}{ \gamma \left( 1 - \delta^2 \right) } + \frac{3}{2} \frac{1}{\gamma^2} \right) 
    \qquad\stackrel{\gamma \to \infty}{\simeq}\qquad
    \frac{2}{3} \frac{\eta h}{\gamma} 
    \; .
\]
Therefore, the more we increase $\gamma$, the more we are free to increase either $\eta$ or $h$.
This contrasts with the result by \citet[Thm. 6.1]{leimkuhler_contraction_2024}, which strictly requires $h \leq 1/(2 \gamma)$.
In the underdamped regime, the inequality
$
    \frac{\zeta}{1 - \mathrm{e}^{-\zeta}} \leq 1 + \frac{\zeta}{2} + \frac{\zeta^2}{6}
$
yields
\[
    \eta \left( \frac{2}{3} \frac{h}{ \gamma \left( 1 - \delta^2 \right) } + \frac{3}{2} \frac{1}{\gamma^2} \right) 
    \quad\leq\quad
    \eta \left( 
    \frac{11}{6} \frac{1}{\gamma^2} + \frac{1}{3} \frac{h}{\gamma} + \frac{2}{9} h^2
    \right) \; .
\]
That is, in the underdamped regime, $\gamma \geq \Omega(\sqrt{\beta})$ and $h \leq \mathrm{O}(1/\sqrt{\beta})$, which is analogous to the constraint $h \leq \mathrm{O}(1/\gamma)$.
Thus, our restriction on the parameters does agree with that of \citet[Thm. 6.1]{leimkuhler_contraction_2024} in the underdamped regime.
Furthermore, for the fixed choices of $\gamma = 2$ and $\eta = 1/\beta$, as taken by~\citet{sanz-serna_wasserstein_2021}, \cref{eq:condition_general_contraction} becomes 
\[
     \frac{h}{1 - \exp\left(-4 h\right)} \quad\leq\quad \frac{15}{8} \; .
\]
Numerically solving this inequality yields an approximate condition of $h \leq 1.87$, which is slightly weaker than the $h \leq 1$ condition by~\citet[Thm. 6.1]{sanz-serna_wasserstein_2021}.

Meanwhile, the contraction coefficient $c(h, \gamma, \eta)$ is determined by the function $p_1 - \sqrt{p_2 p_3}$.
While $p_1 - \sqrt{p_2 p_3}$ is difficult to interpret, it is an exact expression for the contraction coefficient.
The behavior of $c(h, \gamma, \eta)$ depending on the parameters is illustrated in \cref{fig:contraction}.
To obtain a more interpretable expression for the contraction coefficient, we can perform a linear under-approximation of $p_1 - \sqrt{p_2 p_3}$ (illustrated in \cref{fig:clin}) with slightly stronger restrictions on the parameters:

\begin{assumption}\label{assumption:condition_linear_approximation_contraction}
    The parameters $h, \gamma, \eta > 0$ satisfy the inequality 
    \[
        \eta
        \left(
        2
        \frac{h}{ \gamma {\left(1 - \delta\right)}}
        +   
        \frac{6}{\gamma^2} 
        \right) 
        \quad<\quad
        \frac{1}{\beta} \; .
    \]
\end{assumption}
We will rely on this condition throughout the remainder of the article.
It is apparent that \cref{assumption:condition_linear_approximation_contraction} is very similar to \cref{eq:condition_general_contraction}.
As such, in the overdamped regime, \cref{assumption:condition_linear_approximation_contraction} also reduces to the condition $h \eta / \gamma \leq \mathrm{O}\left( 1 / \beta \right) $.
For the underdamped regime, by relying on the inequality 
$
    \frac{\zeta}{1 - \mathrm{e}^{-\zeta}} \leq 1 + \frac{\zeta}{2} + \frac{\zeta^2}{6} 
$
again, it can be simplified through
\begin{align}
    \frac{\eta}{\gamma^2} \left( 2 \frac{h \gamma}{{\left(1 - \delta\right)}} + 6 \right) 
    \leq \frac{1}{\beta} 
    \quad&\Leftarrow\quad
    \frac{\eta}{\gamma^2} \left( 2 \left(1 + \frac{h \gamma}{2} + \frac{{(h \gamma)}^2}{6}\right) + 6 \right) 
    \leq \frac{1}{\beta} 
    \nonumber
    \\
    \quad&\Leftarrow\quad
    \gamma > \sqrt{9 \beta \eta}
    \quad\text{and}\quad 
    h \leq \sqrt{ \frac{3}{4} \frac{1}{\beta \eta} - \frac{27}{4} \frac{1}{\gamma^2} } 
    \; .
    \nonumber
\end{align}
For $\eta = 1$, this implies that $\gamma$ must satisfy at least $\gamma \geq \sqrt{9 \beta}$.
Up to constants, this corresponds to the result by \citet[Thm. 6.1]{leimkuhler_contraction_2024}.
On the other hand, choosing $\eta = 1/(2 \beta)$ and $\gamma = \sqrt{27/2}$ conveniently implies a condition of $h \leq 1$.
This covers the choice of $\gamma \asymp 1$ and $\eta \asymp 1/\beta$ by \citet[Thm 4.9]{sanz-serna_wasserstein_2021}.

Given \cref{assumption:condition_linear_approximation_contraction}, we obtain a simpler expression for the contraction coefficient.

\begin{theoremEnd}[%
    category=specialcaseconvergencebound,
    text proof = {Proof of \string\pratendRef{thm:prAtEnd\pratendcountercurrent}},
    text link={\noindent\textit{Proof.} The proof is deferred to \cref{section:proof_special_case_convergence_bound}. \qed}
]{corollary}\label{thm:special_case_convergence_bound}
Suppose \cref{asusmption:hessian_bounded,assumption:condition_linear_approximation_contraction} hold.
Then the contraction coefficient in \cref{thm:general_contraction} satisfies
\[
    c\left(h, \gamma, \eta\right) \quad\geq\quad \tilde{c}\left(h, \gamma, \eta\right) \quad\triangleq\quad \frac{h \eta \alpha}{\gamma}\; .
\]
\end{theoremEnd}
\begin{proofEnd}
Recall $r_{\mathrm{lin}}$ in \cref{eq:rlin}.
For all $r \in (0, r_{\mathrm{lin}}]$, we have the following equivalences:
\begin{align*}
    & &
    c^-\left(r, \zeta\right) &\geq \zeta r
    \\
    &\Leftrightarrow&
    p_1\left(r\right) - \sqrt{p_2\left(r\right) p_3\left(r\right)} &\geq \zeta r
    \\
    &\Leftrightarrow&
    {\left( p_1\left(r\right) - \zeta r \right)}^2 &\geq p_2\left(r\right) p_3\left(r\right)
    &&\text{(\cref{thm:p1_larger_than_zeta_r})}
    \\
    &\Leftrightarrow&
    p_6\left(r\right) r &\geq 0
    &&\text{(\cref{thm:p6_non_negative})} \; .
\end{align*}
This implies that, for the choice $r = R\left(\lambda\right)$, where $\lambda \in [\alpha, \beta]$, if $R\left(\lambda\right) \leq r_{\mathrm{lin}}$ can be ensured, the contraction coefficient in \cref{thm:general_contraction} is lower bounded as
\[
    \inf_{\lambda \in [\alpha, \beta]} c^-\left(R\left(\lambda\right), \zeta\right)
    \quad\geq\quad
    \inf_{\lambda \in [\alpha, \beta]} \zeta R\left(\lambda\right)
    \quad=\quad
    \zeta \frac{\eta \alpha}{\gamma^2} \; .
\]
Since $\lambda \in [\alpha, \beta]$, the condition $R\left(\lambda\right) \leq r_{\mathrm{lin}}$ is sufficiently ensured by \cref{assumption:condition_linear_approximation_contraction} due to the implications
\begin{align*}
    & &
    R\left(\lambda\right)
    \quad&\leq\quad
    r_{\mathrm{lin}}
    \quad=\quad
    \frac{
        \zeta {\left(1 - \delta\right)}
    }{
        2 \zeta^2  + 6 {\left(1 - \delta\right)}^2
    } 
    \\
    &\Leftarrow&
    \frac{\beta \eta}{\gamma^2} 
    \quad&\leq\quad
    \frac{
        \zeta {\left(1 - \delta\right)}
    }{
        2 \zeta^2  + 6 {\left(1 - \delta\right)}^2
    } 
    \\
    &\Leftrightarrow&
    \frac{1}{\beta}
    \quad&\geq\quad
    \frac{\eta}{\gamma^2} 
    \frac{
        2 \zeta^2  + 6 {\left(1 - \delta\right)}^2
    }{
        \zeta {\left(1 - \delta\right)}
    }
    \\
    &\Leftrightarrow&
    \frac{1}{\beta}
    \quad&\geq\quad
    \frac{\eta}{\gamma^2} 
    \left(
    2 \frac{\zeta}{{\left(1 - \delta\right)}}
    +   
    6
    \frac{1 - \delta}{\zeta}
    \right)
    \\
    &\Leftarrow&
    \frac{1}{\beta}
    \quad&\geq\quad
    \frac{\eta}{\gamma^2} 
    \left(
    2 \frac{\zeta}{{\left(1 - \delta\right)}}
    +   
    6
    \right)
    &&\text{($1 - \delta < \zeta $)}
    \; .
\end{align*}
\end{proofEnd}

This implies a $\mathrm{O}\left(h \eta \alpha/\gamma\right)$ contraction rate, which matches previous results~\citep{sanz-serna_wasserstein_2021,leimkuhler_contraction_2024}.
%
%
However, our main contribution is the generality of the condition \cref{assumption:condition_linear_approximation_contraction}.
Specifically, it allows for the contraction to hold even in the overdamped limit:

\begin{corollary}[Overdamped Limit]\label{thm:convergence_overdamped}
    Suppose \cref{asusmption:hessian_bounded} holds and, for some $h_{\mathrm{LMC}} > 0$ satisfying $h_{\mathrm{LMC}} \leq 1/(2\beta)$, the parameters are set as $h = h_{\mathrm{LMC}} \, \gamma$, $\eta = 1$, and $\gamma \to \infty$.
    Then \cref{assumption:condition_linear_approximation_contraction} is satisfied while the overdamped limit of the contraction coefficient $c\left(h,\gamma,\eta\right)$ in  \cref{thm:special_case_convergence_bound} follows as
    \[
        \lim_{\gamma \to \infty} c\left(h, \gamma, \eta\right)
        \quad=\quad
        \lim_{\gamma \to \infty} c\left(h_{\mathrm{LMC}} \gamma, \gamma, 1\right)
        \quad=\quad
        h_{\mathrm{LMC}} \alpha \; .
    \]
\end{corollary}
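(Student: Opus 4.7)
The plan is to verify the two assertions of the corollary by direct substitution of the overdamped parametrization $h = h_{\mathrm{LMC}}\gamma$, $\eta = 1$ into the expressions appearing in \cref{assumption:condition_linear_approximation_contraction} and \cref{thm:special_case_convergence_bound}, then taking the limit $\gamma \to \infty$. The crucial simplification is that under this parametrization the effective time step grows as $h\gamma = h_{\mathrm{LMC}}\gamma^2 \to \infty$, so $\delta = \exp(-h\gamma) = \exp(-h_{\mathrm{LMC}}\gamma^2)$ vanishes exponentially fast, and the scaled spectrum satisfies $R(\lambda) = \lambda/\gamma^2 \to 0$.

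To verify \cref{assumption:condition_linear_approximation_contraction}, I would substitute the parametrization into its left-hand side to obtain
\[
    \eta\left(\frac{2h}{\gamma(1-\delta)} + \frac{6}{\gamma^2}\right) \;=\; \frac{2 h_{\mathrm{LMC}}}{1 - \delta} + \frac{6}{\gamma^2},
\]
and observe that both terms are monotonically decreasing in $\gamma$ and converge to $2 h_{\mathrm{LMC}}$ as $\gamma \to \infty$. Under the hypothesis $h_{\mathrm{LMC}} \leq 1/(2\beta)$, this limit is bounded by $1/\beta$, so the condition is satisfied for all sufficiently large $\gamma$ (with strict inequality of the hypothesis ensuring strict inequality in \cref{assumption:condition_linear_approximation_contraction}).

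For the contraction coefficient, I would invoke the explicit lower bound $\tilde c(h,\gamma,\eta) = h\eta\alpha/\gamma$ from \cref{thm:special_case_convergence_bound}. Direct substitution gives
\[
    \tilde c\bigl(h_{\mathrm{LMC}}\gamma,\,\gamma,\,1\bigr) \;=\; \frac{(h_{\mathrm{LMC}}\gamma)\cdot 1\cdot \alpha}{\gamma} \;=\; h_{\mathrm{LMC}}\alpha,
\]
which is independent of $\gamma$, so the limit is trivially $h_{\mathrm{LMC}}\alpha$. No further asymptotic work is required once the bound from \cref{thm:special_case_convergence_bound} is applied.

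No significant obstacle is anticipated: the proof reduces to routine substitution and a monotonicity argument. The one interpretive subtlety is that the $c$ in the statement is most naturally read as the bound $\tilde c$ of \cref{thm:special_case_convergence_bound}, since that is the quantity evaluating cleanly to $h_{\mathrm{LMC}}\alpha$ under the parametrization. If instead one insists on analysing the exact expression $c = \inf_{\lambda \in [\alpha,\beta]} \bigl(p_1(R(\lambda)) - \sqrt{p_2(R(\lambda))\,p_3(R(\lambda))}\bigr)$ from \cref{thm:general_contraction}, one would expand each coefficient in $\delta$ and $\zeta = h\gamma$, use $a_1 R(\lambda)^2 \to \tfrac{2}{3} h_{\mathrm{LMC}}^2\lambda^2$ and $b_i R(\lambda) \to \pm h_{\mathrm{LMC}}\lambda$, and obtain a strictly larger quadratic limit in $h_{\mathrm{LMC}}\alpha$, for which $h_{\mathrm{LMC}}\alpha$ remains a valid lower bound under the hypothesis.
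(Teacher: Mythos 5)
Your proof is correct, and it is clearly the argument the paper intends: the corollary carries no explicit proof and is presented as an immediate consequence of \cref{thm:special_case_convergence_bound}, which is precisely the direct substitution you perform. Both subtleties you raise are genuine. First, substituting $\eta = 1$, $h = h_{\mathrm{LMC}}\gamma$ into \cref{assumption:condition_linear_approximation_contraction} gives $2h_{\mathrm{LMC}}/(1-\delta) + 6/\gamma^2$, which decreases monotonically to $2h_{\mathrm{LMC}}$ but is always strictly larger than $2h_{\mathrm{LMC}}$ because $\delta = \exp(-h_{\mathrm{LMC}}\gamma^2) > 0$ for finite $\gamma$; at the boundary case $h_{\mathrm{LMC}} = 1/(2\beta)$, the assumption therefore fails for every finite $\gamma$, and the corollary's hypothesis really requires strict inequality, as you note. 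Second, the stated equality $\lim_{\gamma\to\infty} c = h_{\mathrm{LMC}}\alpha$ only holds literally if one reads $c$ as the lower bound $\tilde c$: substituting $\zeta = h_{\mathrm{LMC}}\gamma^2$, $R(\lambda) = \lambda/\gamma^2$, $\delta \to 0$ into the coefficients of \cref{thm:general_contraction} gives $p_1(R(\lambda)) \to \tfrac{1}{2} + h_{\mathrm{LMC}}\lambda - \tfrac{2}{3}h_{\mathrm{LMC}}^2\lambda^2$ and $p_2(R(\lambda)), p_3(R(\lambda)) \to \tfrac{1}{2} - h_{\mathrm{LMC}}\lambda + \tfrac{2}{3}h_{\mathrm{LMC}}^2\lambda^2 > 0$, so the exact $c^- \to 2h_{\mathrm{LMC}}\lambda - \tfrac{4}{3}h_{\mathrm{LMC}}^2\lambda^2$, and taking the infimum over $\lambda \in [\alpha,\beta]$ (a concave quadratic, so the infimum is at an endpoint) yields, under $h_{\mathrm{LMC}}\beta \leq 1/2$, a value strictly larger than $h_{\mathrm{LMC}}\alpha$. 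Your reading of $c$ as $\tilde c$ is the one consistent with the displayed identity, and your proposal is complete and correct for that reading.
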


This is equivalent to the Wasserstein contraction rate and step size limit of the Euler--Maruyama discretization of the overdamped Langevin dynamics~\citep[Prop 2.]{durmus_highdimensional_2019}.
Therefore, under appropriate scaling of the step size ($h \propto \gamma$), the exponential integrator is able to non-degenerately simulate the kinetic Langevin dynamics in the overdamped regime.

\subsection{Asymptotic Bias}\label{section:asymptotic_bias}

We now turn to analyzing the asymptotic bias $\mathrm{W}_{a,b}\left(\pi_h, \pi\right)$ of the stationary distribution of $K$, $\pi_h$.
Denote the Markov semigroup associated with \cref{eq:kinetic_langevin_dynamics} by ${(P_{t})}_{t \geq 0}$, where we recall that $\pi P_t = \pi$ and
${(Z_t^* \sim \pi P_t)}_{t \geq 0}$ is the kinetic Langevin dynamics initialized from its stationary distribution $\pi$. 
Given a discretized process ${(Z_k)}_{k \geq 0}$, which need not be stationary, our goal is to bound the distance between ${(Z_t^*)}_{t \geq 0}$ and ${(Z_k)}_{k \geq 0}$ as $t = h k \to \infty$.
The proof strategy follows~\citet{sanz-serna_wasserstein_2021}, where we construct an auxiliary process ${(Z^{\prime})}_{t \geq 0}$ that corresponds to the exponential integrator discretization of the $\pi$-stationary process ${(Z^*_t)}_{t \geq 0}$.
For each $k \geq 0$, consider the corresponding time step ${(t_k = h k)}_{k \geq 0}$.
${(Z_t')}_{t \in [t_k, t_{k+1}]}$ is the linear interpolation of ${(Z^*_t)}_{t \geq 0}$ over the interval $[t_k, t_{k+1}]$ with the associated Markov semigroup as ${(\tilde{P}_t)}_{t \in [t_k, t_{k+1}]}$. 
Specifically, for any $t \in [t_k, t_{k+1}]$,
\begin{align}
    V_{t}^{\prime}
    &=
    \mathrm{e}^{- \gamma (t - hk)} V_{hk}^* - \eta \int^t_{hk} \mathrm{e}^{- \gamma \left(s - hk\right)} \nabla U\left(X_{hk}^{*}\right) \mathrm{d}s + \sqrt{2 \gamma \eta} \int^t_{hk} \mathrm{e}^{-\gamma \left(s - hk\right)} \, \mathrm{d}B_s
    \nonumber
    \\
    X_t^{\prime} &= X_{hk}^* + \int^t_{hk} V_s^{\prime} \, \mathrm{d}s \; .
    \label{eq:interpolated_process}
\end{align}
In essence, ${(Z_t')}_{t \in [t_k, t_{k+1}]}$ is a kinetic Langevin diffusion process with the drift set to be the zeroth order interpolation of the drift of ${(Z_{t}^*)}_{t \geq 0}$.
The resulting discrete-time operator is denoted by $\tilde{K}$ such that $\pi \tilde{P}_{t_k} = \pi \tilde{K}^k$.
Given these, we can decompose the asymptotic error as
\[
    \mathrm{W}_{a,b}\left(\pi_h, \pi\right)
    =
    \lim_{k \to \infty} \mathrm{W}_{a,b}\big(\mu K^k, \pi P_{hk}\big)
    \leq
    \lim_{k \to \infty} 
    \left\{
    \mathrm{W}_{a,b}\big(\mu K^k, \pi \tilde{K}^k\big)
    +
    \mathrm{W}_{a,b}\big(\pi \tilde{P}_{t_k}, \pi P_{t_k}\big)
    \right\} \; .
\]
To upper bound this, for all $k \geq 0$ and all $t \in [t_k, t_{k+1}]$, we will assume ${(Z_k)}_{k \geq 0}$, ${(Z_t^*)}_{t \geq 0}$, and ${(Z_t')}_{t \geq 0}$ are synchronously coupled by sharing the same noise process ${(B_t)}_{t \geq 0}$.
Then $\mathrm{W}_{a,b}\big(\mu K^k, \pi \tilde{K}^k\big)$ can be bounded via the synchronous coupling established earlier in~\cref{thm:special_case_convergence_bound}. 
The proof focuses on bounding the remaining $\mathrm{W}_{a,b}\big(\pi \tilde{P}_{t_k}, \pi P_{t_k}\big)$ by analyzing the local error committed at each time step.

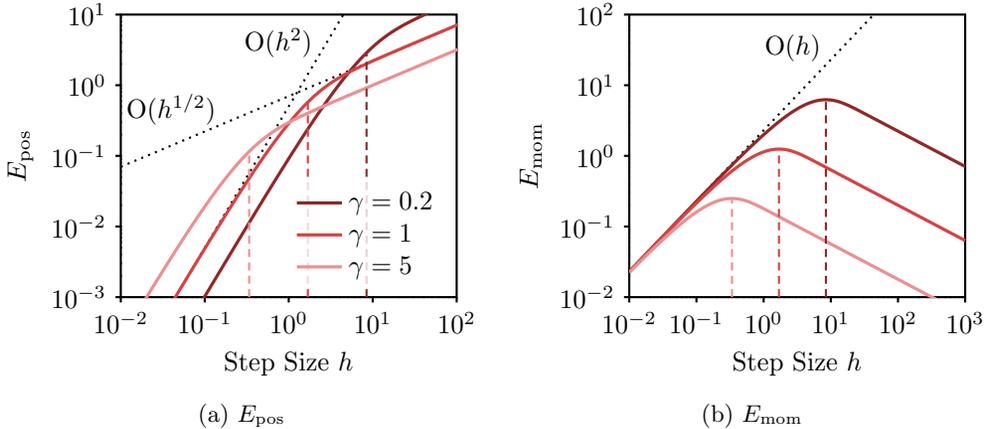
\begin{figure*}
    \centering
    \subfloat[$E_{\mathrm{pos}}$]{
         
\begin{tikzpicture}
    \begin{loglogaxis}[
        xmin=1e-2,
        xmax=1e+2,
        ymin=1e-3,
        ymax=1e+1,
        xlabel={Step Size $h$},
        ylabel={$E_{\mathrm{pos}}$},
        xmode=log,
        ymode=log,
        major tick length=2pt,
        minor tick length=0pt,
        xtick align=outside,
        ytick align=outside,
        ytick pos=left, 
        xtick pos=bottom, 
        axis line style = thick,
        every tick/.style={black,thick},
        width = 40ex,
        height= 35ex,
        legend cell align={left},
        legend style={
            at={(0.97,0.03)},
            anchor=south east,
            draw=none,
            fill opacity=0.8,
            text opacity=1.0,
        },
    ]
    \addplot[black, dotted, thick, domain=1e-2:1e+3, samples=128, forget plot] {0.5*x^(2)} node[black, anchor=south east, pos=0.48, yshift=-1.5ex] {$\mathrm{O}(h^2)$};
    
    \addplot[black, dotted, thick, domain=1e-2:1e+3, samples=128, forget plot] {0.7*sqrt(x)} node[black, anchor=south east, pos=0.25, yshift=-1ex] {$\mathrm{O}(h^{1/2})$};

    \addplot[cherry2, mark=none, very thick, name path=gamma1] table[x index=0, y index=1, col sep=comma] {figures/epos.txt};
    \addlegendentry{$\gamma = 0.2$}
    
    \addplot[cherry4, mark=none, very thick, name path=gamma2] table[x index=0, y index=2, col sep=comma] {figures/epos.txt};
    \addlegendentry{$\gamma = 1$}
    
    \addplot[cherry6, mark=none, very thick, name path=gamma3] table[x index=0, y index=3, col sep=comma] {figures/epos.txt};
    \addlegendentry{$\gamma = 5$}
   
    \path [name path=gamma1thres] (axis cs: 1.69/0.2, 1e-3) -- (axis cs: 1.69/0.2, 1e+2);
    \path [name intersections={of=gamma1 and gamma1thres}];
    \draw[cherry2, thick, densely dashed] (1.69/0.2, 1e-3) -- (intersection-1); 
    
    \path [name path=gamma2thres] (axis cs: 1.69, 1e-3) -- (axis cs: 1.69, 1e+2);
    \path [name intersections={of=gamma2 and gamma2thres}];
    \draw[cherry4, thick, densely dashed] (1.69, 1e-3) -- (intersection-1); 
    
    \path [name path=gamma3thres] (axis cs: 1.69/5, 1e-3) -- (axis cs: 1.69/5, 1e+2);
    \path [name intersections={of=gamma3 and gamma3thres}];
    \draw[cherry6, thick, densely dashed] (1.69/5, 1e-3) -- (intersection-1); 
    
    \end{loglogaxis}
\end{tikzpicture}
    }
    \subfloat[$E_{\mathrm{mom}}$]{
        \begin{tikzpicture}
    \begin{loglogaxis}[
        xmin=1e-2,
        xmax=1e+3,
        ymin=1e-2,
        ymax=1e+2,
        xlabel={Step Size $h$},
        ylabel={$E_{\mathrm{mom}}$},
        xmode=log,
        ymode=log,
        major tick length=2pt,
        minor tick length=0pt,
        xtick align=outside,
        ytick align=outside,
        ytick pos=left, 
        xtick pos=bottom, 
        axis line style = thick,
        every tick/.style={black,thick},
        width = 40ex,
        height= 35ex,
    ]
    
    \addplot[black, dotted, thick, domain=1e-2:1e+3, samples=128] {2.3*x} node[black, anchor=south east, pos=0.6, yshift=-1ex] {$\mathrm{O}(h)$};
    
    \addplot[cherry2, mark=none, very thick, name path=gamma1] table[x index=0, y index=1, col sep=comma] {figures/emom.txt};
    
    \addplot[cherry4, mark=none, very thick, name path=gamma2] table[x index=0, y index=2, col sep=comma] {figures/emom.txt};
    
    \addplot[cherry6, mark=none, very thick, name path=gamma3] table[x index=0, y index=3, col sep=comma] {figures/emom.txt};

    \path [name path=gamma1thres] (axis cs: 1.69/0.2, 1e-2) -- (axis cs: 1.69/0.2, 1e+2);
    \path [name intersections={of=gamma1 and gamma1thres}];
    \draw[cherry2, thick, densely dashed] (1.69/0.2, 1e-2) -- (intersection-1); 
    
    \path [name path=gamma2thres] (axis cs: 1.69, 1e-2) -- (axis cs: 1.69, 1e+2);
    \path [name intersections={of=gamma2 and gamma2thres}];
    \draw[cherry4, thick, densely dashed] (1.69, 1e-2) -- (intersection-1); 
    
    \path [name path=gamma3thres] (axis cs: 1.69/5, 1e-2) -- (axis cs: 1.69/5, 1e+2);
    \path [name intersections={of=gamma3 and gamma3thres}];
    \draw[cherry6, thick, densely dashed] (1.69/5, 1e-2) -- (intersection-1); 
    
    \end{loglogaxis}
\end{tikzpicture}
    }
    \vspace{-1ex}
    \caption{
        \textbf{Scaling of the asymptotic error bound with respect to the step size $h$.}
        The vertical dashed lines mark the point where $\zeta = h\gamma = 1.69$.
    }\label{fig:asymptotic_bias}
    \vspace{-1ex}
\end{figure*}

%
Specifically, carefully quantifying the effect of the damping (terms involving $\mathrm{e}^{-\gamma t}$) yields the following result.

\begin{theoremEnd}[%
    category=asymptoticbias,
    text link={\noindent\textit{Proof.} The proof is deferred to \cref{section:proof_asymptotic_bias}. \qed},
    text proof = {Proof of \string\pratendRef{thm:prAtEnd\pratendcountercurrent}}
]{theorem}\label{thm:asymptotic_bias}
Suppose \cref{asusmption:hessian_bounded,assumption:condition_linear_approximation_contraction} hold.
Then the stationary distribution of $K$, $\pi_h$, satisfies
\[
    \mathrm{W}_{a,b}\left(\pi_{h}, \pi\right)
    \quad\leq\quad
    E_{\mathrm{pos}} + E_{\mathrm{mom}}
    \; ,
\]
where 
\begin{align*}
    E_{\mathrm{pos}} 
    \quad&\triangleq\quad  
    {\left\{
    \frac{1}{2}
    \frac{d \kappa^2 \eta}{\gamma^2}
    \frac{1}{h \gamma}
    \left(
    {(h\gamma)}^2
    - 
    3
    +
    \delta^2
    \left(
    3
    +
    6 h\gamma
    +
    5 {(h \gamma)}^2
    +
    2
    {(h \gamma)}^3
    \right) 
    \right)
    \right\}}^{1/2}
    \\
    E_{\mathrm{mom}} 
    \quad&\triangleq\quad
    {\left\{
    4
    \frac{d \kappa^2 \eta }{\gamma^2} 
    \frac{1}{h \gamma} 
    \left(
    1
    -
    \delta^2
    \left(
    1 + 2 h \gamma + 2 {(h \gamma)}^2
    \right)
    \right) 
    \right\}}^{1/2} \; .
\end{align*}
\end{theoremEnd}
\begin{proofEnd}
Under the stated assumptions, we can invoke \cref{thm:general_discretization_bound,thm:momentum_deviation_bound}, which yield a bound on the total local error:
\begin{align*}
    \mathrm{W}_{a,b}(\mathrm{Law}(Z'_{h(k+1)}), \mathrm{Law}(Z_{h (k+1)}^*))
    \;\leq\;
    {\left(\mathbb{E} {\lVert Z'_{h(k+1)} - Z_{h (k+1)}^* \rVert}_{a,b}^2\right)}^{1/2}
    \;\leq\;
    \widetilde{E}_{\text{pos}}
    +
    \widetilde{E}_{\text{mom}} \; ,
\end{align*}
where 
\begin{align*}
    \widetilde{E}_{\text{pos}}^2
    \quad&\leq\quad
    \frac{1}{8}
    h d \beta^2 \eta^3
    \Bigg\{
    \frac{h^2}{\gamma^3}
    -
    3
    \frac{1}{\gamma^5}
    +
    \mathrm{e}^{- 2 h \gamma}
    \left(
    3
    \frac{1}{\gamma^5}
    +
    6
    \frac{h}{\gamma^4}
    +
    5
    \frac{h^2}{\gamma^3}
    +
    2
    \frac{h^3}{\gamma^2} 
    \right) 
    \Bigg\}
    \\
    \quad&=\quad
    \frac{1}{8}
    d \beta^2 \eta^3
    \Bigg\{
    \frac{h^3}{\gamma^3}
    -
    3
    \frac{h}{\gamma^5}
    +
    \mathrm{e}^{- 2 h \gamma}
    \left(
    3
    \frac{h}{\gamma^5}
    +
    6
    \frac{h^2}{\gamma^4}
    +
    5
    \frac{h^3}{\gamma^3}
    +
    2
    \frac{h^4}{\gamma^2} 
    \right) 
    \Bigg\}
    \\
    \quad&=\quad
    \frac{1}{8}
    d \beta^2 \eta^3
    \frac{\zeta}{\gamma^6}
    \Bigg\{
    \zeta^2 - 3
    +
    \mathrm{e}^{- 2 \zeta}
    \left(
    3
    +
    6 \zeta
    +
    5 \zeta^2
    +
    2
    \zeta^3
    \right) 
    \Bigg\}
    \\
    \widetilde{E}_{\text{mom}}^2
    \quad&\leq\quad
    \frac{1}{4} 
    a \,
    d \beta^2 \eta^3
    \left\{
    \frac{h}{\gamma^3} 
    -
    \mathrm{e}^{-2 h \gamma }
    \left(
    \frac{h}{\gamma^3} 
    +
    2
    \frac{h^2}{\gamma^2}  
    +
    2
    \frac{h^3}{\gamma}
    \right)
    \right\} 
    \\
    \quad&=\quad
    d \beta^2 \eta^3
    \frac{h}{\gamma^5} 
    \left(
    1
    -
    \mathrm{e}^{-2 h \gamma}
    \left(
    1 + 2 h \gamma + 2 h^2 \gamma^2
    \right)
    \right)
    &&\text{($a = 4/\gamma^2$)}
    \\
    \quad&=\quad
    d \beta^2 \eta^3
    \frac{\zeta}{\gamma^6} 
    \left(
    1
    -
    \mathrm{e}^{-2 \zeta}
    \left(
    1 + 2 \zeta + 2 \zeta^2
    \right)
    \right) 
    \; .
\end{align*}
Denote $\Delta_k \triangleq \mathrm{W}_{a,b}(\mathrm{Law}(Z_{k+1}), \mathrm{Law}(Z_{h (k+1)}^*))$.
Then \cref{eq:bias_partial_contraction_general} yields
\begin{align}
    \Delta_{k+1}
    \quad&\leq\quad
    {\left(1 - \tilde{c}\right)}^{1/2}
    \Delta_k
    +
    \widetilde{E}_{\mathrm{pos}} + \widetilde{E}_{\mathrm{mom}}
    \nonumber
    \; .
\end{align}
Unrolling the recursion, 
\begin{align*}
    \Delta_{k+1}
    \quad\leq\quad
    {\left(1 - \tilde{c}\right)}^{k/2} \Delta_0
    +
    \left(\widetilde{E}_{\mathrm{pos}} + \widetilde{E}_{\mathrm{mom}}\right)
    \sum^{k}_{i = 0} {\left(1 - c\right)}^{i/2}  \; .
\end{align*}
By taking the limit $k \to \infty$, and given the fact that $\lim_{k \to \infty} \mathrm{Law}\left(Z_{k}\right) = \pi_h$ due to \cref{thm:special_case_convergence_bound}, and that ${(Z_t^*)}$ is stationary for all $t \geq 0$,
\begin{align*}
    \mathrm{W}_{a,b}\left(\pi_{h}, \pi\right)
    \quad=\quad
    \Delta_{\infty} 
    \quad\leq\quad
    \left(\widetilde{E}_{\mathrm{pos}} + \widetilde{E}_{\mathrm{mom}}\right)
    \sum^{\infty}_{i = 0} {(1 - c)}^{i/2}
    . 
\end{align*}
Using the bound
\begin{align*}
    \sum^{\infty}_{i = 0} {(1 - \tilde{c})}^{i/2}
    \quad\leq\quad
    \frac{1}{1 - \sqrt{1 - \tilde{c}}}
    \quad\leq\quad
    \frac{2}{\tilde{c}}
    \quad=\quad
    \frac{2 \gamma^2}{\eta \zeta \alpha}
    \; ,
\end{align*}
we finally conclude that
\begin{align*}
    \mathrm{W}_{a,b}\left(\pi_{h}, \pi\right)
    \quad\leq\quad
    \frac{2\gamma^2}{\eta \zeta \alpha}
    \widetilde{E}_{\text{pos}} 
    +
    \frac{2\gamma^2}{\eta \zeta \alpha}
    \widetilde{E}_{\text{mom}} 
    \quad=\quad
    E_{\text{pos}} 
    +
    E_{\text{mom}} 
    \; .
\end{align*}

\end{proofEnd}

Notice that we decomposed the asymptotic error into the local error of the position $E_{\mathrm{pos}}$ and the momentum $E_{\mathrm{mom}}$.
Also, since we rely on \cref{thm:special_case_convergence_bound}, the restriction on the parameters is again given by \cref{assumption:condition_linear_approximation_contraction}.
Although the bounds on $E_{\mathrm{pos}}$ and $E_{\mathrm{mom}}$ are highly non-linear in $\zeta = h \gamma$, making them difficult to interpret, we visualize the scaling with respect to $h$ in \cref{fig:asymptotic_bias} for different values of $\gamma$.
In the underdamped regime, the error of the momentum $E_{\mathrm{mom}}$, which scales as $\mathrm{O}(h)$, dominates the error.
On the other hand, in the overdamped regime, the error of the momentum $E_{\mathrm{mom}}$ \textit{decreases}, while the error of the position $E_{\mathrm{pos}}$ dominates with a $\mathrm{O}(h^{1/2})$ scaling.
This $\mathrm{O}(h^{1/2})$ scaling is typical of the Euler--Maruyama discretization of the overdamped Langevin~\citep[Cor. 7]{durmus_highdimensional_2019}.

Notice in \cref{fig:asymptotic_bias} that there is a phase transition from the underdamped to overdamped regime.
We can locate the critical point of the phase transition by solving for the root of
\[
    \frac{\mathrm{d} E_{\mathrm{mom}}}{\mathrm{d} h}
    =
    0
    \qquad\Leftrightarrow\qquad
    \frac{\mathrm{d} E_{\mathrm{mom}}}{\mathrm{d} \zeta}
    =
    0
    \qquad\Leftrightarrow\qquad
    \left(2 \zeta + 1\right) \left(2 \zeta^2 + 1\right)
    =
    \mathrm{e}^{2 \zeta} \; .
\]
Numerically solving this equation suggests that the critical point is $\zeta \approx 1.69$.
\cref{fig:asymptotic_bias} qualitatively shows that $\zeta = h\gamma = 1.69$ accurately predicts the phase transition for both $E_{\mathrm{pos}}$ and $E_{\mathrm{mom}}$.

By approximating the bounds with the series expansion at both extremes, $\gamma \to 0$ and $\gamma \to \infty$, we can formalize the scalings as follows:

\begin{theoremEnd}[%
    category=asymptoticbiasspecialcases,
    text link={\noindent\textit{Proof.} The proof is deferred to \cref{section:proof_asymptotic_bias_special_cases}. \qed},
    text proof = {Proof of \string\pratendRef{thm:prAtEnd\pratendcountercurrent}}
]{corollary}\label{thm:asymptotic_bias_special_cases}
    Suppose \cref{asusmption:hessian_bounded,assumption:condition_linear_approximation_contraction} hold.
    Then the following bounds hold simultaneously:
    \begin{enumerate}[label=\textrm{(\roman*)}]
        \item Underdamped regime:
    \begin{align*}
        E_{\mathrm{pos}}
        \;\leq\;
        \frac{4}{15} \,
        d^{1/2} \kappa \, \eta^{1/2} \gamma h^2
        \qquad\text{and}\qquad
        E_{\mathrm{mom}}
        \;\leq\;
        \frac{4}{\sqrt{3}}  \,
        d^{1/2} \kappa \, \eta^{1/2} h \; .
    \end{align*}
        
    \item Overdamped regime:
    \begin{align*}
        E_{\mathrm{pos}} 
        \;\leq\;
        \frac{1}{\sqrt{2}} \,
        d^{1/2} \kappa \, \eta^{1/2} \frac{h^{1/2}}{\gamma^{1/2}} 
        \qquad\text{and}\qquad
        E_{\mathrm{mom}} 
        \;\leq\;
        4 \,
        d^{1/2} \kappa \, \eta^{1/2} \frac{1}{h^{1/2} \gamma^{3/2}} 
        \; .
    \end{align*}
    \end{enumerate}
\end{theoremEnd}
\begin{proofEnd}
    We will begin with (i).
    For $E_{\mathrm{pos}}$, we will prove that that, for any $\zeta > 0$, the bound
    \[
        f_{\mathrm{pos}}\left(\zeta\right) \quad\leq\quad \frac{8}{15} \zeta^5
    \]
    holds.
    Denote the right-hand side as $g_{\mathrm{pos}}^{(\mathrm{i})}\left(\zeta\right) \triangleq \frac{8}{15} \zeta^5$.
    Recall the derivatives of $f_{\mathrm{pos}}$ in \cref{eq:fpos_derivative1,eq:fpos_derivative2,eq:fpos_derivative3}.
    Then the bound on $E_{\mathrm{pos}}$ follow from the facts that 
    \begin{align*}
        \frac{\mathrm{d}^3 f_{\mathrm{pos}}}{\mathrm{d}\zeta^3}\left(\zeta\right)
        <
        32 \zeta^2 
        =
        \frac{\mathrm{d}^3 g_{\mathrm{pos}}^{(\mathrm{i})}}{\mathrm{d}\zeta^3} \left(\zeta\right)
        \; ,
        \quad
        \frac{\mathrm{d}^2 f_{\mathrm{pos}}}{\mathrm{d}\zeta^2}\left(0\right) 
        \leq
        \frac{\mathrm{d}^2 g_{\mathrm{pos}}^{(\mathrm{i})}}{\mathrm{d}\zeta^2}\left(0\right)
        \; ,
        \quad 
        \frac{\mathrm{d} f_{\mathrm{pos}}}{\mathrm{d}\zeta}\left(0\right) 
        =
        \frac{\mathrm{d} g_{\mathrm{pos}}^{(\mathrm{i})}}{\mathrm{d}\zeta}\left(0\right)  \; .
    \end{align*}
    That is, $f_{\mathrm{pos}}\left(\zeta\right) < g_{\mathrm{pos}}^{(\mathrm{i})}\left(\zeta\right)$ for all $\zeta > 0$.

    For $E_{\mathrm{mom}}$, we will prove that, for any $\zeta > 0$, the bound
    \[
        f_{\mathrm{mom}}\left(\zeta\right)
        \quad\leq\quad
        \frac{4}{3} \zeta^3
    \]
    holds.
    Denote the right-hand side as $g_{\mathrm{mom}}^{(\mathrm{i})}\left(\zeta\right)  \triangleq \frac{4}{3} \zeta^3$.
    Clearly, for all $\zeta > 0$ and \cref{eq:fmom_derivative1}, 
    \[
        \frac{\mathrm{d} f_{\mathrm{mom}}}{ \mathrm{d}\zeta } < \frac{\mathrm{d} g_{\mathrm{mom}}^{(\mathrm{i})}}{ \mathrm{d}\zeta }
        \qquad\text{and}\qquad
        f_{\mathrm{mom}}\left(0\right)
        =
        g_{\mathrm{mom}}^{(\mathrm{i})}\left(0\right) \; .
    \]
    Therefore, for all $\zeta > 0$, $f_{\mathrm{mom}}\left(\zeta\right) < g_{\mathrm{mom}}^{(\mathrm{i})}\left(\zeta\right)$.
    This implies the bound on $E_{\mathrm{mom}}$.

    Let's turn to (ii).
    The bound on $E_{\mathrm{pos}}$ is equivalent to 
    \[
        f_{\mathrm{pos}}\left(\zeta\right) \quad\leq\quad \zeta^2 \; .
    \]
    Denote the right-hand side as $g^{(\mathrm{ii})}(\zeta) = \zeta^2$ and recall \cref{eq:fpos_derivative1}.
    The bound immediately follows from the fact that 
    \[
        \frac{\mathrm{d} f_{\mathrm{pos}}}{\mathrm{d} \zeta}\left(\zeta\right)
        \leq
        2 \zeta
        =
        \frac{\mathrm{d} g^{(\mathrm{ii})}}{\mathrm{d} \zeta} \left(\zeta\right)
        \qquad\text{and}\qquad
        f\left(0\right) = g^{(\mathrm{ii})}\left(0\right) \; .
    \]
    Finally, the bound on $E_{\mathrm{mom}}$ follows by using the fact that $f_{\mathrm{mom}} < 1$.
\end{proofEnd}

This means that, in the underdamped regime, the asymptotic bias $E_{\mathrm{pos}} + E_{\mathrm{mom}}$ can be bounded as $\mathrm{O}(d^{1/2} \kappa \eta^{1/2} ( h^2 \gamma + h ) )$, while in the overdamped regime, it can be bounded as $\mathrm{O}(d^{1/2} \kappa \eta^{1/2} h^{1/2}\gamma^{-1/2})$.
While both bounds hold simultaneously, they are only tight (with regard to the general bound in \cref{thm:asymptotic_bias}) in their respective regime.
What is worth noting is that, in the underdamped regime, the bias increases with $\gamma$, whereas in the overdamped regime, the bias decreases with $\gamma$.
The fact that bias decreases with $\gamma$ in the overdamped regime is unsurprising since lower friction results in stiffer dynamics.
However, the fact that the bias in the position increases with $\gamma$ in the underdamped regime is counterintuitive.

Compared to previous works, for the underdamped regime with the choice of $\eta = 1$ and $\gamma \asymp \sqrt{\beta}$, our bound is in agreement (\citealp[Thm. 5.3.9]{chewi_logconcave_2026}; \citealp[Thm. 2]{dalalyan_sampling_2020}), as well as with $\gamma \asymp 1$ and $\eta \asymp 1/\beta$~\citep[Eq. 6.1]{sanz-serna_wasserstein_2021}.
Furthermore, notice that $h$ and $\gamma$ in the bound for the overdamped regime appear as a ratio $h^{1/2}\gamma^{-1/2}$.
This immediately implies that this bound is non-vacuous in the overdamped limit $\gamma \to \infty$ under appropriate step size scaling ($h \propto \gamma$).

\begin{corollary}[Overdamped Limit]\label{thm:asymptotic_bias_overdamped}
    Suppose \cref{asusmption:hessian_bounded} holds and, for some $h_{\mathrm{LMC}} > 0$ satisfying $h_{\mathrm{LMC}} \leq 1/(2\beta)$, the parameters are set as $h = h_{\mathrm{LMC}} \, \gamma$, $\eta = 1$, and $\gamma \to \infty$.
    Then the overdamped limits of $E_{\mathrm{\mathrm{pos}}}$ and $E_{\mathrm{\mathrm{mom}}}$ follow as
    \[
        \lim_{\gamma \to \infty} E_{\mathrm{pos}} = (1/\sqrt{2}) \, d^{1/2} \kappa h_{\mathrm{LMC}}^{1/2} 
        \qquad\text{and}\qquad
        \lim_{\gamma \to \infty} E_{\mathrm{mom}} = 0 
        \; .
    \]
\end{corollary}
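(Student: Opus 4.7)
The plan is a direct substitution-and-limit computation using the closed-form expressions for $E_{\mathrm{pos}}$ and $E_{\mathrm{mom}}$ from \cref{thm:asymptotic_bias}. The driving observation is that under the stated scaling $h = h_{\mathrm{LMC}} \gamma$, $\eta = 1$, the quantity $\zeta \triangleq h\gamma = h_{\mathrm{LMC}} \gamma^2$ tends to infinity as $\gamma \to \infty$, and hence $\delta = \mathrm{e}^{-\zeta}$ decays super-exponentially in $\gamma$. Consequently, every product of the form $\delta^2 \, p(\zeta)$ with $p$ polynomial vanishes in the limit, which will collapse all transient contributions inside the brackets defining $E_{\mathrm{pos}}$ and $E_{\mathrm{mom}}$.

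First, I will verify that \cref{assumption:condition_linear_approximation_contraction} is satisfied for all $\gamma$ sufficiently large, so that \cref{thm:asymptotic_bias} applies along the sequence. Substituting the scaled parameters, the assumption reduces to $2 h_{\mathrm{LMC}}/(1-\delta) + 6/\gamma^2 < 1/\beta$; as $\delta \to 0$ and $\gamma^{-2} \to 0$, the left-hand side converges to $2 h_{\mathrm{LMC}} \leq 1/\beta$, matching the argument already used for \cref{thm:convergence_overdamped} (and any boundary case $h_{\mathrm{LMC}} = 1/(2\beta)$ is handled by a standard limiting argument in $h_{\mathrm{LMC}}$).

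With \cref{thm:asymptotic_bias} in hand, I take limits term by term. For $E_{\mathrm{pos}}^2$, the prefactor equals $\tfrac{1}{2} d\kappa^2/(h_{\mathrm{LMC}} \gamma^4)$ and multiplies $\zeta^2 - 3 + \delta^2\bigl(3 + 6\zeta + 5\zeta^2 + 2\zeta^3\bigr)$; the dominant contribution $\zeta^2 = h_{\mathrm{LMC}}^2 \gamma^4$ produces $\tfrac{1}{2} d\kappa^2 h_{\mathrm{LMC}}$, the constant $-3$ gives an $\mathrm{O}(\gamma^{-4})$ correction, and each $\delta^2$-term decays faster than any polynomial in $\gamma$. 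Taking the square root yields the claimed $(1/\sqrt{2}) d^{1/2} \kappa h_{\mathrm{LMC}}^{1/2}$. For $E_{\mathrm{mom}}^2$, the prefactor $4 d\kappa^2/(h_{\mathrm{LMC}} \gamma^4)$ multiplies $1 - \delta^2(1 + 2\zeta + 2\zeta^2) \to 1$, so $E_{\mathrm{mom}}^2 \to 0$. There is no real conceptual obstacle; the only subtlety is the uniform super-exponential domination of $\delta^2$ over polynomials in $\zeta$, which is immediate from $\zeta^k \mathrm{e}^{-2\zeta} \to 0$ for every fixed $k \in \mathbb{N}$.
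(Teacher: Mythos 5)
Your computation is correct and is the natural (essentially the only) proof: substitute $\eta = 1$, $h\gamma = h_{\mathrm{LMC}}\gamma^2 = \zeta \to \infty$ into the closed-form expressions from \cref{thm:asymptotic_bias}, observe that every $\delta^2 \cdot \mathrm{poly}(\zeta)$ term vanishes super-exponentially, that the prefactor $\tfrac{1}{2} d\kappa^2/(h_{\mathrm{LMC}}\gamma^4)$ absorbs $\zeta^2 = h_{\mathrm{LMC}}^2\gamma^4$ to give $\tfrac{1}{2} d\kappa^2 h_{\mathrm{LMC}}$, and that the prefactor for $E_{\mathrm{mom}}$ decays as $\gamma^{-4}$ against a bracket tending to $1$. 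One small caution worth naming: at the boundary $h_{\mathrm{LMC}} = 1/(2\beta)$, the inequality in \cref{assumption:condition_linear_approximation_contraction} is in fact strict and is violated for every finite $\gamma$ (since $1/(1-\delta) > 1$), so the ``limiting argument in $h_{\mathrm{LMC}}$'' you sketch is genuinely needed; alternatively, since the corollary only asserts the algebraic limits of the expressions $E_{\mathrm{pos}}, E_{\mathrm{mom}}$ and not the Wasserstein bound itself, the limit computation stands regardless. Either way, your proposal matches the intended argument.
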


This coincides with $\mathrm{O}(d^{1/2} \kappa h_{\mathrm{LMC}}^{1/2})$ scaling of the asymptotic bias of the Euler--Maruyama discretization of overdamped Langevin (LMC) under the synchronous Wasserstein coupling strategy~\citep[Cor. 7]{durmus_highdimensional_2019}.
Under \cref{asusmption:hessian_bounded}, however, the convex analaysis strategy of \citet{durmus_analysis_2019} yields a bound of $\mathrm{O}(d^{1/2} \kappa^{1/2} h_{\mathrm{LMC}}^{1/2})$ for LMC.
The dependence on $\kappa$ in \cref{thm:asymptotic_bias_overdamped} is comparably suboptimal.
The equivalence between overdamped KLMC and LMC thus only holds for results obtained under the synchronous Wasserstein coupling strategy.
While it is possible that applying the convex analysis strategy to kinetic Langevin under the exponential integrator would bridge this gap, resulting in a better dependence on $\kappa$, no such result has been demonstrated.

Meanwhile, all previous strategies~\citep{sanz-serna_wasserstein_2021,cheng_underdamped_2018,dalalyan_sampling_2020} resulted in bounds on the asymptotic error that scale as $\mathrm{O}(h)$ with respect to the step size $h$.
In the overdamped regime, where applying time rescaling requires $h \propto \gamma$, these bounds are vacuous in the limit $\gamma \to \infty$.
The main culprit is that previous analyses ignored the effect of stabilization due to high damping through the use of the bound $\mathrm{e}^{-\gamma h} \leq 1$; as $\gamma \to \infty$, $\mathrm{e}^{-\gamma h}$ becomes close \textcolor{purple}{to} 0, reducing the discretization error.
The general strategy for bounding asymptotic bias by \citet{durmus_asymptotic_2024} also suffers from a similar problem.
For a free variable $n > 0$, their strategy unavoidably results in a bound that increases exponentially in time $t = h n$ with a factor of $\exp\left(t\right) = \exp\left(h n\right)$. (See the definition of $\epsilon(n)$ in Ex. 2; \citealp{durmus_asymptotic_2024}.)
In the underdamped regime, the exponential term can be controlled by upper-bounding $h$.
However, in the overdamped regime, time acceleration $t \to \infty$ makes the term uncontrollable, resulting in a vacuous bound.
This also means that results based on the strategy of~\citet{durmus_asymptotic_2024}, \textit{e.g.}, \citep{monmarche_highdimensional_2021,gouraud_hmc_2025,leimkuhler_contraction_2024}, will be ignorant of the precise dependence on $\gamma$.
Thus, the asymptotic bias analysis of other discretizations, such as OBABO or BAOAB~\citep{leimkuhler_rational_2013}, has potential for improvement.

\subsection{Sampling Complexity}\label{section:complexity}

Lastly, we present a non-asymptotic sampling complexity guarantee.
Since the scaling of the asymptotic bias of KLMC in the overdamped regime is strictly worse than in the underdamped regime, we restrict our interest to the latter.
The proof is a straightforward combination of the results in \cref{section:convergence_stationarity,section:asymptotic_bias}.
For any $\mu \in \mathcal{P}_2\left(\mathbb{R}^{2 d}\right)$,any $n \geq 0$, and the choice of $a = 4/\gamma^2$ and $b = 1/\gamma$, we have the decomposition 
\[
    \mathrm{W}_{a,b}\left(\mu K^n, \pi\right) 
    \leq 
    \underbrace{
    \mathrm{W}_{a,b}\left(\mu K^n, \pi_h\right)
    }_{\text{non-stationarity error}}
    + 
    \underbrace{
    \mathrm{W}_{a,b}\left(\pi_h, \pi\right) \; .
    }_{\text{asymptotic bias error}}
\]
Each term is bounded by invoking \cref{thm:special_case_convergence_bound} and  \cref{thm:asymptotic_bias_special_cases}, respectively.
For any given $\epsilon > 0$, solving for the smallest $n$ that ensures both terms are bounded by $\epsilon/2$ yields a sampling complexity result.

\begin{theoremEnd}[%
    category=complexity,
    text link={\noindent\textit{Proof.} The proof is deferred to \cref{section:proof_complexity}. \qed},
    text proof = {Proof of \string\pratendRef{thm:prAtEnd\pratendcountercurrent}}
]{theorem}\label{thm:complexity}
    Suppose \cref{asusmption:hessian_bounded} holds and suppose there exists some $\gamma$, $\eta$, $h_0$ such that \cref{assumption:condition_linear_approximation_contraction} is satisfied for all $h \in (0, h_0]$.
    Then, for any $\epsilon > 0$, 
    \[
        h
        = 
        \min\left\{  \;
            \sqrt{\frac{5}{4}} \frac{\epsilon^{1/2}}{d^{1/4} \kappa^{1/2} \eta^{1/4} \gamma^{1/2}}, \;
            \frac{1}{4 \sqrt{3}} \frac{\epsilon}{d^{1/2} \kappa \eta^{1/2}}, \;
            h_0
        \; \right\}
    \]
    and any number of iterations of at least
    \begin{align*}
        n
        \geq
        \max\left\{  \;
            \sqrt{5} \, \frac{\gamma^{3/2}  }{  \eta^{3/4}}
            \frac{d^{1/4} \kappa^{1/2}}{\alpha}
            \frac{1}{\epsilon^{1/2} }
            , \;
            8 \sqrt{3} \, \frac{\gamma}{\eta^{1/2}}
            \frac{d^{1/2} \kappa}{\alpha}
            \frac{1}{\epsilon}
            , \;
            h_0 \gamma
        \; \right\}
        \log\left( 3 {\mathrm{W}_{a,b}\left(\mu, \pi_h\right)} \frac{1}{\epsilon} \right) 
    \end{align*}
    guarantees that $\mathrm{W}_{a,b}\left(\mu K^n, \pi\right) \leq \epsilon$.
\end{theoremEnd}
\begin{proofEnd}
    Assuming there exists some $\zeta_0$ such that all $\zeta \leq \zeta_0$ satisfies \cref{assumption:condition_linear_approximation_contraction}, according to \cref{thm:Emom_Epos_condition}, the choice
    \begin{align*}
        & &
        \zeta 
        \quad&=\quad 
        \min\left\{  \;
            \sqrt{\frac{5}{4}} \frac{\epsilon^{1/2} \gamma^{1/2}}{d^{1/4} \kappa^{1/2} \eta^{1/4}}, \;
            \frac{1}{4 \sqrt{3}} \frac{\epsilon \gamma}{d^{1/2} \kappa \eta^{1/2}}, \;
            \zeta_0
        \; \right\}
        \; .
    \end{align*}
    ensure $E_{\mathrm{pos}} \leq \frac{\epsilon}{3} \land E_{\mathrm{mom}} \leq \frac{\epsilon}{3}$.
    For the number of steps $n$ guaranteeing $E_{\mathrm{stat}} \leq \frac{\epsilon}{3}$, it suffices to substitute our choice of $\zeta$ in \cref{thm:Estat_condition}.
    This yields  
    \begin{align*}
        n
        &\geq
        \max\left\{  \;
            \sqrt{5} \, \frac{\gamma^{3/2}  }{  \eta^{3/4}}
            \frac{d^{1/4} \kappa^{1/2}}{\alpha}
            \frac{1}{\epsilon^{1/2} }
            , \;
            8 \sqrt{3} \, \frac{\gamma}{\eta^{1/2}}
            \frac{d^{1/2} \kappa}{\alpha}
            \frac{1}{\epsilon}
            , \;
            \zeta_0
        \; \right\}
        \log\left( 3 {\mathrm{W}_{a,b}\left(\mu, \pi_h\right)} \frac{1}{\epsilon} \right) 
        \; .
    \end{align*}
\end{proofEnd}

Substituting $\gamma$, $\eta$, and $h_0$ with values that satisfy \cref{assumption:condition_linear_approximation_contraction} yields a more concrete complexity guarantee.
In particular, we retrieve previous sampling complexity guarantees for the KLMC in the underdamped regime.

\begin{corollary}\label{thm:explicit_iteration_complexity}
    For any $\epsilon > 0$ and the parameters $h$, $\gamma$, $\eta$ satisfying any of the following choices:
    \begin{enumerate}
        \item $\gamma = \sqrt{27 \beta}$, $\eta = 1$, $h = \mathrm{O}\left(\epsilon/\left(d^{1/2} \kappa \eta^{1/2}\right)\right)$
        
        \item $\gamma = \sqrt{27/2}$, $\eta = 1/(2 \beta)$, $h = \mathrm{O}\left(\alpha^{1/2} \epsilon / \left( d^{1/2} \kappa^{1/2}\right) \right)$
    \end{enumerate}
    and a number of iterations of at least
    \[
        n \geq \mathrm{O}\left( d^{1/2} \kappa^{3/2} \frac{1}{ \alpha^{1/2} \epsilon}  \log \frac{1}{ \alpha^{1/2} \epsilon}  \right) \; , 
    \]
    we have that $\mathrm{W}_{a,b}\left(\mu K^n, \pi\right) \leq \epsilon$.
\end{corollary}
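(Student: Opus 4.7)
The plan is a direct substitution argument: for each of the two parameter choices, verify that \cref{assumption:condition_linear_approximation_contraction} is satisfied for some explicit $h_0 > 0$, then plug the values of $\gamma$, $\eta$ into the step size and iteration-count expressions of \cref{thm:complexity} and identify the dominant term.

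For the verification step, I would use the sufficient condition derived in the discussion following \cref{assumption:condition_linear_approximation_contraction}, namely that $\gamma \geq \sqrt{9 \beta \eta}$ together with $h \leq \sqrt{3/(4 \beta \eta) - 27/(4 \gamma^2)}$ imply the assumption. For Choice 1, substituting $\gamma^2 = 27 \beta$ and $\eta = 1$ gives $\gamma \geq 3 \sqrt{\beta}$ and a threshold $h_0 = 1/\sqrt{2 \beta}$. For Choice 2, substituting $\gamma^2 = 27/2$ and $\eta = 1/(2 \beta)$ gives exactly $h_0 = 1$, as already noted in the paper's discussion immediately after \cref{assumption:condition_linear_approximation_contraction}.

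Next, I would carry out the substitution in \cref{thm:complexity}. The step size in that theorem is the minimum of three terms: $\sqrt{5/4} \, \epsilon^{1/2} / (d^{1/4} \kappa^{1/2} \eta^{1/4} \gamma^{1/2})$, $\epsilon / (4 \sqrt{3} \, d^{1/2} \kappa \eta^{1/2})$, and $h_0$. For small enough $\epsilon$ (precisely, for $\epsilon \lesssim 1/(d^{1/2} \kappa)$), the second term dominates the minimum, matching the $h = \mathrm{O}(\epsilon/(d^{1/2} \kappa \eta^{1/2}))$ claim for Choice 1. For Choice 2, the same second term gives, using $\eta^{1/2} = 1/\sqrt{2 \beta}$ and $\beta^{1/2}/\kappa = \alpha^{1/2}/\kappa^{1/2}$, the stated $h = \mathrm{O}(\alpha^{1/2} \epsilon/(d^{1/2} \kappa^{1/2}))$.

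The iteration count in \cref{thm:complexity} similarly is a maximum of three terms, two of which depend on $\epsilon$, scaling as $\gamma^{3/2}/(\eta^{3/4} \epsilon^{1/2})$ and $\gamma/(\eta^{1/2} \epsilon)$. For small $\epsilon$ the second term dominates, so the key quantity is $\gamma/\eta^{1/2}$. A short calculation gives $\gamma/\eta^{1/2} = \sqrt{27 \beta}$ for both choices: directly for Choice 1, and via $\sqrt{27/2} \cdot \sqrt{2 \beta} = \sqrt{27 \beta}$ for Choice 2. Combined with the prefactor $d^{1/2} \kappa/\alpha$, and using $\beta^{1/2}/\alpha = \kappa^{1/2}/\alpha^{1/2}$, this gives $\mathrm{O}(d^{1/2} \kappa^{3/2}/(\alpha^{1/2} \epsilon))$, matching the claim. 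The logarithmic factor $\log(3 \mathrm{W}_{a,b}(\mu, \pi_h)/\epsilon)$ is absorbed into $\log(1/(\alpha^{1/2} \epsilon))$ up to constants, since $\alpha^{1/2} \mathrm{W}_{a,b}(\mu, \pi_h)$ is a fixed problem-dependent constant. No real obstacle is expected beyond careful bookkeeping of the $\kappa$ and $\alpha$ exponents.
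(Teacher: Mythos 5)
Your proposal is correct and takes the same approach as the paper intends: the corollary has no separate proof in the paper and is a direct substitution into \cref{thm:complexity}, which is precisely what you carried out. Your arithmetic for both parameter choices checks out (the identities $\gamma/\eta^{1/2} = \sqrt{27\beta}$ in both cases, $\beta^{1/2}/\kappa = \alpha^{1/2}\kappa^{-1/2}$, and $\beta^{1/2}/\alpha = \kappa^{1/2}\alpha^{-1/2}$ are all right), and your verification that the two parameter settings satisfy \cref{assumption:condition_linear_approximation_contraction} with $h_0 = 1/\sqrt{2\beta}$ and $h_0 = 1$ respectively is exactly the reasoning sketched in the paper's discussion after \cref{assumption:condition_linear_approximation_contraction}.
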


For the dimensionless target condition $ \mathrm{W}_{a,b}\left(\mu K^n, \pi\right) \leq \alpha^{-1/2} \epsilon$, \cref{thm:explicit_iteration_complexity} provides an iteration complexity of $\mathrm{O}\left( d^{1/2} \kappa^{3/2} \epsilon^{-1} \log \epsilon^{-1} \right)$, which matches known results~\citep{sanz-serna_wasserstein_2021,dalalyan_sampling_2020,altschuler_shifted_2025}.

\section{Discussions}\label{section:discussions}
In this work, we have presented new results on the Wasserstein contraction (\cref{section:convergence_stationarity}) and asymptotic bias (\cref{section:asymptotic_bias}) of kinetic Langevin dynamics discretized via the stochastic exponential Euler scheme. 
Our results are in accordance with past results~\citep{sanz-serna_wasserstein_2021,dalalyan_sampling_2020,altschuler_shifted_2025} in the underdamped regime, but are general enough to accommodate all of the different parameter choices in past works,
for example, large friction $\gamma \asymp \sqrt{\beta}$ with fixed inverse mass $\eta \asymp 1$~\citep{dalalyan_sampling_2020,altschuler_shifted_2025}
 versus fixed friction $\gamma \asymp 1$ with small inverse mass $\eta \asymp 1/\beta$~\citep{sanz-serna_wasserstein_2021}.
Furthermore, we have extended the convergence guarantees of KLMC to the overdamped regime ($\gamma \to \infty$).
This demonstrates that the exponential integrator does \textit{not} degenerate in the overdamped limit as long as proper time scaling $h \propto \gamma$ is applied.
This contrasts with the conclusion of \citet{leimkuhler_contraction_2024}, as their analysis required stronger constraints on the parameters, which prevented time acceleration.
While our results are primarily of theoretical interest, they yield a practical conclusion: For a given $\gamma$, larger step sizes ($h \propto \gamma$) can be used than previously prescribed by the theory ($h \propto 1/\gamma$).
One limitation of our analysis, however, is that the benefits of larger step sizes did not materialize in the complexity analysis (\cref{thm:explicit_iteration_complexity}).
This is because the asymptotic bias bound in \cref{thm:asymptotic_bias} is too complicated to be used in the complexity analysis, and the approximation in \cref{thm:asymptotic_bias_special_cases} had to be used.

Now that we know the fact that the exponential integrator does not degenerate as long as proper time acceleration is applied, it is rather surprising that the OBABO and BAOAB discretization of the kinetic Langevin diffusion reduces to LMC \textit{without} any sort of explicit time scaling of the parameters~\citep{leimkuhler_contraction_2024}.
This suggests that, somehow, splitting schemes implicitly experience automatic time acceleration, which would be interesting to identify in their respective analysis.
Another related future direction, which was raised by one of the reviewers during review, is whether we could systematically check the non-degeneracy of discretization schemes.
\textit{i.e.} come up with a general condition that tests whether a specific discretization scheme achieves a non-degenerate overdamped limit.
As this work focuses specifically on the exponential integration scheme, we leave these to future work.

Furthermore, our bounds on asymptotic bias revealed a more precise dependence on the friction $\gamma$.
In particular, the bias increases with $\gamma$ in the underdamped regime, whereas it decreases with $\gamma$ in the overdamped regime.
It would be interesting to see if a similar dependence on $\gamma$ exhibits in the asymptotic bias of alternative discretization schemes.
Furthermore, our bounds on the asymptotic bias exhibit a clear phase transition from the overdamped regime to the underdamped regime.
Numerically solving the critical point suggests that the transition happens around $h \gamma = 1.69$.
It would be curious to see if this phase transition also consistently happens at $h \gamma \approx 1.69$ for other discretizations as well.

\newpage
\section{Proofs}\label{section:proofs}
\subsection{Wasserstein Contraction Analysis}\label{section:proof_general_contraction}

This section will present the proof for \cref{thm:general_contraction} as well as the necessary components for the proof.
The proof follows a synchronous Wasserstein coupling strategy similarly to that of \citet{leimkuhler_contraction_2024}, and can be viewed as a refinement of their strategy.
Consider two Markov chains ${(Z_k = (X_k, V_k))}_{k \geq 0}$ and ${(Z_k' = (X_k', V_k'))}_{k \geq 0}$ following the update in \cref{eq:klmc_zeta_update} with a shared noise process ${(\xi^X_k, \xi^V_k)}_{k \geq 1}$.
That is, for each $k \geq 0$, 
\begin{align*}
    \begin{cases}
	X_{k+1} &= X_k + \left( 1 - \delta \right) \left(\frac{1}{\gamma} V_k \right) - \left(\zeta + \delta - 1\right) \left(\frac{\eta }{\gamma^2} \nabla U\left(X_k\right) \right) + \xi^X_{k+1}
	\\
	V_{k+1} &= \delta V_k - \left(1 - \delta\right) \left( \frac{\eta}{\gamma} \nabla U\left(X_k\right) \right) + \xi^V_{k+1}
    \end{cases}
    \\
    \begin{cases}
	X_{k+1}^{\prime} &= X_k^{\prime} + \left( 1 - \delta \right) \left(\frac{1}{\gamma} V_k^{\prime} \right) - \left(\zeta + \delta - 1\right) \left(\frac{\eta }{\gamma^2} \nabla U\left(X_k^{\prime}\right) \right) + \xi^X_{k+1}
	\\
	V_{k+1}^{\prime} &= \delta V_k^{\prime} - \left(1 - \delta\right) \left( \frac{\eta}{\gamma} \nabla U\left(X_k^{\prime}\right) \right) + \xi^V_{k+1} \; .
    \end{cases}
\end{align*}
Since the two processes ${(Z_k)}_{k \geq 0}$ and ${(Z_k^{\prime})}_{k \geq 0}$ are sharing the same noise process ${(\xi^X_k, \xi^V_k)}_{k \geq 1}$, they are \textit{synchronously  coupled}.
The proof is dedicated to establishing that, for all $k \geq 0$, a contraction holds for some fixed $c > 0$ holds for the norm $\norm{\cdot}_{a,b}$ introduced in \cref{section:norm} as
\begin{equation}
    \norm{Z_{k+1} - Z_{k+1}'}_{a,b}^2 \leq \left(1 - c\right) \norm{Z_k - Z_k'}_{a,b}^2 \; .
    \label{eq:one_step_contraction}
\end{equation}

As the two Markov chains ${(Z_k)}_{k \geq 0}$ and ${(Z_k')}_{k \geq 0}$ are synchronously coupled, conditional on the noise sequence ${(B_k)}_{k \geq 0}$ and a corresponding linear operator $S_k$, the difference sequence ${(\bar{Z}_k = Z_k - Z_k')}_{k \geq 0}$ is a deterministic time-variant linear dynamical system
\[
    \bar{Z}_{k+1} \quad=\quad S_k \bar{Z}_k \; .
\]
Note that $S_k$ is dependent on the states $Z_k$ and $Z_k'$, and the parameters $h$, $\eta$, and $\gamma$.
The contraction in \cref{eq:one_step_contraction} follows from a Lyapunov analysis of the operator $S_k$.
Consider the quadratic Lyapunov function 
\[
    z \mapsto \norm{z}_{a,b}^2 = z^{\top} G z \; , \quad\text{where}\quad G \triangleq \begin{bmatrix}
        1 & b \\
        b & a \\
    \end{bmatrix} \; .
\]
Due to the relationship 
\begin{align*}
    \norm{\bar{Z}_{k+1}}_{a,b}^2 
    \quad=\quad
    \norm{S_k \bar{Z}_{k}}_{a,b}^2 
    \quad=\quad
    \bar{Z}_{k}^{\top} S_k^{\top} G S_k \bar{Z}_{k} \; , 
\end{align*}
the existence of $c > 0$ such that \cref{eq:one_step_contraction} holds can be reduced to solving a special case of the ``discrete-time Lyapunov equation''~\citep[\S 6.E]{antsaklis_linear_2006}
\begin{equation}
    S_k^{\top} G S_k - G \;\preceq\; - c G \; .
    \label{eq:lyapunov_equation}
\end{equation}
In particular, the feasibility of this matrix inequality implies a strong form of Lyapunov stability known as exponential stability.
(See also Exercise 6.10 by~\citet{antsaklis_linear_2006}, which presents the continuous-time analog of this statement.)
This is also the condition identified by~\citet[Prop. 4.6]{sanz-serna_wasserstein_2021}, although they didn't draw connections with Lyapunov analysis of linear systems.

In our case, establishing \cref{eq:lyapunov_equation} corresponds to checking the positive-definiteness of a $2 \times 2$ block matrix comprised of the blocks $A_k, B_k, C_k \in \mathbb{R}^{d \times d}$,
\begin{align}
    \left(1 - c\right) G - S_k^{\top} G S_k
    \quad\triangleq\quad
    \begin{bmatrix}
        A_k & B_k \\
        B_k & C_k
    \end{bmatrix}  \; .
    \label{eq:block_matrix}
\end{align}
While \citet{sanz-serna_wasserstein_2021} avoided directly analyzing the spectrum of this matrix, \citet{leimkuhler_contraction_2024} relied on the following lemma to enable a direct analysis:
\begin{lemma}\label{thm:lyapunov_equation_block_analysis}
    Suppose the block matrices in \cref{eq:block_matrix} are symmetric and commutative.
    Then, there exists a constant $c > 0$ satisfying \cref{eq:lyapunov_equation} for all $k \geq 0$ if and only if $A_k \succ 0$ and $A_k C_k - B_k^2 \succeq 0$ for all $k \geq 0$.
\end{lemma}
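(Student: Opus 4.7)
My plan is to reduce positive semi-definiteness of the $2d \times 2d$ block matrix in \cref{eq:block_matrix} to a scalar condition by exploiting the commutativity hypothesis. The cleanest route is the Schur complement, which directly produces the asymmetric pair $A_k \succ 0$ together with $A_k C_k - B_k^2 \succeq 0$ rather than the more symmetric alternative $A_k \succeq 0$, $C_k \succeq 0$, $A_k C_k - B_k^2 \succeq 0$.

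First I would invoke the Schur complement identity: for a symmetric block matrix of the given form with $A_k \succ 0$, positive semi-definiteness of the full matrix is equivalent to $C_k - B_k A_k^{-1} B_k \succeq 0$. Under the commutativity hypothesis, $A_k^{-1}$ exists whenever $A_k \succ 0$ and commutes with both $B_k$ and $C_k$, so the Schur complement simplifies to
\[
C_k - B_k A_k^{-1} B_k \;=\; A_k^{-1}\bigl(A_k C_k - B_k^2\bigr).
\]
The right-hand side is a product of two commuting symmetric matrices with $A_k^{-1} \succ 0$, hence by simultaneous diagonalization it is positive semi-definite iff $A_k C_k - B_k^2 \succeq 0$. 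Chaining these equivalences yields, for each fixed $k$, the claimed characterization of positive semi-definiteness of the block matrix. The quantifier ``there exists $c > 0$ uniform in $k$'' then transfers across the equivalence without further work.

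As a cross-check I would also sketch the direct simultaneous-diagonalization proof: since $A_k$, $B_k$, $C_k$ are symmetric and pairwise commuting, the spectral theorem for commuting families supplies an orthogonal $U_k \in \mathbb{R}^{d \times d}$ diagonalizing all three with joint eigenvalues $\alpha_k^{(i)}, \beta_k^{(i)}, \gamma_k^{(i)}$. Conjugating by $\mathrm{diag}(U_k, U_k)$ and permuting the basis so that eigenvectors sharing the same index are grouped reduces the $2d \times 2d$ matrix to a block-diagonal sum of $d$ symmetric $2 \times 2$ matrices, each of which is PSD iff $\alpha_k^{(i)} \geq 0$ and $\alpha_k^{(i)} \gamma_k^{(i)} - (\beta_k^{(i)})^2 \geq 0$. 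The strict positivity $A_k \succ 0$ corresponds to a uniform positive gap in the first scalar inequality, which is precisely what the strict contraction $c > 0$ in \cref{eq:lyapunov_equation} buys.

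I do not anticipate a major obstacle: the argument rests on the ``commuting positive product'' fact, namely that if $P \succ 0$ is symmetric and $Q$ is symmetric and commutes with $P$, then $PQ \succeq 0 \Leftrightarrow Q \succeq 0$, which is immediate from simultaneous diagonalization. The only items to verify carefully are that $A_k$ and $A_k C_k - B_k^2$ are themselves symmetric (the latter because $A_k$ and $C_k$ commute, so $A_k C_k = C_k A_k$) and that the coordinate permutation used in the block-diagonalization step is bookkept correctly; both are routine.
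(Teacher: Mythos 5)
Your proposal is correct and follows essentially the same route as the paper: reduce PSD of the block matrix to the Schur complement $C_k - B_k A_k^{-1} B_k \succeq 0$, then use commutativity to rewrite this as $A_k^{-1}(A_k C_k - B_k^2) \succeq 0$ and strip off the positive commuting factor $A_k^{-1}$. You are in fact slightly more careful than the paper's proof, which justifies the last step merely as ``multiplying $A \succ 0$ on the left'' — an operation that preserves positive semidefiniteness only because of the commutativity you make explicit via the commuting-positive-product fact; your simultaneous-diagonalization cross-check is the same observation phrased at the scalar level.
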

\begin{proof}
    Any block matrix of the form of \cref{eq:block_matrix} is positive semidefinite if and only if $A_k \succ 0$ and its Schur complement is positive semidefinite such that $C_k - B_k^{\top} A_k^{-1} B_k \succeq 0$ \citep[\S A.5.5]{boyd_convex_2004}.
    Since the blocks $A_k, B_k, C_k$ are symmetric and commutative, 
    \[
        C_k - B_k^{\top} A_k^{-1} B_k \succeq 0
        \qquad\Leftrightarrow\qquad
        C_k - A_k^{-1} B_k^{\top} B_k \succeq 0 \; .
    \]
    Multiplying $A \succ 0$ on the left of both hand sides, 
    \[
        C_k - A_k^{-1} B_k^{\top} B_k \succeq 0
        \qquad\Leftrightarrow\qquad
        A_k C_k - B_k^2 \succeq 0 \; ,
    \]
    which is the stated result.
\end{proof}
Therefore, it is sufficient to show that, for any $k \geq 0$ and some $h, \eta, \gamma, c > 0$, the conditions 
\begin{equation}
    A_k \succ 0 \quad\text{and}\quad A_k C_k - B_k^2 \succeq 0 
    \label{eq:block_spectrum_condition}
\end{equation}
hold.
The derivation of the blocks is presented in \cref{section:derivation_blocks}, where the proof of \cref{thm:general_contraction}, which consists of establishing \cref{eq:block_spectrum_condition}, proceeds in \cref{section:proof_general_contraction_proof}.

\subsubsection{Derivation of $A_k$, $B_k$, and $C_k$}\label{section:derivation_blocks}
In this section, we compute the blocks $A_k$, $B_k$, and $C_k$.
Since $U$ is twice differentiable under \cref{asusmption:hessian_bounded}, we can invoke the fundamental theorem of calculus, which, for all \(x, x' \in \mathbb{R}^d\), yields the identity
\[
    \nabla U\left(x\right) - \nabla U\left(x^{\prime}\right) = H \left(x - x^{\prime}\right)
    \quad\text{with}\quad
    H\left(x, x'\right) \triangleq \int^1_0 \nabla^2 U\left(x + t \left(x^{\prime} - x\right) \right) \mathrm{d}t \; .
\]
Define $H_k \triangleq H(X_k, X_k')$.
Then the difference sequence satisfies
\begin{align*}
    \bar{Z}_{k+1}
    &=
    \begin{bmatrix}
	  \left( X_k - X^{\prime}_k \right) + \frac{1 - \delta}{\gamma} \left(V_k - V^{\prime}_k\right) - \eta \frac{\zeta + \delta - 1}{\gamma^2} \left( \nabla U\left(X_k\right) - \nabla U\left(X_k^{\prime}\right) \right)
        \\
        \delta \left(V_k - V_k^{\prime}\right) - \eta \frac{1 - \delta}{\gamma} \left( \nabla U\left(X_k\right) - \nabla U\left(X_k^{\prime}\right) \right)
    \end{bmatrix}
    \\
    &=
    \begin{bmatrix}
	\left( X_k - X^{\prime}_k \right) + \frac{1 - \delta}{\gamma} \left(V_k - V^{\prime}_k\right) - \eta \frac{\zeta + \delta - 1}{\gamma^2} H_k \left( X_k - X_k^{\prime} \right)
	\\
	\delta \left(V_k - V_k^{\prime}\right) - \eta \frac{1 - \delta}{\gamma} H_k \left( X_k - X_k^{\prime} \right)
    \end{bmatrix}
    \\
    &=
    \begin{bmatrix}
	  \mathrm{I}_d - \eta \frac{\zeta + \delta - 1}{\gamma^2} H_k & \frac{1 - \delta}{\gamma} \; \mathrm{I}_d
        \\
        - \eta \frac{1 - \delta}{\gamma} H_k & \delta \; \mathrm{I}_d
    \end{bmatrix} 
    \bar{Z}_k \; .
\end{align*}
Therefore, the time-variant transition operator of the difference sequence is
\[
    S_k = \begin{bmatrix}
    \mathrm{I}_d - \eta \frac{\zeta + \delta - 1}{\gamma^2} H_k & \frac{1 - \delta}{\gamma} \; \mathrm{I}_d
    \\
    - \eta \frac{1 - \delta}{\gamma} H_k & \delta \; \mathrm{I}_d 
    \end{bmatrix}.
\]

Since $H_k$ is symmetric, it is diagonalizable for all $k \geq 0$.
Furthermore, all blocks in $S_k$ only involve $H_k$ and $\mathrm{I}_d$, meaning that they are all diagonalizable with the same eigenvectors, which also implies that the blocks commute.
Using this fact and substituting $a = 4/\gamma^2$ and $b = 1/\gamma$ in $G$, symbolic computation shows that the matrices $A_k$, $B_k$, and $C_k$ are symmetric and follow as
\begin{align*}
    A_k
    &=
    \left(- \zeta^2 - 3 {\left( \delta - 1 \right)}^2 \right) {\left( \frac{\eta}{\gamma^2} H_k \right)}^2
    + 
    2 \gamma h \left( \frac{\eta}{\gamma^2}  H_k \right) - c \mathrm{I}_d 
    \\
    B_k
    &=
    \frac{1}{\gamma}
    \left(
        \zeta - 3 \delta^2 + 3 \delta
    \right)
    \left( \frac{\eta}{\gamma^2} H_k \right)  - \frac{1}{\gamma} c \mathrm{I}_d
    \\
    C_k
    &=
    -\frac{1}{\gamma^{2}} \left(4 c - 3 \left(1 - \delta^{2} \right)\right) \mathrm{I}_d \; .
\end{align*}
(The material for replicating the symbolic computation results is available as supplementary material. See \cref{section:supplementary} for details.)

Furthermore, again from symbolic computation, $A_k C_k - B_k^2$ follows as
\begin{align*}
    A_k C_k - B_k^2
    &=
    \frac{1}{\gamma^2}
    \Bigg\{
    \left(12 c {\left(1 - \delta\right)}^2 - 4 \zeta^{2} \left(1 - c\right) + 3 \zeta^{2} \delta^{2} - 6 \zeta \left(\delta - \delta^{2} \right) - 9 {\left(1 - \delta\right)}^2 \right) { \left( \frac{\eta}{\gamma^2} H_k \right) }^2 
    \\
    &\qquad\qquad
    + 6 \left(c \left(\delta - \delta^{2}\right) + \zeta \left(1 - c \right) - \zeta \delta^{2}\right) \left( \frac{\eta}{\gamma^2} H_k \right)
    + \left(3 c^{2} - 3 c \left(1 - \delta^{2}\right)\right)\mathrm{I}_d
    \Bigg\} \; .
\end{align*}
We will now proceed to establish the positive-definiteness of $A_k$ and $A_{k} C_k - B_k^2$.

\subsubsection{Proof of \cref{thm:general_contraction}}\label{section:proof_general_contraction_proof}

Under \cref{asusmption:hessian_bounded}, all of the eigenvalues of $H_k$ are strictly positive.
This implies that the $p$th eigenvalue of $A_k$ and $A_k C_k - B_k^2$ follow as
\begin{align*}
    &
    \sigma_{p}\left(A_k\right)
    =
    \left(- \zeta^2 - 3 {\left( \delta - 1 \right)}^2 \right) {\left( \frac{\eta \sigma_p\left(H_k\right)}{\gamma^2} \right)}^2
    + 
    2 \zeta {\left( \frac{\eta \sigma_p\left(H_k\right)}{\gamma^2} \right)} - c 
    \\
    &
    \sigma_{p}\left(A_k C_k - B_k^2\right)
    =
    \\
    & \qquad
    \frac{1}{\gamma^2}
    \Bigg\{
    \left(12 c {\left(1 - \delta\right)}^2 - 4 \zeta^{2} \left(1 - c\right) + 3 \zeta^{2} \delta^{2} - 6 \zeta \left(\delta - \delta^{2} \right) - 9 {\left(1 - \delta\right)}^2 \right) { \left( \frac{\eta \sigma_p\left(H_k\right)}{\gamma^2}  \right) }^2 
    \\
    &\qquad\qquad\qquad\qquad 
    + 6 \left(c \left(\delta - \delta^{2}\right) + \zeta \left(1 - c \right) - \zeta \delta^{2}\right) \left( \frac{\eta \sigma_p\left(H_k\right)}{\gamma^2} \right)
    + 3 c^{2} - 3 c \left(1 - \delta^{2}\right)
    \Bigg\}  
    \; .
\end{align*}
For notational convenience, consider the functions
\begin{align*}
    \chi_A\left(r, \zeta, \gamma, c\right)
    &\triangleq
    \left(- \zeta^2 - 3 {\left( 1 - \delta \right)}^2 \right) r^2
    + 
    2 \zeta r - c 
    \\
    \chi_{A C - B^2}\left(r, \zeta, \gamma, c\right)
    &\triangleq
    \frac{1}{\gamma^2}
    \Bigg\{
    \left(12 c {\left(1 - \delta\right)}^2 - 4 \zeta^{2} \left(1 - c\right) + 3 \zeta^{2} \delta^{2} - 6 \zeta \left(\delta - \delta^{2} \right) - 9 {\left(1 - \delta\right)}^2 \right) r^2
    \\
    &\qquad\qquad
    + 6 \left(c \left(\delta - \delta^{2}\right) + \zeta \left(1 - c \right) - \zeta \delta^{2}\right) r
    + 3 c^{2} - 3 c \left(1 - \delta^{2}\right)
    \Bigg\}
    \; .
\end{align*}
These functions characterize the spectrum of $A_k$ and $A_k C_k - B_k^2$ via the relationship
\begin{equation}
    \begin{split}
    \chi_A\left( R\left(\sigma_p(H_k)\right), \zeta, \gamma, c\right)
    &=
    \sigma_p\left(A_k\right)  \; ,
    \\
    \chi_{AC-B^2}\left( R\left(\sigma_p(H_k)\right), \zeta, \gamma, c\right)
    &=
    \sigma_p\left(A_k C_k-B_k^2\right) \; .
    \label{eq:chi_spectrum_equivalence}
    \end{split}
\end{equation}
Therefore, analyzing $\chi_A$ and $\chi_{AC-B^2}$ sufficiently characterizes the spectrum of $A_k$ and $A_k C_k - B_k^2$, respectively.
Specifically, we are interested in the conditions on $h, \gamma, r, c$ that guarantees $\chi_A > 0$ and $\chi_{AC-B^2} \geq 0$.
For this, we need to analyze $\chi_A$ and $\chi_{AC-B^2}$ in detail.


Let's begin with $\chi_{AC - B^2}$.
Notice that it can be rewritten as
\begin{align*}
    &
    \chi_{A C - B^2}\left(r, \zeta, \gamma, c\right)
    \\
    &
    \qquad\triangleq 
    \frac{1}{\gamma^{2}}
    \Big\{
    3 c^{2} + \left(4 \left(3 {\left( 1 - \delta \right)}^2 + \zeta^{2}\right) r^{2}  
    + 6 \left(\left( \delta - \delta^{2} \right) - \zeta\right) r - 3 \left(1 - \delta^{2} \right) \right) c
    \\
    & &
    \mathllap{
    +
    \left(- 9 {\left( 1 - \delta \right)}^2 + \zeta^{2} \left(3 \delta^{2} - 4\right) - 6 \zeta \left( \delta - \delta^{2} \right)\right) r^{2} + 6 \zeta \left(1 - \delta^{2}\right) r
    \Big\}
    }
    \; .
\end{align*}
It is apparent that this is a quadratic function with respect to \(c\).
Furthermore, this quadratic always has two distinct real roots:

\begin{theoremEnd}[%
    restate,
    text proof={},
    category=ACmB2distinctroots,
    text link={\noindent\textit{Proof}. The proof is deferred to \cref{section:proof_ACmB2_distinct_roots}.\qed}
]{lemma}\label{thm:ACmB2_distinct_roots}
    For any \(r, \zeta, \gamma > 0\), the equation \( \chi_{AC - B^2}\left(r, \zeta, \gamma, c\right) = 0 \)
    has two distinct real roots with respect to \(c \in \mathbb{R}\).
\end{theoremEnd}
\begin{proofEnd}
The discriminant of \(c \mapsto \chi_{A C - B^2}\left(r, \zeta, \gamma, c\right) \) is given as
\[
    \mathrm{disc}\left( c \mapsto \chi_{A C - B^2}\left(r, \zeta, \gamma, c\right)
     \right) 
    = 
    p_2\left(r\right) p_3\left(r\right) \; ,
\]
where both $p_2$ and $p_3$ are quadratics in $r$.
Since the leading coefficient of $p_2$ and $p_3$ are \(a_1 > 0\), both are convex.
Now, if a convex quadratic in \(r \in \mathbb{R}\) has a strictly negative discriminant, it is strictly positive for all values of \(r \in \mathbb{R}\).

The discriminant of $p_2$ can be shown to be strictly negative since
\begin{align*}
    \mathrm{disc}\left(p_2\right)
    &=
    b_2^2 - 4 a_1 e_2
    \\
    &=
    {\left( \zeta \left(1 + \delta\right) - {\left(1 - \delta^2\right)} \right)}^2
    -
    4 \left( \frac{2}{3} \zeta^{2} + 2 \left(1 - \delta\right)^{2} \right) \left( \frac{1}{2} {\left(1 + \delta\right)}^2 \right)
    \\
    &=
    {\left(1 - \delta\right)}^2
    \left\{
        {\left( \zeta \, \frac{1 + \delta}{1 - \delta} - \left(1 + \delta\right) \right)}^2
        -
        2 \left( \frac{2}{3} \zeta^{2} + 2 \left(1 - \delta\right)^{2}  \right) {\left( \frac{ 1 + \delta }{ 1 - \delta } \right)}^2
    \right\}
    \\
    &=
    {\left(1 - \delta\right)}^2
    \left\{
        \zeta^2 {\left( \frac{ 1 + \delta }{ 1 - \delta } \right)}^2 - 2 \zeta \frac{{\left( 1 + \delta \right)}^2}{1 - \delta} + {\left( 1 + \delta \right)}^2
        -
        \frac{4}{3} \zeta^{2} {\left( \frac{ 1 + \delta }{ 1 - \delta } \right)}^2 - 4 \left(1 + \delta\right)^{2} 
    \right\}
    \\
    &=
    {\left(1 - \delta\right)}^2
    \left\{
        -\frac{1}{3} \zeta^2 {\left( \frac{ 1 + \delta }{ 1 - \delta } \right)}^2
        - 2 \zeta \frac{ {\left( 1 + \delta \right)}^2 }{1 - \delta}
        - 3 \left(1 + \delta\right)^{2}
    \right\}
    \\
    &< 0 \; ,
\end{align*}
while that of $p_3$ also turns out to be strictly negative since
\begin{align*}
    \mathrm{disc}\left(p_3\right)
    &=
    b_3^2 - 4 a_1 e_3
    \\
    &=
    {\left( \zeta \left(1 - \delta\right) + {\left(1 - \delta\right)}^2 \right)}^2
    -
    4 \left( \frac{2}{3} \zeta^{2} + 2 \left(1 - \delta\right)^{2} \right) \left( \frac{1}{2} {\left(1 - \delta\right)}^2 \right)
    \\
    &=
    {\left(1 - \delta\right)}^2
    \left\{
        {\left( \zeta + \left(1 - \delta\right) \right)}^2
        -
        2 \left( \frac{2}{3} \zeta^{2} + 2 \left(1 - \delta\right)^{2} \right)
    \right\}
    \\
    &=
    {\left(1 - \delta\right)}^2
    \left\{
        \zeta^2 + 2 \zeta \left(1 - \delta\right) + {\left( 1 - \delta \right)}^2
        -
        \frac{4}{3} \zeta^{2} - 4 \left(1 - \delta\right)^{2}
    \right\}
    \\
    &=
    {\left(1 - \delta\right)}^2
    \left\{
        - \frac{1}{3} \zeta^{2} 
        + 2 \zeta \left( 1 - \delta \right) - 3 {\left( 1 - \delta \right)}^2 
    \right\}
    \\
    &=
    {\left(1 - \delta\right)}^2
    \left\{
        - {\left( 
            \frac{1}{\sqrt{3}} \zeta
            - \sqrt{3} \left( 1 - \delta\right)
        \right)}^2
    \right\}
    \\
    &\;< 0.
\end{align*}
Therefore, $p_1 > 0$ and $p_3 > 0$.
Evidently, this implies that $\mathrm{disc}(c \mapsto \chi_{AC-B^2}\left(r, \zeta, \gamma, c\right))$ is always strictly positive and that $c \mapsto \chi_{AC-B^2}\left(r, \zeta, \gamma, c\right)$ always has two distinct roots.
\end{proofEnd}

By the quadratic formula, the roots can be found via symbolic computation as
\begin{align*}
    c^{\pm}\left(r, \zeta\right)
    \quad=\quad
    p_1\left(r\right) \; \pm \; \sqrt{ p_2\left(r\right) p_3\left(r\right) }
\end{align*}
with the quadratics $p_1$, $p_2$, and $p_3$ defined in the proof statement.
For convenience, let us restate the constants in the $\zeta$ parametrization:
\begin{alignat*}{5}
	a_1 &= \frac{2}{3} \zeta^2 + 2 {\left(1 - \delta\right)}^2 \; ,
	\qquad
	&&b_1 &&= \zeta - \left(\delta - \delta^2\right) \; ,
	\qquad
	&&e_1 &&= \frac{1}{2} \left(1 - \delta^2\right) \; ,
	\qquad
	\\
	&
	\qquad
	&&b_2 &&= - \zeta \left(1 + \delta\right) + \left(1 - \delta^2\right) \; ,
	\qquad
	&&e_2 &&= \frac{1}{2} {\left(1 + \delta\right)}^2 \; ,
	\\
	&
	\qquad
	&&b_3 &&= -\zeta \left(1 - \delta\right) - {\left(1 - \delta\right)}^2 \; ,
	\qquad
	&&e_3 &&= \frac{1}{2} {\left(1 - \delta\right)}^2 \; .
\end{alignat*}

Since the leading coefficient of $c \mapsto \chi_{AC-B^2}(r, \zeta, \gamma, c)$ is positive, $c \mapsto \chi_{AC-B^2}(r, \zeta, \gamma, c)$ is a convex quadratic. 
Therefore, its left root
\begin{align}
    c^-(r, \zeta) \quad=\quad 
    p_1\left(r\right) - \sqrt{ p_2\left(r\right) p_3\left(r\right) } \;
    \label{eq:cminus}
\end{align}
identifies the region where $\chi_{AC-B^2}$ is positive.

\begin{lemma}\label{thm:ACmB2_nonnegative}
    For any $r, \zeta, \gamma > 0$ and all $c \in (-\infty, c^-\left(r, \zeta\right)]$, $\chi_{AC-B^2}\left(r, \zeta, \gamma, c\right) \geq 0$.
\end{lemma}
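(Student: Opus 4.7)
The proof is essentially immediate from what has already been established, so the plan is short. First, I would isolate the dependence on $c$: rewriting
\[
    \chi_{AC-B^2}(r,\zeta,\gamma,c) = \frac{1}{\gamma^2}\bigl( 3 c^2 + \beta_1(r,\zeta)\, c + \beta_0(r,\zeta) \bigr),
\]
I would point out that the leading coefficient $3/\gamma^2$ is strictly positive, so as a function of $c$ this is a strictly convex quadratic.

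Next, invoking \cref{thm:ACmB2_distinct_roots}, the quadratic $c \mapsto \chi_{AC-B^2}(r,\zeta,\gamma,c)$ has two distinct real roots. Applying the quadratic formula (as done immediately before the lemma statement), these roots are exactly
\[
    c^\pm(r,\zeta) = p_1(r) \pm \sqrt{p_2(r)\, p_3(r)},
\]
with $c^-(r,\zeta) < c^+(r,\zeta)$. A strictly convex quadratic is nonnegative precisely on the complement of the open interval between its roots, i.e.\ on $(-\infty, c^-(r,\zeta)] \cup [c^+(r,\zeta), +\infty)$. In particular, for every $c \leq c^-(r,\zeta)$ we have $\chi_{AC-B^2}(r,\zeta,\gamma,c) \geq 0$, which is the claim.

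There is no real obstacle here: the statement is just the ``outside-the-roots'' part of the sign analysis of a convex quadratic, and both inputs (convexity in $c$ and existence of two distinct real roots with the stated closed-form) are already in hand. The only mild care needed is to note that the factor $1/\gamma^2 > 0$ does not flip signs, so positivity of the bracketed quadratic and of $\chi_{AC-B^2}$ itself coincide.
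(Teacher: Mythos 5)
Your proof is correct and takes essentially the same approach as the paper: both observe that $c\mapsto\chi_{AC-B^2}(r,\zeta,\gamma,c)$ is a convex quadratic (leading coefficient $3/\gamma^2>0$) with left root $c^-(r,\zeta)$, so it is nonnegative on $(-\infty, c^-(r,\zeta)]$. Your write-up is slightly more explicit in spelling out the ``outside the roots'' step and the harmlessness of the $1/\gamma^2$ factor, but the substance is identical.
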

\begin{proof}
    The result follows from the fact that $c^-$ is the left root of the equation $\chi_{AC-B^2}\left(r, \zeta, \gamma, c\right) = 0$ with respect to $c$ and that $\chi_{AC-B^2}$ is a convex quadratic with respect to $c$.
\end{proof}

However, recall that the argument $c$ must satisfy $c > 0$ to be a valid contraction coefficient.
Therefore, we must identify the conditions on $r, \zeta > 0$ that leads to $c^- > 0$ such that the range $(0, c^-)$ is non-empty.
Define
\begin{equation}
    r_{\mathrm{max}}(\zeta)
    \quad\triangleq\quad
    \frac{
        2 \zeta  \left(1 - \delta^2\right) 
    }{
        \left(4/3 - \delta^2\right) \zeta^2  + 2 \delta \left(1 - \delta\right) \zeta + 3 \left(1 - \delta\right)^2
    }  \; .
    \label{eq:rmax}
\end{equation}
In the next lemma, we show that, for any fixed $\zeta > 0$, we have $c^-\left(r, \zeta\right) > 0$ for all $r \in (0, r_{\mathrm{max}})$.

\begin{theoremEnd}[%
    proof here,
    text proof={},
]{lemma}\label{thm:ACmB2_positive_definite}
    For any $\zeta, \gamma > 0$ and all $r \in (0, r_{\mathrm{max}})$, we have 
    $
        c^-\left(r, \zeta\right) > 0.
    $
\end{theoremEnd}
\begin{proofEnd}
    Consider the equivalence relations
    \begin{align*}
        & &
        c^-\left(r, \zeta\right) = p_1\left(r\right) - \sqrt{p_2\left(r\right) p_3\left(r\right)} &> 0
        \\
        &\Leftrightarrow&
        {p_1\left(r\right)}^2- p_2\left(r\right) p_3\left(r\right) &> 0
        \\
        &\Leftrightarrow&
        \left(
            - \left(\frac{4}{3} - \delta^2\right) \zeta^2 - 2 \delta \left(1 - \delta\right) \zeta - 3 {\left(1 - \delta\right)}^2
        \right) \, r^2
        + 
        2 \zeta \left(1 - \delta^2\right) \, r
        &>
        0 \; .
    \end{align*}
    The last equivalence was derived using symbolic computation.
    Since $\delta \in (0, 1)$ and $\zeta > 0$, all of the coefficients on the left-hand side of the last line are non-zero.
    Also, it is a quadratic function with two roots: $r = 0$ and $r = r_{\mathrm{max}}$, and its leading coefficient is negative, implying that it is a concave quadratic. 
    Therefore, it is strictly positive in between the two roots, which turns out to be the open interval $(0, r_{\mathrm{max}})$.
    The equivalence with the condition $c^- > 0$ implies the result.
\end{proofEnd}

The remaining condition $\chi_A > 0$ automatically follows by choosing $c = c^-\left(r, \zeta\right)$.

\begin{theoremEnd}[%
    proof here,
    text proof={},
]{lemma}\label{thm:A_positive_definite}
    For any $\zeta, \gamma, r > 0$, if $c^-\left(r, \zeta\right) > 0$ then $\chi_{A}\left(r, \zeta, \gamma, c^-\left(r, \zeta\right)\right) > 0$.
\end{theoremEnd}
\begin{proofEnd}
First, notice that
\[
    \chi_A\left(r, \zeta, \gamma, c\right) > 0 
    \quad\Leftrightarrow\quad
    \left( - \zeta^2 - 3{\left(1 - \delta\right)}^2 \right) r^2 + 2 \zeta r > c \; .
\]
Denote the left-hand side as a polynomial in $r$ as
\[
    \left( -\zeta^2 - 3 {\left(1 - \delta\right)}^2 \right) r^2 + 2 \zeta r
    \quad\triangleq\quad p_4\left(r\right) 
\]
and pick $c = c^-\left(r, \zeta\right)$.
Then
\begin{align*}
    & &
    0 \;\;&<\;\; \chi_A\left(r, \zeta, \gamma, c^-\left(r, \zeta\right)\right)
    \\
    &\Leftarrow&
    p_1\left(r\right) - \sqrt{p_2\left(r\right) p_3\left(r\right)} \;\;&<\;\;  p_4\left(r\right)
    \\
    &\Leftarrow&
    {\left( p_1\left(r\right) - p_4\left(r\right) \right)}^2 \;\;&<\;\; p_2\left(r\right) p_3\left(r\right) 
    \\
    &\Leftrightarrow&
    {\left( p_1\left(r\right) - p_4\left(r\right) \right)}^2 - p_2\left(r\right) p_3\left(r\right) \;\;&<\;\;0 
\end{align*}
The left-hand side forms a polynomial, which follows from symbolic computation, 
\[
    {\left( p_1\left(r\right) - p_4\left(r\right) \right)}^2 - p_2\left(r\right) p_3\left(r\right)
    \quad=\quad
    p_5\left(r\right) r^2
    \quad=\quad
    \left( a_5 r^2 + b_5 r + e_5 \right) r^2
\]
with the coefficients 
\begin{align*}
    a_5 
    \quad&\triangleq\quad
    3 \delta \left(2 - \delta\right)\left( {\left(1 - \delta\right)}^2 + 1 \right) - \frac{1}{3} \zeta^4 - 2 \zeta^2 {\left(1 - \delta\right)}^2 - 3 
    \\
    b_5 
    \quad&\triangleq\quad
    -6 \delta {\left(1 - \delta\right)}^3 + \frac{2}{3}\zeta^3 - 2 \zeta^2 \delta \left(1 - \delta\right) + 2 \zeta {\left(1 - \delta\right)}^2
    \\
    e_5 
    \quad&\triangleq\quad
    -3 \delta^2 {\left(1 - \delta\right)}^2 - \frac{1}{3}\zeta^2 + 2 \zeta \delta \left(1 - \delta\right) 
    \; .
\end{align*}

Analyzing the coefficient $a_5$ reveals that $p_5$ is a concave quadratic.
Specifically, $a_5 \rvert_{\zeta = 0} = 0$, and $a_5$ is monotonically decreasing with respect to $\zeta$ since, for all $\zeta > 0$, 
\begin{align*}
    \frac{\mathrm{d} a_5}{\mathrm{d} \zeta}
    &= 
    \frac{\mathrm{d}}{\mathrm{d} \zeta}
    \left\{
        3 \delta 
        \left(
        1 + \left(1 - \delta\right) + {\left(1 - \delta\right)}^2 + {\left(1 - \delta\right)}^3
        \right)
        - \frac{1}{3} \zeta^4 - 2 \zeta^2 {\left(1 - \delta\right)}^2 - 3
    \right\}
    \\
    &= 
    - 12 \delta {\left(1 - \delta\right)}^3 -  \frac{4}{3} \zeta^3 
    - 4 \zeta {\left(1 - \delta\right)}^2
    - 4 \zeta^2 {\left(1 - \delta\right)} \delta
    \quad<\quad 0
    \; .
\end{align*}
That is, $a_5 < 0$.
Furthermore, using symbolic computation, the discriminant of $p_5$ follows as $\operatorname{disc}\left(p_5\right) = b_5^2 - 4 a_5 e_5 = 0$, meaning that $p_5$ has a unique root.
Said differently, for all $r \in \mathbb{R}$, $p_{5}\left(r\right) < 0$, meaning that, for any $\zeta > 0$ and all $r > 0$, $c^-\left(r, \zeta\right) = p_1\left(r\right) - \sqrt{p_2(r) p_3(r)} < p_4\left(r\right)$ holds.
\end{proofEnd}

Therefore, ensuring that $\chi_{AC-B^2}$ is non-negative also results in $\chi_A$ being positive.
As such, we have characterized the conditions under which $\chi_{A}$ and $\chi_{AB-C^2}$ are positive.
We are now ready to formally prove \cref{thm:general_contraction}.

\printProofs[generalcontraction]





\newpage
\subsubsection{Proof of \cref{thm:ACmB2_distinct_roots}}\label{section:proof_ACmB2_distinct_roots}
\printProofs[ACmB2distinctroots]



\newpage
\subsubsection{Proof of \cref{thm:special_case_convergence_bound}}\label{section:proof_special_case_convergence_bound}

Recall $c^-\left(r, \zeta\right)$ in \cref{eq:cminus}.
Define
\begin{equation}
    r_{\mathrm{lin}} \quad\triangleq\quad \frac{ \zeta {(1 - \delta)} }{ 2 \zeta^2 + 6 {(1 - \delta)}^2 } \; .
    \label{eq:rlin}
\end{equation}
We establish that, for any $\zeta > 0$ and all $r \in (0, r_{\mathrm{lin}}]$, the inequality $c^-\left(r, \zeta\right) \geq \zeta r$ holds.
This is equivalent to establishing, for all $r \in (0, r_{\mathrm{lin}}]$, the inequality
\begin{align}
    p_1\left(r\right) - \sqrt{ p_2\left(r\right) p_3\left(r\right) } \quad\geq\quad \zeta r \; .
    \label{eq:linear_approximation_valid_condition}
\end{align}
To proceed, we will use the following fact:

\begin{theoremEnd}[%
    proof here,
]{lemma}\label{thm:p1_larger_than_zeta_r}
    For all $r \in (0, r_{\mathrm{lin}}]$, $p_1\left(r\right) \geq \zeta r$ holds.
\end{theoremEnd}
\begin{proofEnd}
Consider the equivalence relations
\begin{align}
    & &
    p_1\left(r\right) \quad&>\quad \zeta r
    \nonumber
    \\
    &\Leftrightarrow&
    p_1\left(r\right) -  \zeta r \quad&>\quad 0
    \nonumber
    \\
    &\Leftrightarrow&
	- a_1 r^2 + \left(b_1 - \zeta \right) r + e_1
    \quad&>\quad 0 
    \label{eq:linear_approximation_condition}
    \; .
\end{align}
The left-hand side forms a concave quadratic.
Thus, if we find two points where this quadratic is strictly positive, all points in between satisfy \cref{eq:linear_approximation_condition}.
Since $e_1 > 0$, $r = 0$ trivially satisfies \cref{eq:linear_approximation_condition}.
On the other hand, since
\begin{align}
    r_{\mathrm{lin}} 
    \quad<\quad
    \frac{\zeta \left(1 - \delta\right)}{ \frac{2}{3} \zeta^2 + 2 {\left(1 - \delta\right)}^2 }
    \quad=\quad
    \frac{\zeta \left(1 - \delta\right)}{ a_1 } \; ,
    \label{eq:rlin_upper_bound}
\end{align}
for $r = r_{\mathrm{lin}}$, we have
\begin{align*}
    &-a_1 r^2 + \left(b_1 - \zeta\right) r_{\mathrm{lin}} + e_1 
    \\
    &\quad>
    - a_1 \frac{\zeta \left(1 - \delta\right)}{a_1} r_{\mathrm{lin}} - \delta \left(1 - \delta\right) r_{\mathrm{lin}} + \frac{1}{2} \left(1 - \delta^2\right)
    &&\text{(\cref{eq:rlin_upper_bound})}
    \\
    &\quad=
    -\left( \zeta  + \delta \right) \left(1 - \delta\right) r_{\mathrm{lin}} + \frac{1}{2} \left(1 - \delta^2\right) 
    \\
    &\quad=
    \left(1 - \delta\right)
    \left\{
    -\left( \zeta  + \delta \right) r_{\mathrm{lin}} + \frac{1}{2} \left(1 + \delta\right) 
    \right\}
    \\
    &\quad=
    \left(1 - \delta\right)
    \left\{
    -\left( \zeta  + \delta \right) \frac{\zeta \left(1 - \delta\right)}{2 \zeta^2 + 6 {\left(1 - \delta\right)}^2 } + \frac{1}{2} \left(1 + \delta\right) 
    \right\}
    \\
    &\quad=
    \left(1 - \delta\right)
    \left\{
    -\frac{\zeta^2 \left(1 - \delta\right)}{2 \zeta^2 + 6 {\left(1 - \delta\right)}^2  }
    + \frac{\delta \zeta \left(1 - \delta\right)}{2 \zeta^2 + 6 {\left(1 - \delta\right)}^2 }
    + \frac{1}{2} \left(1 + \delta\right) 
    \right\}
    \\
    &\quad>
    \left(1 - \delta\right)
    \left\{
    -\frac{\zeta^2 \left(1 - \delta\right)}{2 \zeta^2} 
    + \frac{\delta \zeta \left(1 - \delta\right)}{6 {\left(1 - \delta\right)}^2 }
    + \frac{1}{2} \left(1 + \delta\right) 
    \right\}
    &&\text{($\zeta > 0$, $(1 - \delta) > 0$)}
    \\
    &\quad=
    \left(1 - \delta\right)
    \left\{
    - \frac{1}{2} \left(1 - \delta\right)
    + \frac{1}{6} \frac{\delta  \zeta}{1 - \delta}
    + \frac{1}{2} \left(1 + \delta\right) 
    \right\}
    \\
    &\quad=
    \left(1 - \delta\right)
    \left( \delta + \frac{1}{6} \frac{\delta \zeta}{1 - \delta} \right)
    \\
    &\quad> 0
    \; .
    &&\text{($0 < \delta < 1$)}
\end{align*}
Therefore, all points $r \in (0, r_{\mathrm{lin}}]$ satisfy $p_1(r) > \zeta r$.
\end{proofEnd}

Then, for all $r \in (0, r_{\mathrm{lin}}]$, \cref{eq:linear_approximation_valid_condition} can be developed as follows:
\begin{align}
    & &
    p_1\left(r\right) - \zeta r \quad&\geq\quad \sqrt{ p_2\left(r\right) p_3\left(r\right) }
    \nonumber
    \\
    &\Leftrightarrow&
    {\left( p_1\left(r\right) - \zeta r \right)}^2 \quad&\geq\quad p_2\left(r\right) p_3\left(r\right) 
    &&\text{($p_1(r) > \zeta r$)}
    \nonumber
    \\
    &\Leftrightarrow&
    {\left( p_1\left(r\right) - \zeta r \right)}^2 - p_2\left(r\right) p_3\left(r\right) \quad&\geq\quad 0 \; .
    \label{eq:linar_approximation_condition}
\end{align}
Denote the left-hand side, which follows from symbolic computation, as
\begin{align}
    {\left( p_1\left(r\right) - \zeta r \right)}^2 - p_2\left(r\right) p_3\left(r\right) 
    \quad\triangleq\quad
    p_6\left(r\right) r
    \quad=\quad
    \left( a_6 r^{2} - b_6 r  + e_6  \right) r
    \nonumber
\end{align}
with the coefficients 
\begin{align*}
    a_6 
    \quad&=\quad
    \frac{4}{3} \zeta^3 + 4 \zeta {\left(1 - \delta\right)}^2
    \\
    b_6 
    \quad&=\quad  
    3 {\left( 1 - \delta \right)}^2 + \zeta^2 \left(\frac{7}{3} - \delta^2\right)
    \\
    e_6 \quad&=\quad \zeta \left(1 - \delta^{2}\right)
    \; .
\end{align*}
Under the conditions on the parameters, $p_6$ can be shown to be non-negative.

\begin{theoremEnd}[%
    proof here,
]{lemma}\label{thm:p6_non_negative}
    For all $r \in (0, r_{\mathrm{lin}}]$, $p_6\left(r\right) \geq 0$ holds.
\end{theoremEnd}
\begin{proofEnd}
Since $p_6$ is a convex quadratic, it suffices to verify that
\begin{equation}
    \frac{\mathrm{d}p_6}{\mathrm{d}r}\left(r_{\mathrm{lin}}\right) \quad\leq\quad 0 \qquad\text{and}\qquad p_6\left(r_{\mathrm{lin}}\right) \quad\geq\quad 0 \; .
    \label{eq:p6_positive_condition}
\end{equation}
Then, by the monotonicity of the derivative of convex functions, all $r \in (0, r_{\mathrm{lin}})$ satisfy $p_6\left(r\right) > 0$.
First, notice that
\begin{align}
    r_{\mathrm{lin}}
    &=
    \frac{2}{3}
    \frac{ \zeta^2 {\left(1 - \delta\right)} }{ (4/3) \zeta^3 + 4 \zeta {\left(1 - \delta\right)}^2 }
    \nonumber
    \\
    &= 
    \frac{2}{3}
    \frac{\zeta^2 {(1 - \delta)} }{a_6}
    \label{eq:rlin_p6_relationship}
    \\
    &= 
    \frac{2}{7}
    \frac{\zeta^2 { \left(\frac{7}{3} - \frac{7}{3} \delta\right)} }{a_6}
    \nonumber
    \\
    &<
    \frac{2}{7}
    \frac{\zeta^2 { \left(\frac{7}{3} - \delta^2\right)} }{a_6}
    &&\text{($ - (7/3) \delta < -\delta < -\delta^2$)}
    \nonumber
    \\
    &<
    \frac{2}{7}
    \frac{ 3 {\left(1 - \delta\right)}^2 +  \zeta^2 \left(\frac{7}{3} - \delta^2\right) }{a_6}
    &&\text{($0 < 3 {\left(1 - \delta\right)}^2$)}
    \nonumber
    \\
    &=
    \frac{2}{7}
    \frac{b_6}{a_6}
    \; .
    \label{eq:rlin_b6_a6}
\end{align}
Then \cref{eq:rlin_b6_a6} implies
\begin{align*}
    \frac{\mathrm{d}p_6}{\mathrm{d}r}\left(r_{\mathrm{lin}}\right)
    \quad=\quad
    2 a_6 r_{\mathrm{lin}} - b_6
    \quad<\quad
    - \frac{3}{7} b_6
    \quad<\quad
    0 \; .
\end{align*}
Furthermore, using \cref{eq:rlin_p6_relationship} and symbolic computation,
\begin{align*}
    p_6\left(r_{\mathrm{lin}}\right)
    \quad=\quad
	a_6 r_{\mathrm{lin}}^2 - b_6 r_{\mathrm{lin}} + e_6
    \quad=\quad
    \frac{ \frac{4}{9} \zeta^4 {\left(1 - \delta\right)}^2 - \frac{2}{3} \zeta^2 {\left(1 - \delta\right)} b_6 + a_6 e_6 }{a_6}
    \; .
\end{align*}
The denominator $a_6$ satisfies $a_6 > 0$.
Therefore, we only need to analyze the sign of the numerator.
Denote the numerator as a function of $\zeta$, which follows from symbolic computation, as
\[
    \frac{4}{9} \zeta^4 {\left(1 - \delta\right)}^2 - \frac{2}{3} \zeta^2 {\left(1 - \delta\right)} b_6 + a_6 e_6
    \quad\triangleq\quad
    p_7\left(\zeta\right) \zeta^2
    \quad=\quad
    a_7 \zeta^4 + b_7  \zeta^2
\]
with the coefficients
\begin{align*}
    a_7 \quad&\triangleq\quad 
    \frac{2}{9} \left(1 + 3 \delta\right)\left(1 - \delta\right) \left(1 + \delta\right)
    \\
    b_7 \quad&\triangleq\quad 
    2 {\left(1 - \delta\right)}^3 \left(1 + 2 \delta \right)
    \; .
\end{align*}
By inspection, it is clear that $a_7, b_7 > 0$ for all $\delta \in (0, 1)$.
Therefore, $p_7 > 0$ for all $\zeta > 0$.
We can then conclude that $p_6\left(r\right) > 0$ for all $r \in (0, r_{\mathrm{lin}}]$.
\end{proofEnd}

Thus, the inequality in \cref{eq:linar_approximation_condition} holds.
We are now ready to prove \cref{thm:special_case_convergence_bound}.

\printProofs[specialcaseconvergencebound]



\newpage
\subsection{Asymptotic Bias Analysis}\label{section:proof_asymptotic_bias}

For the analysis of the asymptotic bias, we will define the following:
\begin{itemize}
    \item ${(Z_k)}_{k \geq 0}$ is a Markov chain following the kinetic Langevin diffusion discretized with the stochastic exponential integrator (\cref{eq:kinetic_langevin_monte_carlo_algorithm}), where, for some arbitrary distribution $\mu \in \mathcal{P}(\mathbb{R}^{2 d})$, it is initialized as $Z_{0} \sim \mu$.
    
    \item ${(Z_t^*)}_{t \geq 0}$ is the kinetic Langevin dynamics (\cref{eq:kinetic_langevin_dynamics}) initialized from its stationary distribution $\pi$.

    \item ${(Z_t')}_{t \in [hk, h(k+1)]}$ is, for each $k \geq 0$, the stochastic exponential integration of ${(Z^*_t)}_{t \geq 0}$ zeroth-order interpolated over the interval $[hk, h(k+1)]$. Specifically, for any $k \geq 0$ and any $t \in [hk, h(k+1)]$,
    \begin{align}
        V_{t}^{\prime}
        &=
        \mathrm{e}^{- \gamma (t - hk)} V_{hk}^* - \eta \int^t_{hk} \mathrm{e}^{- \gamma \left(s - hk\right)} \nabla U\left(X_{hk}^{*}\right) \mathrm{d}s + \sqrt{2 \gamma \eta} \int^t_{hk} \mathrm{e}^{-\gamma \left(s - hk\right)} \, \mathrm{d}B_s
        \nonumber
        \\
        X_t^{\prime} &= X_{hk}^* + \int^t_{hk} V_s^{\prime} \, \mathrm{d}s \; .
        \label{eq:interpolated_process}
    \end{align}
    In essence, ${(Z_t')}_{t \in [hk, h (k+1)]}$ is a kinetic Langevin diffusion process with the drift set to be the zeroth order interpolation of the drift of ${(Z_{hk}^*)}_{k \geq 0}$.
\end{itemize}
Throughout the proof, we will assume that for any $k \geq 0$ and any $t \in [hk, h(k+1)]$, ${(Z_k)}_{k \geq 0}$, ${(Z_t^*)}_{t \geq 0}$, and ${(Z_t')}_{t \geq 0}$ are synchronously coupled by sharing the same noise process ${(B_t)}_{t \geq 0}$.

\subsubsection{Proof of \cref{thm:asymptotic_bias}}
Under \cref{assumption:condition_linear_approximation_contraction}, \cref{thm:special_case_convergence_bound} asserts that ${(Z_k)}_{k \geq 0}$ converges to the unique stationary distribution of $K$, $\pi_h$, such that $\lim_{k \to \infty} \mu K^k = \lim_{k \to \infty} \mathrm{Law}\left(Z_k\right) = \pi_h$.
Also, since ${(Z^*_t)}$ is initialized from its stationary distribution $\pi$, for all $t \geq 0$, its law is $\mathrm{Law}\left(Z^*_t\right) = \pi$.
Notice 
\begin{align}
    \mathrm{W}_{a,b}(\pi_h, \pi)
    \quad=\quad
    \lim_{k \to \infty}
    \mathrm{W}_{a,b}(\mu K^k, \pi P_{hk})
    \quad=\quad
    \lim_{k \to \infty}
    \mathrm{W}_{a,b}(\mathrm{Law}(Z_k), \mathrm{Law}(Z_{hk}^*)) \; .
\end{align}
Since $\mathrm{W}_{a,b}$ is a proper distance metric under $a = 4/\gamma^2$ and $b = 1/\gamma$.
Then, for any $k \geq 0$, we have the decomposition
\begin{align}
    &
    \mathrm{W}_{a,b}(\mathrm{Law}(Z_{k+1}), \mathrm{Law}(Z_{h (k+1)}^*))
    \nonumber
    \\
    &\qquad\leq
    \mathrm{W}_{a,b}(\mathrm{Law}(Z_{k+1}), \mathrm{Law}(Z_{h (k+1)}'))
    \;+\;
    \mathrm{W}_{a,b}(\mathrm{Law}(Z'_{h(k+1)}), \mathrm{Law}(Z_{h (k+1)}^*)) \; .
    \nonumber
\end{align}
Furthermore, \cref{thm:special_case_convergence_bound} asserts that a contraction holds as
\begin{align}
    &\mathrm{W}_{a,b}(\mathrm{Law}(Z_{k+1}), \mathrm{Law}(Z_{h (k+1)}^*))
    \nonumber
    \\
    &\quad\leq
    {(1 - \tilde{c})}^{1/2}
    \mathrm{W}_{a,b}(\mathrm{Law}(Z_{k}), \mathrm{Law}(Z_{h k}'))
    \;+\;
    \underbrace{
    \mathrm{W}_{a,b}(\mathrm{Law}(Z'_{h(k+1)}), \mathrm{Law}(Z_{h (k+1)}^*)) 
    }_{\triangleq E_{\mathrm{disc}}}
    \label{eq:bias_partial_contraction_general}
\end{align}
By bounding the one-step local discretization error $E_{\mathrm{disc}}$ and iterating the recursion, we obtain a bound on the asymptotic bias.

The following lemma relates the total local error to the local error of the momentum alone:

\begin{theoremEnd}[%
    category=generaldiscretizationbound,
    text link={\noindent\textit{Proof.} The proof is deferred to \cref{section:proof_general_discretization_bound}. \qed},
    text proof = {Proof of \string\pratendRef{thm:prAtEnd\pratendcountercurrent}}
]{lemma}\label{thm:general_discretization_bound}
    Suppose $b^2 \leq  a$ holds.
    Then, for any $k \geq 0$,
    \[
        \mathrm{W}_{a,b}\left(\mathrm{Law}\left(Z'_{hk}\right), \mathrm{Law}\left(Z_{h k}^*\right)\right) 
        \quad\leq\quad
        \widetilde{E}_{\mathrm{pos}}
        +
        \widetilde{E}_{\mathrm{mom}}
        \; ,
    \]
    where
    \begin{align*}
        \widetilde{E}_{\mathrm{pos}}
        \;\triangleq\;
        {\left\{
        h
        \int_{h k}^{h (k + 1)} 
        \mathbb{E} \norm{ 
            V_{t}^* - 
            V^{\prime}_{t} 
        }^2 
        \, \mathrm{d}t 
        \right\}}^{1/2}
        \quad\text{and}\quad
        \widetilde{E}_{\mathrm{mom}}
        \;\triangleq\;
        {\left\{
            a \, \mathbb{E} \norm{  V_{h(k+1)}^* - V^{\prime}_{h (k+1)}  }^2
        \right\}}^{1/2} 
        \; .
    \end{align*}
\end{theoremEnd}
\begin{proofEnd}
For each $k \geq 0$, ${(Z_t^*)}$ over the time interval $[hk, h(k+1)]$, 
\begin{align}
    &\mathrm{W}_{a,b}\left(\mathrm{Law}\left(Z'_{h (k+1)}\right), \mathrm{Law}\left(Z_{h (k+1)}^*\right)\right)
    \nonumber
    \\
    &\qquad\leq
    \mathbb{E}^{1/2} {\lVert Z_{h(k+1)}^* - Z^{\prime}_{h (k+1)} \rVert}_{a,b}^2 \; .
    \nonumber
\shortintertext{Applying \cref{thm:minkowski_inequality},} 
    &\qquad\leq
    \mathbb{E}^{1/2} {\lVert X_{h(k+1)}^* - X^{\prime}_{h (k+1)} \rVert}^2
    +
    \sqrt{a} \,  \mathbb{E}^{1/2} {\lVert V_{h(k+1)}^* - V^{\prime}_{h (k+1)} \rVert}^2 
    \nonumber
    \\
    &\qquad=
    \mathbb{E}^{1/2} \norm{ \int_{h k}^{h (k + 1)} \left( V_{t}^* - V^{\prime}_{t}\right) \mathrm{d}t + \left(X_{hk}^* - X_{hk}'\right)
    }^2
    +
    \sqrt{a} \, \mathbb{E}^{1/2} \norm{  V_{h(k+1)}^* - V^{\prime}_{h (k+1)}  }^2
    \nonumber
\shortintertext{and the fact that ${(Z_{t}')}_{t \geq 0}$ and ${(Z_{t}^*)}_{t \geq 0}$ are synchronously coupled,} 
    &\qquad=
    \mathbb{E}^{1/2} \norm{ \int_{h k}^{h (k + 1)} \left( V_{t}^* - V^{\prime}_{t}\right) \mathrm{d}t 
    }^2
    +
    \sqrt{a} \,
    \mathbb{E}^{1/2} \norm{  V_{h(k+1)}^* - V^{\prime}_{h (k+1)}  }_2^2 \; .
    \nonumber
\shortintertext{The result follows by applying Jensen's inequality.}
    &\qquad\leq
    {\Bigg(
    h
    \int_{h k}^{h (k + 1)} 
    \mathbb{E} \norm{ 
        V_{t}^* - 
        V^{\prime}_{t} 
    }^2 
    \, \mathrm{d}t 
    \Bigg)}^{1/2}
    +
    \sqrt{a} \, 
    \mathbb{E}^{1/2} \norm{  V_{h(k+1)}^* - V^{\prime}_{h (k+1)}  }^2
    \; .
\end{align}
\end{proofEnd}

Here, $\widetilde{E}_{\mathrm{pos}}$ is the local error of the position ${(X_k)}_{k \geq 0}$, while $\widetilde{E}_{\mathrm{mom}}$ is that of the momentum ${(V_k)}_{k \geq 0}$.
The second-order behavior of KLMC can be seen by the fact that $\widetilde{E}_{\mathrm{pos}}$ has an extra integral with a factor of $h$.

Using the fact that ${(Z_t')}_{t \geq 0}$ is the zeroth-order interpolation of ${(Z_t^*)}_{t \geq 0}$ and that ${(Z_t^*)}_{t \geq 0}$ is stationary, the local error of the momentum can be bounded as follows:
\begin{theoremEnd}[%
    category=momentumdeviationbound,
    text proof = {Proof of \string\pratendRef{thm:prAtEnd\pratendcountercurrent}},
    text link={\noindent{\textit{Proof.}} The proof is deferred to \cref{section:proof_momentum_deviation_bound}. \qed}
]{lemma}\label{thm:momentum_deviation_bound}
Suppose, \cref{asusmption:hessian_bounded} holds.
Then, for any $k \geq 0$,
\begin{align*}
    \mathbb{E} {\lVert V_{h (k+1)}^* - V_{h (k+1)}^{\prime} \rVert}^2
    &\leq
    \frac{1}{4}
    d \beta^2 \eta^3
    \left\{
    \frac{h}{\gamma^3} 
    -
    \mathrm{e}^{-2 h \gamma }
    \left(
    \frac{h}{\gamma^3} 
    +
    2
    \frac{h^2}{\gamma^2}  
    +
    2
    \frac{h^3}{\gamma}
    \right)
    \right\} 
    \\
    \int_{h k}^{h (k + 1)} 
    \mathbb{E} \norm{ 
        V_{t}^* - 
        V^{\prime}_{t} 
    }^2 
    \, \mathrm{d}t 
    &\leq
    \frac{1}{8}
    d \beta^2 \eta^3
    \Bigg\{
    \frac{h^2}{\gamma^3}
    -
    3
    \frac{1}{\gamma^5}
    +
    \mathrm{e}^{- 2 h \gamma}
    \left(
    3
    \frac{1}{\gamma^5}
    +
    6
    \frac{h}{\gamma^4}
    +
    5
    \frac{h^2}{\gamma^3}
    +
    2
    \frac{h^3}{\gamma^2} 
    \right) 
    \Bigg\} \; .
\end{align*}
\end{theoremEnd}
\begin{proofEnd}
Since \({(V_t^*)}_{t \geq 0}\) and \({(V_t^{\prime})}_{t \geq 0}\) are synchronously coupled and $V^*_{h k} = V_{hk}'$ for all $t \in [hk, h(k+1)]$,
\begin{align}
    &
    \mathbb{E} \norm{ V_t^* - V_{t}^{\prime} }^2
    \nonumber
    \\
    &\;\;=
    \mathbb{E} \norm{
    V_t^* - V_{h k}^*
    +
    V_{h k}^{\prime}
    -
    V_{t}^{\prime}
    }^2
    &&\text{($V_{hk}^{\prime} = V_{hk}^{*}$)}
    \nonumber
    \\
    &\;\;=
    \mathbb{E} \left\lVert
    \mathrm{e}^{-\gamma (t - hk)} V_{hk}^*
    +
    \eta
    \int_{hk}^{t} 
    \mathrm{e}^{- \gamma (s - hk)}
    \nabla U\left(X_s^*\right)
    \, \mathrm{d} s
    \right.
    \nonumber
    \\
    & &
    \mathllap{
    -
    \left.
    \mathrm{e}^{-\gamma (t - h k)} V_{hk}^*
    +
    \eta
    \int_{hk}^{t} 
    \mathrm{e}^{- \gamma (s - hk)}
    \nabla U\left(X_{hk}^*\right)
    \, \mathrm{d} s
    \right\rVert^2
    }
    \nonumber
    \\
    &\;\;=
    \mathbb{E} \norm{
    \eta
    \int_{hk}^{t} 
    \mathrm{e}^{- \gamma (s - hk)}
    \left(
    \nabla U\left(X_s^*\right)
    -
    \nabla U\left(X_{hk}^{*}\right)
    \right)
    \, \mathrm{d} s
    }^2
    \nonumber
    \\
    &\;\;\leq
    \eta^2
    (t - hk)
    \int_{hk}^{t} 
    \mathrm{e}^{- 2 \gamma (s - hk)}
    \mathbb{E}
    {\lVert
    \nabla U\left(X_{s}^*\right)
    -
    \nabla U(X_{hk}^*)
    \rVert}^2
    \, \mathrm{d} s 
    \nonumber
    &&\text{(Jensen's inequality)}
    \\
    &\;\;\leq
    \beta^2
    \eta^2
    (t - hk)
    \int_{hk}^{t} 
    \mathrm{e}^{- 2 \gamma (s - hk)}
    \mathbb{E}
    \norm{ X_{s}^* - X_{hk}^* }^2
    \, \mathrm{d} s 
    \nonumber
    &&\text{(\cref{asusmption:hessian_bounded})}
    \\
    &\;\;=
    \beta^2
    \eta^2
    (t - hk)
    \int_{hk}^{t} 
    \mathrm{e}^{- 2 \gamma (s - hk)}
    \mathbb{E}
    \norm{ \int^s_{hk} V_u^* \mathrm{d}u }^2
    \, \mathrm{d} s 
    \nonumber
    \\
    &\;\;\leq
    \beta^2
    \eta^2
    (t - hk)
    \int_{hk}^{t} 
    \mathrm{e}^{- 2 \gamma (s - hk)}
    (s - hk)
    \int^{s}_{hk} 
    \mathbb{E}
    {\lVert V_u^* \rVert}^2 \, \mathrm{d} u
    \, \mathrm{d} s
    \; .
    \nonumber
    &&\text{(Jensen's inequality)}
\end{align}
Since ${(V^*_t)}_{t \geq 0}$ is stationary with the stationary distribution \(\mathrm{N}\left(0_d, \eta \mathrm{I}_d\right)\), 
\begin{align}
    \mathbb{E} \norm{ V_t^* - V_{t}^{\prime} }^2
    \;\;&\leq\;\;
    d \beta^2 \eta^3
    (t - hk)
    \int_{hk}^{t} 
    \mathrm{e}^{- 2 \gamma (s - hk)}
    (s - hk)
    \int^{s}_{hk} \mathrm{d} u \, \mathrm{d} s
    \nonumber
    \\
    \;\;&=\;\;
    d \beta^2 \eta^3 \left(t - hk\right)
    \int_{hk}^{t} 
    \mathrm{e}^{- 2 \gamma (s - hk)}
    {(s - hk)}^2
    \, \mathrm{d} s 
    \; .
    \label{eq:momentum_deviation_bound_raw}
\end{align}

Using the function $I_p(t)$, \cref{eq:momentum_deviation_bound_raw} becomes
\begin{align}
    &\mathbb{E} \norm{ V_t^* - V_{t}^{\prime} }^2
    \nonumber
    \\
    &\leq
    d \beta^2 \eta^3 \left(t - hk\right)
    I_2\left(t\right)
    \nonumber
    \\
    &=
    d \beta^2 \eta^3
    \left(t - hk\right)
    \left\{
    \frac{1}{4 \gamma^3} 
    -
    \frac{1}{4 \gamma^3} 
    \mathrm{e}^{-2 \gamma (t - hk)}
    - 
    \frac{ {\left(t - hk\right)} }{2 \gamma^2}  \mathrm{e}^{-2 \gamma \left(t - hk\right)}
    - 
    \frac{{\left(t - hk\right)}^2 }{2 \gamma} \mathrm{e}^{-2 \gamma \left(t - hk\right)}
    \right\} 
    \nonumber
    \\
    &=
    d \beta^2 \eta^3
    \left\{
    \frac{\left(t - hk\right)}{4 \gamma^3} 
    -
    \frac{\left(t - hk\right)}{4 \gamma^3} 
    \mathrm{e}^{-2 \gamma (t - hk)}
    - 
    \frac{ {\left(t - hk\right)}^2 }{2 \gamma^2}  \mathrm{e}^{-2 \gamma \left(t - hk\right)}
    - 
    \frac{{\left(t - hk\right)}^3 }{2 \gamma} \mathrm{e}^{-2 \gamma \left(t - hk\right)}
    \right\} 
    \; .
    \label{eq:momentum_deviation_bound_first}
\end{align}
The first inequality in the statement follows by substituting $t = h (k+1)$.

The second inequality follows by integrating \cref{eq:momentum_deviation_bound_first} again.
Then
\begin{align}
    &\int^t_{h k}
    \mathbb{E} {\lVert V_s^* - V_{s}^{\prime} \rVert}^2 \, \mathrm{d}s
    \nonumber
    \\
    &\;\leq
    \nonumber
    d \beta^2 \eta^3
    \Bigg\{
        \frac{ {(t - hk)}^2 }{8\gamma^3}
        - \frac{1}{4 \gamma^3} I_1\left(t\right)
        - \frac{1}{2 \gamma^2} I_2\left(t\right)
        - \frac{1}{2 \gamma} I_3\left(t\right)
    \Bigg\}
    \nonumber
    \\
    &\;=
    d \beta^2 \eta^3
    \Bigg\{
    \frac{{\left( t - hk \right)}^2 }{8\gamma^3}
    -
    \frac{3}{8}
    \frac{1}{\gamma^5}
    \nonumber
    \\
    &\qquad\qquad\qquad
    +\mathrm{e}^{- 2 \gamma \left(t - hk\right) }
    \left(
    \frac{3}{8}
    \frac{1}{\gamma^5}
    +
    \frac{3}{4}
    \frac{1}{\gamma^4}
    {\left( t - hk \right)}
    +
    \frac{5}{8}
    \frac{1}{\gamma^3}
    {\left( t - hk \right)}^2
    +
    \frac{1}{4} 
    \frac{1}{\gamma^2} 
    {\left( t - hk \right)}^3
    \right) 
    \Bigg\} \; .
    \nonumber
\end{align}
Substituting $t = h (k + 1)$ yields the statement.

\end{proofEnd}

Using these supporting results, we are now ready to prove \cref{thm:asymptotic_bias}.

\printProofs[asymptoticbias]

\newpage
\subsubsection{Proof of \cref{thm:general_discretization_bound}}\label{section:proof_general_discretization_bound}

For the proof of \cref{thm:general_discretization_bound}, we will use the following Minkowski-type inequality that holds for the norm $\norm{\cdot}_{a,b}$:

\begin{lemma}\label{thm:minkowski_inequality}
    Suppose $b^2 \leq a$ holds.
    Then, for any pair of random variables $Z = (X,V)$, the norm $\norm{\cdot}_{a,b}$ satisfies 
    \begin{align*}
        \mathbb{E}^{1/2}\norm{ Z }_{a,b}^2
        \quad\leq\quad
        \mathbb{E}^{1/2}\norm{X}^2 + \sqrt{a} \mathbb{E}^{1/2}\norm{V}^2 \; .
    \end{align*}
\end{lemma}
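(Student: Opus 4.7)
The plan is to present $\norm{\cdot}_{a,b}$ as a seminorm on $\mathbb{R}^{2d}$ and then invoke the Minkowski inequality in $L^2(\mathbb{P})$ together with the pointwise triangle inequality applied to the decomposition $Z = (X,0) + (0,V)$.

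First, I would verify that $\norm{\cdot}_{a,b}$ satisfies the triangle inequality under the hypothesis $b^2 \leq a$. Expanding the square on the right, one sees the identity
\[
    \norm{(x,v)}_{a,b}^2 \;=\; \norm{x + b v}^2 + (a - b^2)\,\norm{v}^2,
\]
which represents $\norm{(x,v)}_{a,b}$ as the Euclidean norm on $\mathbb{R}^{2d}$ of the image of $(x,v)$ under the linear map $(x,v) \mapsto \bigl(x + b v,\, \sqrt{a - b^2}\,v\bigr)$. As the composition of a linear map with a norm, $\norm{\cdot}_{a,b}$ is therefore a seminorm, so it obeys the triangle inequality $\norm{z_1 + z_2}_{a,b} \leq \norm{z_1}_{a,b} + \norm{z_2}_{a,b}$.

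Second, I would apply this pointwise triangle inequality to the decomposition $Z = (X,0) + (0,V)$. Evaluating the constituent seminorms directly from the definition yields $\norm{(X,0)}_{a,b} = \norm{X}$ and $\norm{(0,V)}_{a,b} = \sqrt{a}\,\norm{V}$, so almost surely
\[
    \norm{Z}_{a,b} \;\leq\; \norm{X} + \sqrt{a}\,\norm{V}.
\]

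Third, I would take the $L^2(\mathbb{P})$ norm of both sides and apply the standard Minkowski (triangle) inequality in $L^2(\mathbb{P})$ to the scalar random variables $\norm{X}$ and $\sqrt{a}\,\norm{V}$, obtaining
\[
    \mathbb{E}^{1/2}\!\norm{Z}_{a,b}^2 \;\leq\; \mathbb{E}^{1/2}\bigl(\norm{X} + \sqrt{a}\,\norm{V}\bigr)^2 \;\leq\; \mathbb{E}^{1/2}\!\norm{X}^2 + \sqrt{a}\,\mathbb{E}^{1/2}\!\norm{V}^2,
\]
which is the claim.

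There is no substantive obstacle here; the argument is a clean two-line reduction. The only point worth flagging is that the weaker assumption $b^2 \leq a$ (rather than the standing $b^2 < a$ used elsewhere in the paper) is exactly what is required for the identity above to express $\norm{\cdot}_{a,b}$ as a bona fide seminorm, which is all that the triangle inequality step needs.
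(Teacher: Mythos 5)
Your proof is correct, but it takes a genuinely different route from the paper's. The paper expands $\mathbb{E}\norm{Z}_{a,b}^2 = \mathbb{E}\norm{X}^2 + 2b\,\mathbb{E}\inner{X}{V} + a\,\mathbb{E}\norm{V}^2$, applies Cauchy--Schwarz to the cross term, replaces $b$ by $\sqrt{a}$ using $b^2 \leq a$, and observes that the resulting expression is the perfect square $\bigl(\mathbb{E}^{1/2}\norm{X}^2 + \sqrt{a}\,\mathbb{E}^{1/2}\norm{V}^2\bigr)^2$. You instead first establish a \emph{pointwise} inequality by writing $\norm{\cdot}_{a,b}$ as the Euclidean norm of the linear image $(x,v)\mapsto(x+bv,\sqrt{a-b^2}\,v)$ --- which is precisely where $b^2\leq a$ enters --- so that the triangle inequality applied to $Z=(X,0)+(0,V)$ gives $\norm{Z}_{a,b}\leq\norm{X}+\sqrt{a}\,\norm{V}$ almost surely, and you then pass to expectations via Minkowski in $L^2(\mathbb{P})$. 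Your factorization makes the structural reason for the bound transparent (the weighted norm is an honest seminorm, and $(X,0)$, $(0,V)$ have clean norm values), and your argument would go through verbatim with $L^2$ replaced by any $L^p$; the paper's proof is more direct and computational, keeping everything at the level of second moments. Both ultimately rest on Cauchy--Schwarz --- explicitly in the paper, implicitly via the triangle inequality for $\norm{\cdot}$ in yours.
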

\begin{proof}
\begin{align*}
    \mathbb{E}^{1/2}\norm{ Z }_{a,b}^2
    \;\;&=\;\;
    \sqrt{
        \mathbb{E}\norm{ X }^2 + 2 b \mathbb{E}  \inner{X}{V} + a \mathbb{E} \norm{V}^2
    }
    \\
    \;\;&\leq\;\;
    \sqrt{
        \mathbb{E}\norm{ X }^2 + 2 b {\left( \mathbb{E} \norm{X}^2 \right)}^{1/2} {\left( \norm{V}^2 \right)}^{1/2} + a \mathbb{E} \norm{V}^2
    }
    &&\text{(Cauchy--Schwarz)}
    \\
    \;\;&\leq\;\;
    \sqrt{
        \mathbb{E}\norm{X}^2 + 2 \, {( \mathbb{E} \norm{X}^2)}^{1/2} \sqrt{a}{\left( \, \mathbb{E} \norm{V}^2 \right)}^{1/2}  + a \mathbb{E} \norm{V}^2
    }
    &&\text{($b^2 \leq a$)}
    \\
    \;\;&=\;\;
    \mathbb{E}^{1/2}\norm{X}^2 + \sqrt{a} \mathbb{E}^{1/2}\norm{V}^2 \; .
\end{align*}
\end{proof}

\printProofs[generaldiscretizationbound]

\newpage
\subsubsection{Proof of \cref{thm:momentum_deviation_bound}}\label{section:proof_momentum_deviation_bound}

In the proof, we will use the function
\begin{align*}
    I_p\left(t\right) \quad\triangleq\quad \int^t_{hk} {\left(s - hk\right)}^p \mathrm{e}^{-2 \gamma \left(s - hk\right)} \, \mathrm{d}s \; .
\end{align*}
For any \(p \geq 2\), integration by part yields the recursive definition
\begin{align*}
    I_1\left(t\right) \quad&= \quad\frac{1}{4\gamma^2} - \frac{1}{4\gamma^2} \mathrm{e}^{-2 \gamma \left(t - hk\right)} - \frac{1}{2 \gamma} \mathrm{e}^{-2 \gamma \left(t - hk\right)} \left(t - hk\right) 
    \\
    I_p\left(t\right) \quad&=\quad - \frac{1}{2 \gamma} {\left( t - kh \right)}^p \mathrm{e}^{- 2\gamma \left(t - hk\right)} + \frac{p}{2 \gamma} I_{p-1}\left(t\right)
    &&\text{(for $p > 1$)}
    \; .
\end{align*}
The exact form of \(I_p\) for any order \(p > 1\) can then be conveniently computed via symbolic computation.

\printProofs[momentumdeviationbound]

\newpage
\subsubsection{Proof of \cref{thm:asymptotic_bias_special_cases}}\label{section:proof_asymptotic_bias_special_cases}

Recall $E_{\mathrm{pos}}$ and $E_{\mathrm{mom}}$ in \cref{thm:asymptotic_bias}.
The result follows by upper-bounding the terms depending on $\zeta$, $f_{\mathrm{pos}}$ for $E_{\mathrm{pos}}$ and $f_{\mathrm{mom}}$ for $E_{\mathrm{mom}}$ as
\begin{align*}
    f_{\mathrm{pos}}\left(\zeta\right) 
    \quad&\triangleq \quad
    \zeta^2 - 3 + \exp\left(-2\zeta\right) \left(3 + 6\zeta + 5\zeta^2 + 2\zeta^3\right) 
    \\
    f_{\mathrm{mom}}\left(\zeta\right) 
    \quad&\triangleq\quad
    1 - \exp\left(-2\zeta\right) \left( 1 + 2 \zeta + 2 \zeta^2 \right)
    \; .
\end{align*}
The upper bound for the underdamped and overdamped regimes corresponds to upper bounding $f_{\mathrm{pos}}$ and $f_{\mathrm{mom}}$ with their asymptotes in the direction of $\zeta \to 0$ and $\zeta \to \infty$, respectively.

For obtaining the upper bound, we will use the fact that, for two differentiable functions $f, g : \mathbb{R} \to \mathbb{R}$, suppose $f' \leq g'$ and there exists some $\zeta_0 \in \mathbb{R}$ such that $f(\zeta_0) \leq g(\zeta_0)$.
Then the fundamental theorem of calculus yields $f \leq g$.
This strategy can be applied recursively such that, for any $n \geq 2$, if the $n$th derivative satisfies $f^{(n)} \leq g^{(n)}$ and there exists a collection of points ${(\zeta_0^{(k)})}_{k = 1, \ldots, n - 1}$ such that, for all $k = 1, \ldots, n - 1$, the bounds $f^{(k)}(\zeta_0^{(k}) \leq g^{(k)}(\zeta_0^{(k})$ hold, then $f \leq g$.
Therefore, analyzing the derivatives of $f_{\mathrm{pos}}$ and $f_{\mathrm{mom}}$ will yield our bounds.

The derivatives of $f_{\mathrm{pos}}$ follow as
\begin{align}
    \frac{\mathrm{d}f_{\mathrm{pos}}}{\mathrm{d}\zeta}\left(\zeta\right)
    \quad&= \quad
    2 \zeta \left( 1 - \mathrm{e}^{-2 \zeta} \left(\zeta^2 + {\left( 1 + \zeta\right)}^2\right) \right)
    \label{eq:fpos_derivative1}
    \\
    \frac{\mathrm{d}^2f_{\mathrm{pos}}}{\mathrm{d}\zeta^2}\left(\zeta\right)
    \quad&=\quad
    2 \mathrm{e}^{-2 \zeta} \left( 8\zeta^3 - 4\zeta^2 - 4\zeta\right) + 2 - 4 \mathrm{e}^{-2 \zeta}
    \label{eq:fpos_derivative2}
    \\
    \frac{\mathrm{d}^3 f_{\mathrm{pos}}}{\mathrm{d}\zeta^3}\left(\zeta\right)
    \quad&=\quad
    -16 \, \mathrm{e}^{-2 \zeta} \left(\zeta - 2\right) \zeta^2 \; ,
    \label{eq:fpos_derivative3}
\end{align}
while the derivative of $f_{\mathrm{mom}}$ follows as
\begin{align}
    \frac{\mathrm{d} f_{\mathrm{mom}}}{ \mathrm{d}\zeta }\left(\zeta\right) 
    \quad=\quad
    4 \mathrm{e}^{-2\zeta}\zeta^2 \; .
    \label{eq:fmom_derivative1}
\end{align}

\printProofs[asymptoticbiasspecialcases]

\newpage
\subsection{Complexity Analysis (Proof of \cref{thm:complexity})}\label{section:proof_complexity}

From our choice of $a$ and $b$, $\norm{\cdot}_{a,b}$ is a valid norm meaning that we have 
\[
    \mathrm{W}_{a,b}\left(\mu K^n, \pi\right)
    \quad\leq\quad
    \mathrm{W}_{a,b}\left(\mu K^n, \pi_h\right)
    +
    \mathrm{W}_{a,b}\left(\pi_h, \pi\right) \; .
\]
Thus, for any given $\epsilon > 0$, the result follows by solving for the smallest $n \geq 0$ ensuring 
\begin{equation}
    \mathrm{W}_{a,b}\left(\mu K^n, \pi_h\right)
    +
    \mathrm{W}_{a,b}\left(\pi_h, \pi\right)
    \;\leq\; \epsilon
    \quad\Rightarrow\quad
    \mathrm{W}_{a,b}\left(\mu K^n, \pi\right) \;\leq\; \epsilon \; .
    \label{eq:complexity_error_condition1}
\end{equation}
Denote the non-stationarity error as $E_{\mathrm{stat}} \triangleq \mathrm{W}_{a,b}\left(\mu K^n, \pi_h\right)$ and recall $E_{\mathrm{mom}}$ and $E_{\mathrm{pos}}$ in \cref{thm:asymptotic_bias}.
Then the condition in \cref{eq:complexity_error_condition1} can be ensured by
\begin{align}
    E_{\mathrm{stat}} \leq \frac{\epsilon}{3} \quad\land\quad
    E_{\mathrm{pos}} \leq \frac{\epsilon}{3} \quad\land\quad
    E_{\mathrm{mom}} \leq \frac{\epsilon}{3} \; .
    \label{eq:complexity_error_condition3}
\end{align}
$E_{\mathrm{stat}}$ is bounded by \cref{thm:special_case_convergence_bound}, while $E_{\mathrm{stat}}$ and $E_{\mathrm{pos}}$ is bounded by \cref{thm:asymptotic_bias_special_cases}.


Let's first solve for the number of steps that guarantees $E_{\mathrm{pos}}$ is small.

\begin{lemma}\label{thm:Estat_condition}
    Suppose \cref{asusmption:hessian_bounded} and \cref{assumption:condition_linear_approximation_contraction} hold, and $n$ denotes the number of KLMC steps.
    Then, for any $\epsilon > 0$,
    \[
        n
        \quad\geq\quad 
        \frac{2 \gamma^2}{\zeta \eta \alpha}  
        \log\left( {\mathrm{W}_{a,b}\left(\mu, \pi_h\right)} \frac{1}{\epsilon} \right) 
        \qquad\Rightarrow\qquad
        E_{\mathrm{stat}} \quad\leq\quad \epsilon \; .
    \]
\end{lemma}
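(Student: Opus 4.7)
The plan is to specialize the Wasserstein contraction already established earlier in the excerpt to the case $\nu = \pi_h$ and then invert the resulting inequality to solve for the required number of iterations $n$. Since Theorem~\ref{thm:general_contraction} (together with its immediate Corollary) guarantees that $\pi_h$ is a stationary distribution for $K$, we have $\pi_h K^n = \pi_h$, so applying Theorem~\ref{thm:general_contraction} between $\mu$ and $\pi_h$ yields
\[
    \mathrm{W}_{a,b}\left(\mu K^n, \pi_h\right)^2 \;\leq\; \bigl(1 - c(h,\gamma,\eta)\bigr)^n \, \mathrm{W}_{a,b}\left(\mu, \pi_h\right)^2 \; .
\]

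Next, I would replace the exact contraction coefficient $c(h,\gamma,\eta)$ by the more tractable lower bound $\tilde c(h,\gamma,\eta) = h \eta \alpha / \gamma$ provided by Corollary~\ref{thm:special_case_convergence_bound}, which is valid precisely under Assumption~\ref{assumption:condition_linear_approximation_contraction}. Rewriting this in the $(\zeta,\gamma,\eta)$ parametrization of Section~\ref{section:parametrization} via $h = \zeta/\gamma$ gives the equivalent form $\tilde c = \zeta \eta \alpha / \gamma^2$, which is exactly the quantity appearing in the statement. Taking square roots then yields
\[
    E_{\mathrm{stat}} \;=\; \mathrm{W}_{a,b}\left(\mu K^n, \pi_h\right) \;\leq\; \bigl(1 - \tilde c\bigr)^{n/2}\, \mathrm{W}_{a,b}\left(\mu, \pi_h\right) \; .
\]

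Finally, the elementary inequality $1 - x \leq \mathrm{e}^{-x}$ for $x \in [0,1]$ gives the cleaner bound $E_{\mathrm{stat}} \leq \mathrm{e}^{-n \tilde c / 2}\,\mathrm{W}_{a,b}(\mu,\pi_h)$, and requiring the right-hand side to be at most $\epsilon$ and solving for $n$ produces the threshold $n \geq (2/\tilde c) \log(\mathrm{W}_{a,b}(\mu,\pi_h)/\epsilon) = (2\gamma^2)/(\zeta \eta \alpha) \log(\mathrm{W}_{a,b}(\mu,\pi_h)/\epsilon)$, which is exactly the stated condition. There is no real obstacle in this argument; the only subtlety is checking that Assumption~\ref{assumption:condition_linear_approximation_contraction} (which is stronger than \cref{eq:condition_general_contraction}) justifies both the existence of $\pi_h$ and the lower bound $c \geq \tilde c$ simultaneously, so that all quantities appearing in the chain of inequalities are well defined and the contraction is genuine ($\tilde c \in (0,1)$).
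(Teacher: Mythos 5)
Your proof is correct and follows essentially the same route as the paper: specialize the contraction in Theorem~\ref{thm:general_contraction} to $\nu=\pi_h$, replace $c(h,\gamma,\eta)$ by the lower bound $\tilde{c}=\zeta\eta\alpha/\gamma^2$ from Corollary~\ref{thm:special_case_convergence_bound}, apply $1-x\leq\mathrm{e}^{-x}$, and invert for $n$. Your remark that Assumption~\ref{assumption:condition_linear_approximation_contraction} simultaneously grants existence of $\pi_h$ and the bound $c\geq\tilde{c}$ is the right subtlety to flag, and it indeed holds since that assumption implies \cref{eq:condition_general_contraction}.
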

\begin{proof}
    The result follows as a corollary of \cref{thm:special_case_convergence_bound}, which implies
    \begin{align}
        E_{\mathrm{stat}} 
        \quad\leq\quad
        {\left(1 - \frac{\zeta \eta \alpha}{\gamma^2}\right)}^{n/2} {\mathrm{W}_{a,b}\left(\mu, \pi_h\right)}
        \quad\leq \quad
        \exp\left(-\frac{\zeta \eta \alpha}{2 \gamma^2} n\right) {\mathrm{W}_{a,b}\left(\mu, \pi_h\right)} \; .
    \end{align}
    This yields the necessary number of iterations through the implications
    \begin{align}
        E_{\mathrm{stat}} \quad\leq\quad \epsilon
        \quad&\Leftarrow\quad
        &\exp\left(-\frac{\zeta \alpha}{2 \gamma^2} n\right) {\mathrm{W}_{a,b}\left(\mu, \pi_h\right)} 
        \quad&\leq\quad 
        \epsilon
        \nonumber
        \\
        \quad&\Leftrightarrow\quad
        &\exp\left(\frac{\zeta \alpha}{2 \gamma^2} k\right)
        \quad&\geq\quad {\mathrm{W}_{a,b}\left(\mu, \pi_h\right)} \frac{1}{\epsilon} 
        \nonumber
        \\
        \quad&\Leftrightarrow\quad
        &\frac{\zeta \alpha}{2 \gamma^2} n
        \quad&\geq\quad \log\left( {\mathrm{W}_{a,b}\left(\mu, \pi_h\right)} \frac{1}{\epsilon} \right)
        \nonumber
        \\
        \quad&\Leftrightarrow\quad
        &n
        \quad&\geq\quad 
        \frac{2 \gamma^2}{\zeta \alpha}  
        \log\left( {\mathrm{W}_{a,b}\left(\mu, \pi_h\right)} \frac{1}{\epsilon} \right)
        \nonumber
        \; .
    \end{align}
\end{proof}

The condition on the discretization error can be ensured by making $\zeta$ small enough.
\begin{lemma}\label{thm:Emom_Epos_condition}
    Suppose \cref{asusmption:hessian_bounded} and \cref{assumption:condition_linear_approximation_contraction} hold.
    Then, for any $\epsilon > 0$,
    \begin{align*}
        \zeta
        \leq
        \min\left\{  \;
            \sqrt{\frac{15}{4}} \frac{\epsilon^{1/2} \gamma^{1/2}}{d^{1/4} \kappa^{1/2} \eta^{1/4}}, \;
            \frac{\sqrt{3}}{4} \frac{\epsilon \gamma}{d^{1/2} \kappa \eta^{1/2}}
        \; \right\}
        \quad&\Rightarrow\quad
        E_{\mathrm{pos}} \leq \epsilon
        \;\;\land\;\;
        E_{\mathrm{mom}} \leq \epsilon
        \; .
    \end{align*}
\end{lemma}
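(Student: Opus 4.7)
The plan is to directly invoke the underdamped-regime bounds of \cref{thm:asymptotic_bias_special_cases}(i), reparametrize from $h$ to $\zeta = h\gamma$, and then solve the resulting inequalities for $\zeta$. Because both bounds must hold simultaneously, the final sufficient condition will come out as the minimum of two thresholds, one for $E_{\mathrm{pos}}$ and one for $E_{\mathrm{mom}}$. All the analytic work has already been done upstream, so this proof is essentially algebraic bookkeeping.

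First I would write the two underdamped bounds from \cref{thm:asymptotic_bias_special_cases}(i) and substitute $h = \zeta/\gamma$, which yields
\[
    E_{\mathrm{pos}} \;\leq\; \frac{4}{15} \, \frac{d^{1/2} \kappa \, \eta^{1/2}}{\gamma} \, \zeta^2
    \qquad\text{and}\qquad
    E_{\mathrm{mom}} \;\leq\; \frac{4}{\sqrt{3}} \, \frac{d^{1/2} \kappa \, \eta^{1/2}}{\gamma} \, \zeta \; .
\]
Since $E_{\mathrm{pos}}$ and $E_{\mathrm{mom}}$ depend monotonically on $\zeta$ through these bounds, it suffices to separately solve the two inequalities $E_{\mathrm{pos}} \leq \epsilon$ and $E_{\mathrm{mom}} \leq \epsilon$ and then take the intersection of the resulting half-intervals.

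Next, I would solve each inequality. For the position term, $\frac{4}{15} d^{1/2} \kappa \eta^{1/2} \zeta^2/\gamma \leq \epsilon$ rearranges to $\zeta \leq \sqrt{15/4} \, \epsilon^{1/2} \gamma^{1/2}/(d^{1/4} \kappa^{1/2} \eta^{1/4})$, and for the momentum term, $\frac{4}{\sqrt{3}} d^{1/2} \kappa \eta^{1/2} \zeta/\gamma \leq \epsilon$ rearranges to $\zeta \leq (\sqrt{3}/4) \, \epsilon \gamma/(d^{1/2} \kappa \eta^{1/2})$. Taking the minimum of these two bounds gives exactly the quantity stated in the lemma, and the simultaneous validity of the two underdamped bounds under \cref{asusmption:hessian_bounded,assumption:condition_linear_approximation_contraction} (recall \cref{thm:asymptotic_bias_special_cases} asserts that both hold at once) then yields $E_{\mathrm{pos}} \leq \epsilon$ and $E_{\mathrm{mom}} \leq \epsilon$ together.

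There is no real obstacle to overcome: the nontrivial analytic work (establishing the polynomial bounds on $f_{\mathrm{pos}}$ and $f_{\mathrm{mom}}$ via repeated differentiation) was completed in the proof of \cref{thm:asymptotic_bias_special_cases}. The only point worth flagging is that the complexity analysis in \cref{section:complexity} deliberately restricts to the underdamped regime, which justifies using the $\mathrm{O}(h^2 \gamma)$ and $\mathrm{O}(h)$ scalings from (i) rather than the overdamped scalings from (ii); this is consistent with the surrounding choices of $\gamma$ and $\eta$ that \cref{thm:complexity} ultimately employs.
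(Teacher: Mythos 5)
Your proposal is correct and follows essentially the same route as the paper's proof: both invoke the underdamped bounds from \cref{thm:asymptotic_bias_special_cases}(i), rewrite them in terms of $\zeta = h\gamma$, solve each resulting inequality for $\zeta$, and take the minimum of the two thresholds. The algebra checks out, and your remark about monotonicity in $\zeta$ is the same observation the paper makes implicitly.
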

\begin{proof}
    Under the stated conditions, we can invoke \cref{thm:asymptotic_bias_special_cases}, which implies
    \begin{align*}
        E_{\mathrm{pos}}
        \quad\leq\quad
        \frac{4}{15} \frac{d^{1/2} \kappa \eta^{1/2}}{\gamma} \zeta^2
        \qquad\text{and}\qquad
        E_{\mathrm{mom}}
        \quad\leq\quad
        \frac{4}{\sqrt{3}} 
        \frac{d^{1/2} \kappa \eta^{1/2}}{\gamma} \zeta \; .
    \end{align*}
    We can solve for the conditions $E_{\mathrm{pos}} \leq \epsilon$ and $E_{\mathrm{mom}} \leq \epsilon$:
    \begin{align*}
        E_{\mathrm{pos}} \quad\leq\quad \epsilon
        \qquad&\Leftrightarrow\qquad
        &\frac{4}{15} \frac{d^{1/2} \kappa \eta^{1/2}}{\gamma} \zeta^2
        \quad&\leq\quad
        \epsilon
        \\
        \qquad&\Leftrightarrow\qquad
        &
        \frac{15}{4}
        \frac{\epsilon \gamma}{d^{1/2} \kappa \eta^{1/2}} 
        \quad&\geq\quad
        \zeta^2
        \\
        \qquad&\Leftrightarrow\qquad
        &
        \sqrt{\frac{15}{4}}
        \frac{\epsilon^{1/2} \gamma^{1/2}}{d^{1/4} \kappa^{1/2} \eta^{1/4}} 
        \quad&\geq\quad
        \zeta
        \\
        E_{\mathrm{mom}} \quad\leq\quad \epsilon
        \qquad&\Leftrightarrow\qquad
        &\frac{4}{\sqrt{3}} 
        \frac{d^{1/2} \kappa \eta^{1/2}}{\gamma} \zeta
        \quad&\leq\quad
        \epsilon
        \\
        \quad&\Leftrightarrow\quad
        &
        \frac{\sqrt{3}}{4} 
        \frac{\epsilon\gamma}{d^{1/2} \kappa \eta^{1/2}} 
        \quad&\geq\quad
        \zeta \; .
    \end{align*}
    Taking the minimum over the two upper bounds on $\zeta$ ensures that both $E_{\mathrm{pos}} \leq \epsilon$ and $E_{\mathrm{mom}} \leq \epsilon$ are satisfied simultaneously.
\end{proof}

\printProofs[complexity]

\newpage
\paragraph{Acknowledgements}
The authors thank Saeed Saremi for helpful discussions, the reviewers who suggested various improvements, and the organizers of the FDS Conference: Recent Advances and Future Directions for Sampling hosted in 2024 at Yale University, New Haven, U.S., for enabling this project.

Part of the work was done while K. Kim was interning at Genentech, South San Francisco, U.S.
K. Kim was supported through the NSF award [IIS2145644].

\paragraph{Supplementary Material}
The code for replicating our symbolic computation results is available online\footnote{
URL: \url{https://gist.github.com/Red-Portal/d157fd23cf254b61dd36379920abd7d3}
}, along the instructions for executing the code\footnote{
URL: \url{https://gist.github.com/Red-Portal/98835096ca82203231e046bffc980ebf}
}.

\printbibliography


\end{document}